\def\ojoin{\setbox0=\hbox{$\bowtie$}%
  \rule[-.02ex]{.25em}{.4pt}\llap{\rule[\ht0]{.25em}{.4pt}}}
\def\leftouterjoin{\mathbin{\ojoin\mkern-5.8mu\bowtie}}
\newcommand{\nop}[1]{}
\newcommand{\tabincell}[2]{\begin{tabular}{@{}#1@{}}#2\end{tabular}}
\newlength{\arrayrulewidthOriginal}
\newcommand{\overbar}[1]{\mkern 1.5mu\overline{\mkern-1.5mu#1\mkern-1.5mu}\mkern 1.5mu}
\newcolumntype{R}[1]{%
>{\raggedleft\arraybackslash\hspace{0pt}}p{#1}}%
\newcolumntype{C}[1]{%
>{\centering\arraybackslash\hspace{0pt}}p{#1}}%
\tikzstyle{redarrow}=[->, >=triangle 90,dashed, red]
\tikzstyle{smredarrow}=[->, >=triangle 45,dashed, red]
\tikzstyle{redthickarrow}=[->, >=triangle 90,thick, red]
\tikzstyle{bluearrow}=[->, >=triangle 90, blue,dashed]
\begin{document}

\title{Processing SPARQL Queries Over Distributed RDF Graphs
}


\author{Peng Peng   \and Lei Zou      \and  M. Tamer {\"O}zsu \and
        Lei Chen \and
        Dongyan Zhao 
}


\small{
\institute{Peng Peng, Lei Zou, Dongyan Zhao \at
              Institute of Computer Science and Technology, \\Peking University,
              Beijing, China \\
              Tel.: +86-10-82529643\\
              \email{$\{$pku09pp,zoulei,zhaody$\}$@pku.edu.cn}           
              \and
               M. Tamer {\"O}zsu \at
         David R. Cheriton School of Computer Science \\ University of Waterloo,
         Waterloo, Canada \\
          Tel.: +1-519-888-4043 \\
              \email{Tamer.Ozsu@uwaterloo.ca}
               \and
            Lei Chen \at
         Department of Computer Science and Engineering, \\
         Hong Kong University of Science and Technology, \\
         Hong Kong, China \\
          Tel.: +852-23586980 \\
              \email{leichen@cse.ust.hk}
}
}


\maketitle

\nop{\begin{keywords} \\RDF, Graph Database, Distributed Database Systems, Partial Evaluation and Assembly
\end{keywords}}

\normalsize
\begin{abstract}
We propose techniques for processing SPARQL queries over a large RDF graph in a distributed environment. We adopt a ``partial evaluation and assembly'' framework. Answering a SPARQL query $Q$ is equivalent to finding subgraph matches of the query graph $Q$ over RDF graph $G$. Based on properties of subgraph matching over a distributed graph, we introduce \emph{local partial match} as partial answers in each fragment of RDF graph $G$. For assembly, we propose two methods: centralized and distributed assembly.  We analyze our algorithms from both  theoretically and experimentally. Extensive experiments over both real and benchmark RDF repositories of billions of triples confirm that our method is superior to the state-of-the-art methods in both the system's performance and scalability.
\end{abstract}


\section{Introduction}\label{sec:introduction}
The semantic web data model, called the ``Resource Description Framework'', or RDF, represents data as a collection of triples of the form $\langle$subject, property, object$\rangle$. A triple can be naturally seen as a pair of entities connected by a named relationship or an entity associated with a named attribute value. Hence, an RDF dataset can be represented as a graph where subjects and objects are vertices, and triples are edges with property names as edge labels. With the increasing amount of RDF data published on the Web, system performance and scalability issues have become increasingly pressing. For example, LOD (Linking Open Data) project builds a RDF data cloud by linking more than 3000 datasets, which currently have more than 84 billion triples \footnote{The statistic is reported in {http://stats.lod2.eu/} }. The recent work \cite{SBP14} shows that the number of data sources has doubled within three years (2011-2014). Obviously, the computational and storage requirements coupled with rapidly growing datasets have stressed the limits of single machine processing.

There have been a number of recent efforts in distributed evaluation of SPARQL queries over large RDF datasets \cite{Hartig:2014aa}. We broadly classify these solutions into three categories: \emph{cloud-based}, \emph{partition-based}, and \emph{federated} approaches. These are discussed in detail in Section \ref{sec:relatedwork}; the highlights are as follows.

Cloud-based approaches  (e.g., \cite{DBLP:JenaHBase,DBLP:SHARD,TKDE11:HadoopRDF,DBLP:PredicateJoin,ICDE13:EAGRE,WWW12:H2RDF,SIGMOD14:H2RDF})  maintain a large RDF graph using existing cloud computing platforms, such as Hadoop (http://hadoop.apache.org) or Cassandra (http://cassandra.apache.org), and employ triple pattern-based join processing most commonly using MapReduce.

\nop{
First, the intermediate results of triple patterns are found, then, a left-deep join of these intermediate results is executed to find SPARQL answers. The process is similar to relational database query processing. However, an interesting difference is that answering a query in relational systems usually involves  fewer joins between different datasets \cite{DBLP:conf/rweb/HoseSTW11} than is the case for RDF systems where some of the joins may even be self-joins. This makes it difficult to directly employ relational techniques  in RDF processing systems. These approaches benefit from the high scalability and fault-tolerance offered by cloud platforms, but may suffer a performance penalty due to the difficulties of adapting MapReduce to graph computation. For a comprehensive survey of RDF data management in cloud environments, we refer the reader to  \cite{VLDBJ14:RDFCloudSurvey}.}

Partition-based approaches \cite{VLDB11:GraphPartition,DBLP:WARP,DBLP:Partout,DBLP:journals/pvldb/LeeL13,DBLP:conf/IEEEcloud/LeeLTZZ13,SIGMOD2014:TriAD} divide the RDF graph $G$ into a set of subgraphs (fragments) $\{F_{i}\}$, and decompose the SPARQL query $Q$ into subqueries $\{Q_{i}\}$. These subqueries are then executed over the partitioned data using techniques similar to relational distributed databases.


\nop{
During RDF data partitioning, most existing approaches \cite{VLDB11:GraphPartition,ICDE13:EAGRE,DBLP:conf/IEEEcloud/LeeLTZZ13,DBLP:journals/pvldb/LeeL13,DBLP:WARP} partition the RDF graph into multiple vertex-disjoint fragments. Each fragment $F_{i}$ is allocated to a site. For query evaluation, similar to distributed relational query processing \cite{OV11}, the query is transformed into multiple subqueries on fragments. Then, each subquery $Q_{j}$ is executed at relevant fragments, and the subquery results are then aggregated. However, there is an important difference between distributed relational query processing and partition-based distributed SPARQL processing. In distributed relational processing, query decomposition is used to achieve inter-query parallelism \emph{and} to prune execution on some of the fragments since they would not contribute to the results. Ppruning is not possible in partition-based distributed SPARQL processing, as we discuss in Section \ref{sec:relatedwork}. Consequently, the size of intermediate results that are generated at each site tends to be high and this negatively impacts performance. Furthermore, each of the proposed approaches in this class depends on its own data partitioning algorithm, while partition-agnostic approaches are generally more desirable.}

Federated SPARQL processing systems  \cite{DBLP:DARQ,WWW2010:QTree,DBLP:SPLENDID,DBLP:HiBISCuS,DBLP:TopFed} evaluate queries over multiple SPARQL endpoints. These systems typically target LOD and follow a query processing over data integration approach.These systems operate in a very different environment we are targeting, since we focus on exploiting distributed execution for speed-up and scalability.


In this paper we propose an alternative strategy that is based on only partitioning the data graph but not decomposing the query. Our approach is based on the ``partial evaluation and assembly'' framework \cite{DBLP:journals/csur/Jones96}. An RDF graph is partitioned using some graph partitioning algorithm such as METIS \cite{DBLP:metis} into vertex-disjoint fragments  (edges that cross fragments are replicated in source and target fragments).  Each site receives the full SPARQL query $Q$ and executes it on the local RDF graph fragment providing data parallel computation. To the best of our knowledge, this is the first work that adopts the partial evaluation and assembly strategy to evaluate SPARQL queries over a distributed RDF data store. The most important advantage of this approach is that the number of involved vertices and edges in the intermediate results are minimized, which is proven theoretically (see Proposition \ref{proposition:minimalISNum} in Section \ref{sec:partialcomputing}).

The basic idea of the partial evaluation strategy is the following: given a function $f(s,d)$, where $s$ is the known  input and $d$ is the yet unavailable input, the part of $f$'s computation that depends only on $s$ generates a partial answer. In our setting, each site $S_i$ treats fragment $F_i$ as the known input in the partial evaluation stage; the unavailable input is the rest of the graph ($\overbar{G} = G \setminus F_{i}$). The partial evaluation technique has been used in compiler optimization \cite{DBLP:journals/csur/Jones96}, and querying XML trees \cite{SIGMOD07:XPath}. Within the context of graph processing, the technique has been used to evaluate reachability queries  \cite{fan:reachability}, and graph simulation \cite{DBLP:conf/www/MaCHW12,DBLP:journals/pvldb/FanWWD14} over graphs. However, SPARQL query semantics is different than these --- SPARQL is based on graph homomorphism \cite{DBLP:journals/tods/PerezAG09} --- and pose additional challenges.  Graph simulation defines a \emph{relation} between vertices in the query graph $Q$ (i.e. $V(Q)$) and that in the data graph $G$ (i.e., $V(G)$), but,  graph homomorphism is a \emph{function} (not a relation) between $V(Q)$ and $V(G)$ \cite{DBLP:journals/pvldb/FanWWD14}. Thus, the solutions proposed for graph simulation \cite{DBLP:journals/pvldb/FanWWD14} and graph pattern matching \cite{DBLP:conf/www/MaCHW12} cannot be applied to the problem studied in this paper.

\begin{figure}
\begin{center}
    \includegraphics[scale=0.17]{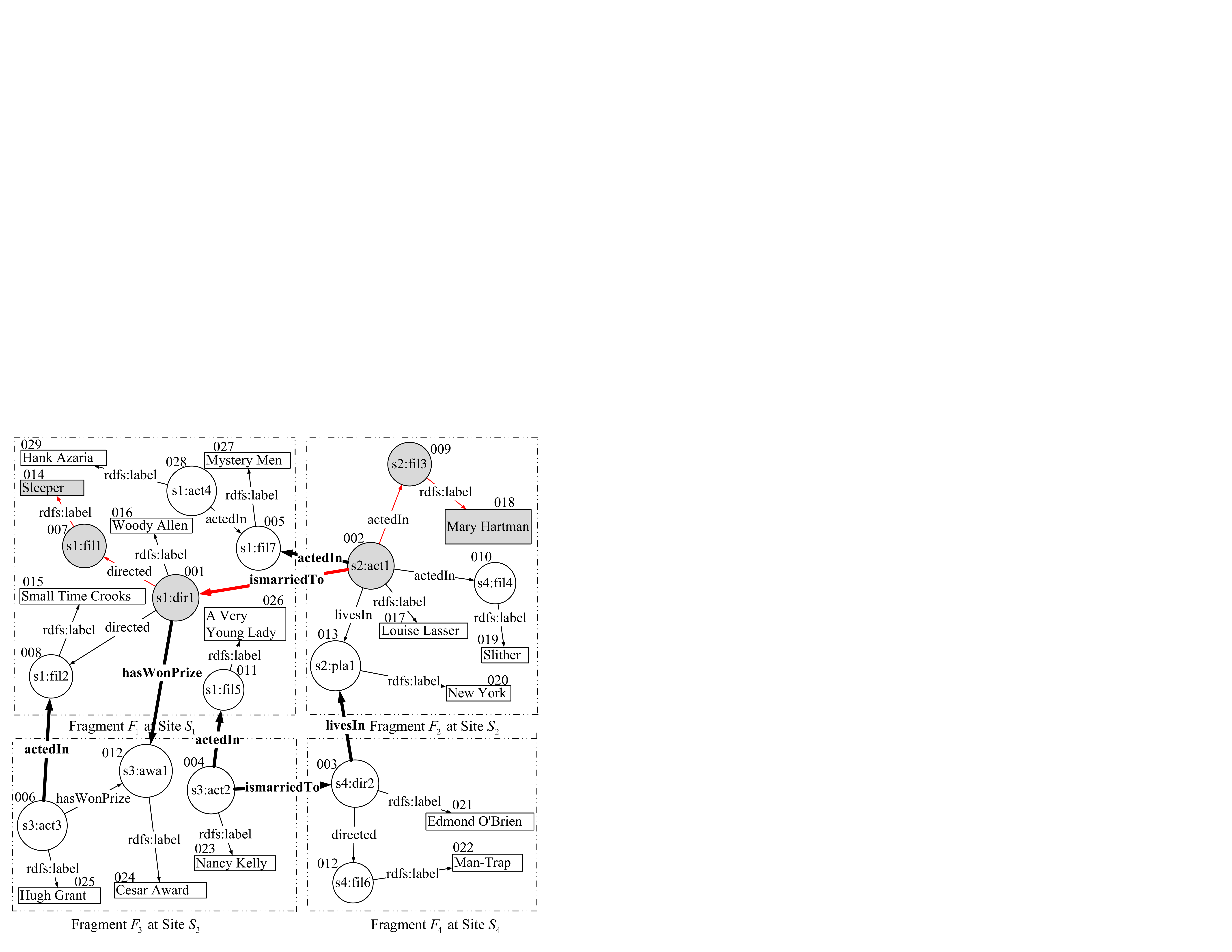}
   \caption{A Distributed RDF Graph}
   \label{fig:lodfgraph}
\end{center}
\end{figure}

Because of interconnections between graph fragments, application of graph homomorphism over graphs requires special care. For example, consider a distributed RDF graph in Figure \ref{fig:lodfgraph}. Each entity in RDF is represented by a URI (uniform resource identifier), the prefix of which always denotes the location of the dataset. For example, ``s1:dir1'' has the prefix ``s1'', meaning that the entity is located at site $s1$. Here, the prefix is just for simplifying presentation, not a general assumption made by the approach. There are \emph{crossing links} between two datasets identified in bold font. For example, ``$\langle$s2:act1 isMarriedTo s1:dir1$\rangle$'' is a crossing link (links between different datasets), which means that act1 (at site $s2$) is married to dir1 (at site $s1$).

Now consider the following SPARQL query $Q$ that consists of five triple patterns (e.g., ?a isMarriedTo ?d) over this distributed RDF graph:

\small{
\begin{lstlisting}[language=SQL]
SELECT  ?a ?d  WHERE
{?a isMarriedTo ?d.  ?a actedIn ?f1.
?f1 rdfs:label ?n1.  ?d directed ?f2.
?f2 rdfs:label ?n2.}
\end{lstlisting}
}

\normalsize

Some SPARQL query matches are contained within a fragment, which we call \emph{inner matches}. These inner matches can be found locally by existing centralized techniques at each site. However, if we consider the four datasets independently and ignore the crossing links, some correct answers will be missed, such as (?a=s2:act1, ?d=s1:dir1). The key issue in the distributed environment is how to find subgraph matches that cross multiple fragments---these are called \emph{crossing matches}. For  query $Q$ in Figure \ref{fig:querygraph}, the subgraph induced by vertices 014, 007, 001, 002, 009 and 018 is a crossing match between fragments $F_1$ and $F_2$ in Figure \ref{fig:lodfgraph} (shown in the shaded vertices and red edges). This is the focus of this paper.

\begin{figure}
\begin{center}
    \includegraphics[scale=0.32]{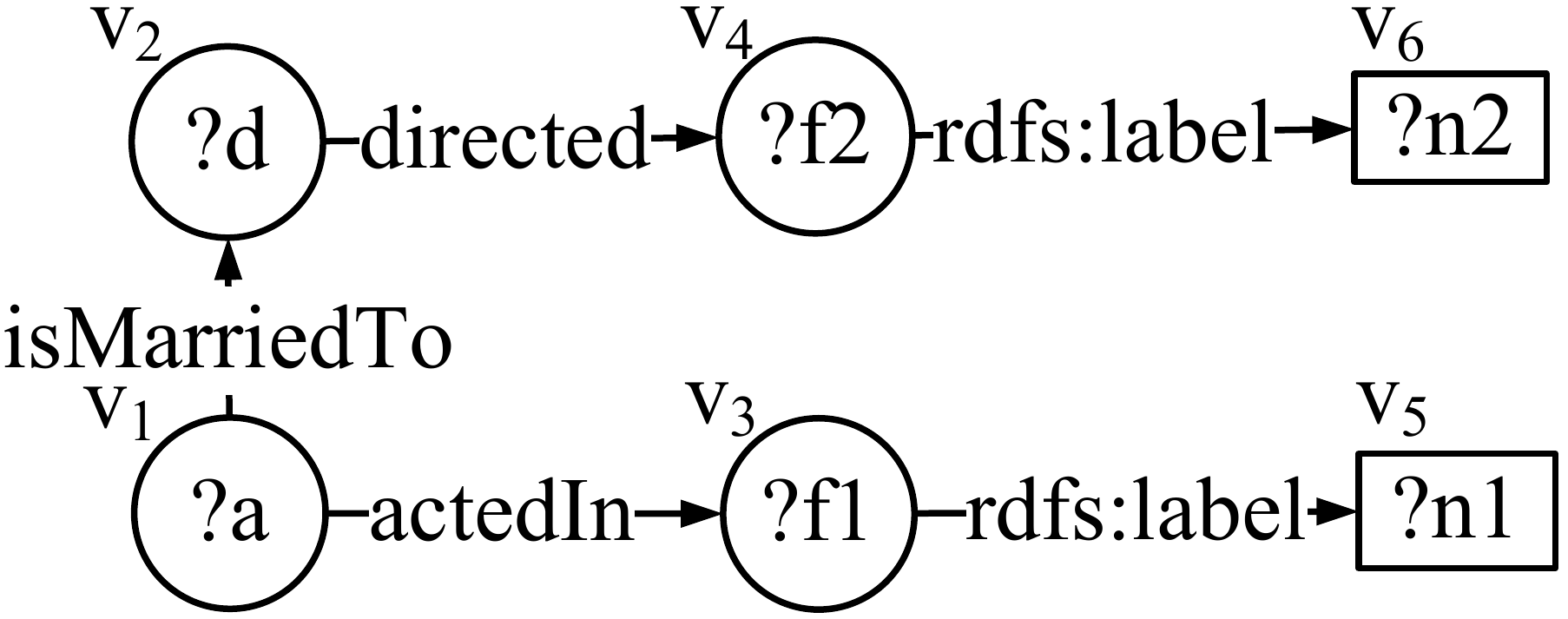}
   \caption{SPARQL Query Graph $Q$}
   \label{fig:querygraph}
\end{center}
\end{figure}

There are two important issues to be addressed in this framework. The first is to compute the partial evaluation results at each site given a query graph $Q$ (i.e., the  \emph{local partial match}), which, intuitively, is the overlapping part between a crossing match and a fragment. This is discussed in Section \ref{sec:partialcomputing}. The second one is the assembly of these local partial matches to compute crossing matches. We consider two different strategies: \emph{centralized assembly}, where all local partial matches are sent to a single site (Section \ref{sec:centralized}); and \emph{distributed assembly}, where the local partial matches are assembled at a number of sites in parallel (Section \ref{sec:distributed}).

The main benefits of our solution are twofold:

\begin{itemize}
\item Our solution does not depend on any specific partitioning strategy. In existing partition-based methods, the query processing always depends on a certain RDF graph partitioning strategy, which may be difficult to enforce in certain circumstances. The partition-agnostic framework enables us to adopt any partition-based optimization, although this is orthogonal to our solution in this paper.

\item Our method guarantees to involve fewer vertices or edges in intermediate results than other partition-based solutions, which we prove in Section \ref{sec:partialcomputing} (Proposition \ref{proposition:minimalISNum}). This property often results in smaller number of intermediate results and lowers the cost of our approach, which we demonstrate experimentally in Section \ref{sec:experiment}.

\nop{Intuitively, this conclusion comes from two reasons: (1) Query-decomposition methods cannot distinguish the subquery results contributing to inner matches and that contributing to crossing matches. All subgraph results are regarded as intermediate results that are sent to the control site for assembly. However, in our solution, only local partial matches are needed for assembly, which lead communication cost. Inner matches are found locally at each site in parallel. (2) We prove that any local partial match should be found at any graph partition-method; otherwise, it may lead to true result dismissal (see Theorem \ref{theorem:optimal}). In other words, it is impossible that there exists one edge/vertex in some local partial match but does not exist in any subquery result of other graph partition-based algorithms.}

\end{itemize}

\nop{
This is the first work that proposes the partial evaluation and assembly framework for executing distributed SPARQL queries. Within this context we make the following contributions.

\begin{enumerate}
\item We formally define the partial evaluation result for SPARQL queries and propose an efficient algorithm to find them. We prove theoretically the correctness and the optimality of our algorithm.
\item We propose centralized and distributed methods to assemble the local partial matches and experimentally study their performance advantages and bottlenecks in different settings.
\end{enumerate}
}
The rest of the paper is organized as follows. We discuss related work in the areas of distributed SPARQL query processing and partial query evaluation in Section \ref{sec:relatedwork}. Section \ref{sec:background} provides the fundamental definitions that form the background for this work and introduces the overall execution framework. Computation of local matches at each site is covered in Section \ref{sec:partialcomputing} and the centralized and distributed assembly of partial results to compute the final query result is discussed in Section \ref{sec:assembling}. We also study how to evaluate general SPARQLs in Section \ref{sec:Extensions}. We evaluate our approach, both in terms of its internal characteristics and in terms of its relative performance against other approaches in Section \ref{sec:experiment}. Section \ref{sec:conclusion} concludes the paper and outlines some future research directions.


\section{Related Work}\label{sec:relatedwork}


\textbf{Distributed SPARQL Query Processing.} As noted above, there are three general approaches to distributed SPARQL query processing:  \emph{cloud-based} approaches, \emph{partition-based} approaches and \emph{federated SPARQL query} systems.

(1) \underline{Cloud-based Approaches}

There have been a number of works (e.g., \cite{DBLP:JenaHBase,DBLP:SHARD,TKDE11:HadoopRDF,DBLP:PredicateJoin,ICDE13:EAGRE,VLDB13:Trinity,WWW12:H2RDF,SIGMOD14:H2RDF}) focused on managing large RDF datasets using existing cloud platforms; a very good survey of these is \cite{VLDBJ14:RDFCloudSurvey}. Many of these approaches follow the MapReduce paradigm; in particular they use HDFS \cite{DBLP:SHARD,TKDE11:HadoopRDF,DBLP:PredicateJoin,ICDE13:EAGRE}, and  store RDF triples in flat files in HDFS. When a SPARQL query is issued,  the HDFS files are scanned to find the matches of each triple pattern, which are then joined  using one of the MapReduce join implementations (see \cite{Li:2013uq} for more detailed description of these). The most important difference among these approaches is how the RDF triples are stored in HDFS files; this determines how the triples are accessed and the number of MapReduce jobs. In particular, SHARD \cite{DBLP:SHARD} directly stores the data in a single file and each line of the file represents all triples associated with a distinct subject. HadoopRDF \cite{TKDE11:HadoopRDF} and PredicateJoin \cite{DBLP:PredicateJoin} further partition RDF triples based on the predicate and store each partition within one HDFS file. EAGRE \cite{ICDE13:EAGRE} first groups all subjects with similar properties into an entity class, and then constructs a compressed RDF graph containing only entity classes and the connections between them. It partitions the compressed RDF graph  using the METIS algorithm \cite{DBLP:metis}. Entities are placed into HDFS according to the partition set that they belong to.

Besides the HDFS-based approaches, there are also some works  that use other NoSQL distributed data stores to manage RDF datasets. JenaHBase \cite{DBLP:JenaHBase} and H$_2$RDF \cite{WWW12:H2RDF,SIGMOD14:H2RDF} use some permutations of subject, predicate, object to build indices that are then stored in HBase (http://hbase.apache.org). Trinity.RDF \cite{VLDB13:Trinity} uses the distributed memory-cloud graph system Trinity \cite{SIGMOD13:Trinity} to index and store the RDF graph. It uses hashing on the vertex values to obtain a disjoint partitioning of the RDF graph that is placed on nodes in a cluster.

These approaches benefit from the high scalability and fault-tolerance offered by cloud platforms, but may suffer lower performance due to the difficulties of adapting MapReduce to graph computation.

(2) \underline{Partition-based Approaches}

The partition-based approaches \cite{VLDB11:GraphPartition,DBLP:WARP,DBLP:Partout,DBLP:journals/pvldb/LeeL13,DBLP:conf/IEEEcloud/LeeLTZZ13,SIGMOD2014:TriAD} partition an RDF graph $G$ into several fragments and place each at a different site in a parallel/distributed system. Each site hosts a centralized RDF store of some kind. At run time, a SPARQL query $Q$ is decomposed into several subqueries such that each subquery can be answered locally at one site, and the results are then agregated. Each of these papers proposes its own data partitioning strategy, and different partitioning strategies result in different query processing methods.

In GraphPartition \cite{VLDB11:GraphPartition}, an RDF graph $G$ is partitioned into $n$ fragments, and each fragment is extended by including $N$-hop neighbors of boundary vertices. According to the partitioning strategy, the diameter of the graph corresponding to each decomposed subquery should not be larger than $N$ to enable subquery processing at each local site. WARP \cite{DBLP:WARP} uses some frequent structures in workload to further extend the results of GraphPartition. Partout \cite{DBLP:Partout} extends the concepts of minterm predicates in relational database systems, and uses the results of minterm predicates as the fragmentation units. Lee et. al. \cite{DBLP:journals/pvldb/LeeL13,DBLP:conf/IEEEcloud/LeeLTZZ13} define the partition unit as a vertex and its neighbors, which they call a ``vertex block''. The vertex blocks are distributed based on a set of heuristic rules. A query is partitioned into blocks that can be executed among all sites in parallel and without any communication. TriAD uses METIS \cite{DBLP:metis} to divide the RDF graph into many partitions and the number of result partitions is much more than the number of sites. Each result partition is considered as a unit and distributed among different sites. At each site, TriAD maintains six large, in-memory vectors of triples, which correspond to all SPO permutations of triples. Meanwhile, TriAD constructs a summary graph to maintain the partitioning information.

\nop{
TriAD \cite{SIGMOD2014:TriAD} distributes RDF-3x \cite{DBLP:rdf3x}. TriAD uses METIS \cite{DBLP:metis} to divide the RDF graph into many partitions and the number of result partitions is much more than the number of sites. Each result partition is considered as a unit and distributed among different sites. Meanwhile, TriAD constructs a summary graph to maintain the partitioning information. During the query processing phase, TriAD determines a query plan based on this summary graph.
}

All of the above methods require partitioning and distributing the RDF data according to specific requirements of their approaches. However, in some applications, the RDF repository partitioning strategy is not controlled by the distributed RDF system itself. There may be some administrative requirements that influence the data partitioning. For example, in some applications, the RDF knowledge bases are partitioned according to topics (i.e., different domains), or are partitioned according to different data contributors.  Therefore, partition-tolerant SPARQL processing may be desirable. This is the motivation of our partial-evaluation and assembly approach.

As well, these approaches evaluate the SPARQL query based on query decomposition, which generate more intermediate results. We provide a detailed experimental comparison in Section \ref{sec:experiment}.

(3) \underline{Federated SPARQL Query Systems}

Federated queries run SPARQL queries over multiple SPARQL endpoints. A typical example is linked data, where different RDF repositories are interconnected, providing a \emph{virtually integrated distributed database}. Federated SPARQL query processing is a very different environment than what we target in this paper, but we discuss these systems for completeness. 

A common technique is to precompute metadata for each individual SPARQL endpoints. Based on the metadata, the original SPARQL query is decomposed into several subqueries, where each subquery is sent to its relevant SPARQL endpoints. The results of subqueries are then joined together to answer the original SPARQL query. In DARQ \cite{DBLP:DARQ}, the metadata is called \emph{service description} that describes which triple patterns (i.e., predicate) can be answered. In \cite{WWW2010:QTree}, the metadata is called Q-Tree, which is a variant of RTree. Each leaf node in Q-Tree stores a set of source identifers, including one for each source of a triple approximated by the node. SPLENDID \cite{DBLP:SPLENDID} uses Vocabulary of Interlinked Datasets (VOID) as the metadata. HiBISCuS \cite{DBLP:HiBISCuS} relies on capabilities to compute the metadata. For each source, HiBISCuS defines a set of capabilities which map the properties to their subject and object authorities. TopFed \cite{DBLP:TopFed} is a biological federated SPARQL query engine. Its metadata comprises of an N3 specification file and a Tissue Source Site to Tumour (TSS-to-Tumour) hash table, which is devised based on the data distribution.

In contrast to these, FedX \cite{DBLP:FedX} does not require preprocessing, but sends  ``SPARQL ASK'' to collect the metadata on the fly. Based on the results of ``SPARQL ASK'' queries, it decomposes the query into subqueries and assign subqueries with relevant SPARQL endpoints.

Global query optimization in this context has also been studied. Most federated query engines employ existing optimizers, such as dynamic programming \cite{CiteSeerX:SystemR}, for optimizing the join order of local queries. Furthermore, DARQ \cite{DBLP:DARQ} and FedX \cite{DBLP:FedX} discuss the use of semijoins  to compute a join between intermediate results at the control site and SPARQL endpoints.

\textbf{Partial Evaluation}. Partial evaluation has been used in many applications ranging from compiler optimization to distributed evaluation of functional programming languages \cite{DBLP:journals/csur/Jones96}. Recently, partial evaluation has also been used for evaluating queries on distributed XML trees and graphs \cite{VLDB06:XQuery,SIGMOD07:XPath,TODS12:XPath,fan:reachability}. In \cite{VLDB06:XQuery,SIGMOD07:XPath,TODS12:XPath}, partial evaluation is used to evaluate some XPath queries on distributed XML. These works serialize XPath queries to a vector of subqueries, and find the partial results of all subqueries at each site by using a top-down \cite{SIGMOD07:XPath} or bottom-up \cite{VLDB06:XQuery} traversal over the XML tree. Finally, all partial results are assembled together at the server site to form final results. Note that, since XML is a tree-based data structure, these works serialize XPath queries and traverse XML trees in a topological order. However, the RDF data and SPARQL queries are graphs rather than trees. Serializing the SPARQL queries and traversing the RDF graph in a topological order is not intuitive.

There are some prior works that consider partial evaluation on graphs. For example, Fan et al \cite{fan:reachability} study reachability query processing over distributed graphs using the partial evaluation strategy. Partial evaluation-based graph simulation is well studied by Fan et al. \cite{DBLP:journals/pvldb/FanWWD14} and Shuai et al. \cite{DBLP:conf/www/MaCHW12}. However, SPARQL query semantics is based on graph homomorphism \cite{DBLP:journals/tods/PerezAG09}, not graph simulation. The two concepts are formally different  (i.e., they produce different results) and the two problems have very different complexities. Homomorphism defines a ``function'' while simulation defines a ``relation'' -- relation allows  ``one-to-many'' mappings while function does not. Consequently, the results are different. The computational hardness of the two problems are also different. Graph homomorphism is a classical NP-complete problem \cite{DBLP:journals/rsa/DyerG00}, while graph simulation has a polynomial-time algorithm ($O((|V(G)| + |V(Q)|)(|E(G)| + |E(Q)|))$) \cite{DBLP:journals/pvldb/FanLMTWW10}, where $|V(G)|$ ($|V(Q)|$) and $|E(G)|$ ($|E(Q)|$) denote the number of vertices and edges in RDF data graph $G$ (and query graph $Q$).  Thus, the solutions based on graph simulation cannot be applied to the problem studied in this paper. To the best of our knowledge, there is no prior work in applying partial evaluation to SPARQL query processing.


\section{Background and Framework}
\label{sec:background}

An RDF dataset can be represented as a graph where subjects and objects are vertices and triples are labeled edges.

\begin{definition}\label{def:graph} \textbf{(RDF Graph)}
An RDF graph is denoted as $G=\{V,$ $E,\Sigma \}$, where $V$ is a set of vertices that correspond to all subjects and objects in RDF data; $E \subseteq V \times V$ is a multiset of directed edges that correspond to all triples in RDF data; $\Sigma$ is a set of edge labels. For each edge $e \in E$, its edge label is its corresponding property.
\end{definition}

Similarly, a SPARQL query can also be represented as a query graph $Q$. In this paper, we first focus on basic graph pattern (BGP) queries as they are foundational to SPARQL, and focus on techniques for handling these. We extend this discussion in Section \ref{sec:Extensions} to general SPARQL queries involving FILTER, UNION, and OPTIONAL.

\begin{definition}\label{def:query}\textbf{(SPARQL BGP Query)}
A \emph{SPARQL BGP query} is denoted as $Q=\{V^{Q},$ $E^{Q}, \Sigma^{Q}\}$, where $V^{Q} \subseteq V\cup V_{Var}$ is a set of vertices, where $V$ denotes all vertices in RDF graph $G$ and $V_{Var}$ is a set of variables; $E^{Q} \subseteq V^{Q} \times V^{Q}$ is a multiset of edges in $Q$; Each edge $e$ in $E^Q$ either has an edge label in $\Sigma$ (i.e., property) or the edge label is a variable.
\end{definition}

We assume that $Q$ is a connected graph; otherwise, all connected components of $Q$ are considered separately.  Answering a SPARQL query is equivalent to finding all subgraph \emph{matches} (Definition \ref{def:sparqlmatch}) of $Q$ over RDF graph $G$.

\begin{definition}\label{def:sparqlmatch}\textbf{(SPARQL Match)}
Consider an RDF graph $G$ and a connected query graph $Q$ that has $n$ vertices $\{v_1,...,v_n\}$. A subgraph $M$ with $m$ vertices $\{u_1, ...,u_m\}$ (in $G$) is said to be a \emph{match} of $Q$ if and only if there exists a \emph{function} $f$ from $\{v_1,...,v_n\}$ to $\{u_1,...,u_m\}$ ($n \ge m$), where the following conditions hold:
\begin{enumerate}
\item if $v_i$ is not a variable, $f(v_i)$ and $v_i$ have the same URI or literal value ($1\leq i \leq n$);
\item if $v_i$ is a variable, there is no constraint over $f(v_i)$ except that $f(v_i)\in \{u_1,...,u_m\}$ ;
\item if there exists an edge $\overrightarrow{v_iv_j}$ in $Q$, there also exists an edge $\overrightarrow{f{(v_i)}f{(v_j)}}$ in $G$. Let $L(\overrightarrow{v_iv_j})$ denote a multi-set of labels between $v_i$ and $v_j$ in $Q$; and $L(\overrightarrow{f{(v_i)}f{(v_j)}})$ denote a multi-set of labels between $f(v_i)$ and $f(v_j)$ in $G$. There must exist an \emph{injective function} from edge labels in $L(\overrightarrow{v_iv_j})$ to edge labels in $L(\overrightarrow{f{(v_i)}f{(v_j)}})$. Note that a variable edge label in $L(\overrightarrow{v_iv_j})$ can match any edge label in $L(\overrightarrow{f{(v_i)}f{(v_j)}})$.
\end{enumerate}
\end{definition}

Vector $[ f{(v_1)}, ..., f{(v_n)}]$ is a serialization of a SPARQL match. Note that we allow that $f(v_i)=f(v_j)$ when $1\le i\neq j \le n$. In other words, a match of SPARQL $Q$ defines  a \emph{graph homomorphism}.

In the context of this paper, an RDF graph $G$ is vertex-disjoint partitioned  into a number of \emph{fragments}, each of which resides at one site. The vertex-disjoint partitioning has been used in most distributed RDF systems, such as GraphPartition \cite{VLDB11:GraphPartition}, EAGRE \cite{ICDE13:EAGRE} and TripleGroup \cite{DBLP:journals/pvldb/LeeL13}. Different distributed RDF systems utilize different vertex-disjoint partitioning algorithms, and the partitioning algorithm is orthogonal to our approach. Any vertex-disjoint partitioning method can be used in our method, such as METIS \cite{DBLP:metis} and MLP \cite{DBLP:conf/icde/WangXSW14}. \nop{How to find an optimal partitioning is beyond the scope of this paper.}

The vertex-disjoint partitioning methods guarantee that there are no overlapping vertices between fragments. However, to guarantee data integrity and consistency, we store some replicas of crossing edges. Since the RDF graph $G$ is partitioned by our system,  metadata is readily available regarding crossing edges (both outgoing and incoming edges) and the endpoints of crossing edges. Formally, we define the \emph{distributed RDF graph} as follows.

\nop{
distributed over different sites. We call $G$ an \emph{distributed RDF graph}, and each site maintain a subgraph of $G$, called \emph{fragment}. For simplicity, we assume that each entity is only resident in a single site, i.e., we ignore replication (see condition (1) of Definition \ref{def:internaldistributedgraph}), but handling the more general case is not complicated.}

\begin{definition} \label{def:internaldistributedgraph} \textbf{(Distributed RDF Graph)}  A distributed RDF graph $G=\{V,E, \Sigma \}$ consists of a set of fragments $\mathcal{F} = \{F_1,F_2,...,F_k\}$ where each $F_i$ is specified by $(V_i \cup V_i^e, E_i \cup E_i^c, \Sigma_i)$ ($i=1,...,k$) such that
\begin{enumerate}
\item $\{V_1,...,V_k\}$ is a partitioning of $V$, i.e., $V_i  \cap V_j  = \emptyset ,1 \le i,j \le k,i \ne j $ and $\bigcup\nolimits_{i = 1,...,k} {V_i  = V}$ ;
\item $E_i \subseteq V_i \times V_i$, $i=1,...,k$;

\nop{\item Given an edge $\overrightarrow{uu^\prime}$ in RDF graph $G$, if $u \in V_i \wedge u^\prime \in V_i$, $\overrightarrow{uu^\prime}$ is called an \emph{inner edge} in fragment $F_i$. If $u \in V_i \wedge u^\prime \in V_j \wedge i \neq j$ , $\overrightarrow{uu^\prime}$ is called a \emph{crossing edge} between fragment $F_i$ and $F_j$, and $u$ and $u^\prime$ are called \emph{boundary vertices} in $F_i$ and $F_j$, respectively;}

\item $E_i^c$ is a set of crossing edges between $F_i$ and other fragments, i.e.,
\begin{multline*}
 E_i ^c = (\bigcup\nolimits_{1 \le j \le k \wedge j \ne i} {\{ \overrightarrow {uu^\prime} |u \in F_i  \wedge u^\prime \in F_j  \wedge \overrightarrow{uu^\prime}  \in E \} }) \\ \bigcup
 (\bigcup\nolimits_{1 \le j \le k \wedge j \ne i} {\{ \overrightarrow {u^\prime u} |u \in F_i  \wedge u^\prime \in F_j  \wedge \overrightarrow{u^\prime u} \in E   \}} )
 \end{multline*}

 \item A vertex $u^\prime \in V_i^e$ if and only if vertex $u^\prime$ resides in other fragment $F_j$ and $u^{\prime}$ is an endpoint of a crossing edge between fragment $F_i$ and $F_j$ ($F_i \neq F_j$),  i.e.,
\begin{multline*}
 V_i^e = (\bigcup\nolimits_{1 \le j \le k \wedge j \ne i} \{ {u^\prime} |\overrightarrow {uu^\prime}$  $\in E_i ^c \wedge u \in F_i \} ) \bigcup \\
 (\bigcup\nolimits_{1 \le j \le k \wedge j \ne i} {\{ {u^\prime} |\overrightarrow {u^\prime u}  \in E_i ^c \wedge u \in F_i \} })
 \end{multline*}

\item Vertices in $V_i^e$ are called \emph{extended} vertices of $F_i$ and all vertices in $V_i$ are called \emph{internal} vertices of $F_i$;
\item $\Sigma_i$ is a set of edge labels in $F_i$.
\end{enumerate}
\end{definition}

\begin{example}\label{example:dataandquery}
Figure \ref{fig:lodfgraph} shows a distirbuted RDF graph $G$ consisting of four fragments $F_1$, $F_2$, $F_3$ and $F_4$. The numbers besides the vertices are vertex IDs that are introduced for ease of presentation. In Figure \ref{fig:lodfgraph}, $\overrightarrow{002,001}$ is a crossing edge between $F_1$ and $F_2$. As well, edges $\overrightarrow{004,011}$, $\overrightarrow{001,012}$ and $\overrightarrow{006,008}$ are crossing edges between $F_1$ and $F_3$. Hence, $V_1^e=\{002,006,012,004\}$ and $E_1^c=\{\overrightarrow{002,001},$  $\overrightarrow{004,011},\overrightarrow{001,012},$ $ \overrightarrow{006,008}\}$.
\end{example}

\begin{definition}{(\textbf{Problem Statement})}
Let $G$ be a distributed RDF graph that consists of a set of fragments $\mathcal{F} = \{F_{1}, \ldots, F_{k}\}$ and let $\mathcal{S} = \{S_{1}, \ldots, S_{k}\}$ be a set of computing nodes such that $F_{i}$ is located at $S_{i}$. Given a SPARQL query graph $Q$, our goal is to find all \emph{SPARQL matches} of $Q$ in $G$.
\nop{Given a distributed RDF $G$ that consists of $k$ fragments $\{F_1,F_2,...,F_k\}$, each located at a different computating node, and a SPARQL query graph $Q$, our goal is to find all \emph{SPARQL matches} of $Q$ in $G$.}
\end{definition}

Note that for simplicity of exposition, we are assuming that each site hosts one fragment. Inner matches can be computed locally using a centralized RDF triple store, such as RDF-3x \cite{DBLP:rdf3x}, SW-store \cite{DBLP:swstore} or gStore \cite{Zou:2013fk}. In our prototype development and experiments, we modify gStore, a graph-based SPARQL query engine \cite{Zou:2013fk}, to perform partial evaluation. The main issue of answering SPARQL queries over the distributed RDF graph is finding crossing matches efficiently. That is a major focus of this paper.

\begin{example}\label{example:crossingmatch}
Given a SPARQL query graph $Q$ in Figure \ref{fig:querygraph}, the subgraph induced by vertices 014,007,001,002,009 and 018 (shown in the shaded vertices and the red edges in Figure \ref{fig:lodfgraph}) is a crossing match of $Q$.
\end{example}


We utilize a \emph{partial evaluation and assembly} \cite{DBLP:journals/csur/Jones96} framework to answer SPARQL queries over a distributed RDF graph $G$. Each site $S_i$ treats fragment $F_i$ as the known input $s$ and other fragments  as yet unavailable input $\overbar{G}$ (as defined in Section \ref{sec:introduction}) \cite{fan:reachability}.

In our execution model, each site $S_i$ receives the full query graph $Q$. In the partial evaluation stage, at each site $S_i$, we find all \emph{local partial matches} (Definition  \ref{def:localmaximal}) of $Q$ in $F_i$. We prove that an overlapping part between any crossing match and fragment $F_i$ must be a local partial match in $F_i$ (see Proposition \ref{proposition:overlapping}).

To demonstrate the intuition behind dealing with crossing edges, consider the case in Example \ref{example:crossingmatch}. The crossing match $M$ overlaps with two fragments $F_1$ and $F_2$.
If we can find the overlapping parts between $M$ and $F_1$, and $M$ and $F_2$, we can assemble them to form a crossing match. For example, the subgraph induced by vertices 014, 007, 001 and 002 is an overlapping part between $M$ and $F_1$. Similarly, we can also find the overlapping part between $M$ and $F_2$. We assemble them based on the common edge $\overrightarrow{002,001}$ to form a crossing match, as shown in Figure \ref{fig:assembly}.

\begin{figure}[h]
\begin{center}
    \includegraphics[scale=0.21]{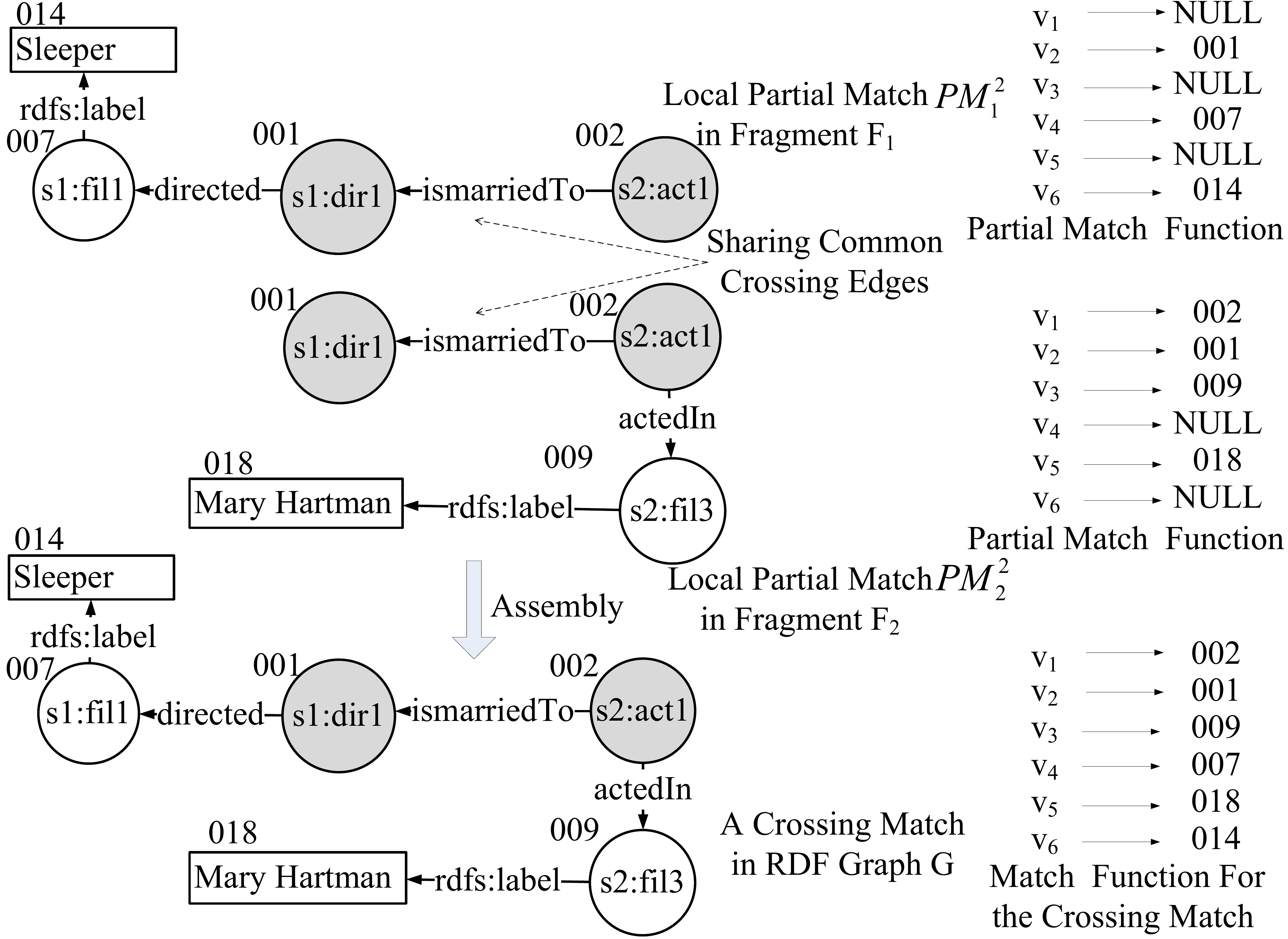}
   \caption{Assemble Local Partial Matches}
   \label{fig:assembly}
\end{center}
\end{figure}

In the assembly stage, these local partial matches are assembled to form crossing matches. In this paper, we consider two assembly strategies: centralized and distributed (or parallel). In centralized, all local partial matches are sent to a single site for the assembly. In distributed/parallel, local partial matches are combined at a number of sites in parallel (see Section \ref{sec:assembling}).

There are three steps in our method.

\textbf{Step 1} (\emph{Initialization}): A SPARQL query $Q$ is input and sent to each site in $\mathcal{S}$.

\textbf{Step 2} (\emph{Partial Evaluation}): Each site $S_i$ finds  \emph{local partial matches} of $Q$ over fragment $F_i$. This step is executed in  parallel at each site (Section \ref{sec:partialcomputing}).

\textbf{Step 3} (\emph{Assembly}): Finally, we assemble all local partial matches to compute complete crossing matches. The system can use the centralized (Section \ref{sec:centralized}) or the distributed assembly approach (Section \ref{sec:distributed}) to find crossing matches.

\section{Partial Evaluation}
\label{sec:partialcomputing}
We first formally define a local partial match (Section \ref{sec:advancednotation}) and then discuss how to compute it efficiently (Section \ref{sec:computinglocal}).

\vspace{-0.05in}
\subsection{Local Partial Match---Definition}\label{sec:advancednotation}

\nop{We assume that the RDF graph $G$ is partitioned into a set of fragments $\mathcal{F}=\{F_{1}, \ldots, F_{n}\}$ and we let $\mathcal{S}=\{S_{1}, \ldots, S_{n}\}$ denote the set of sites. In this paper we are not concerned with the particular graph partitioning algorithm that is used -- there are a number of good candidates -- and, for ease of exposition, we assume that the number of fragments is equal to the number of sites so that each site holds one fragment.}

Recall that each site $S_i$ receives the full query graph $Q$ (i.e., there is no query decomposition).
In order to answer query $Q$, each site $S_i$ computes the partial answers (called \emph{local partial matches}) based on the known input $F_i$  (recall that, for simplicity of exposition, we assume that each site hosts one fragment as indicated by its subscript). Intuitively, a local partial match $PM_i$ is an overlapping part between a crossing match $M$ and fragment $F_i$ at the partial evaluation stage. Moreover, $M$ may or may not exist depending on the yet unavailable input $\overbar{G}$ . Based only on the known input $F_i$, we cannot judge whether or not $M$ exists. For example, the subgraph induced by vertices 014, 007, 001 and 002 (shown in shared vertices and red edges) in Figure \ref{fig:lodfgraph} is a local partial match between $M$ and $F_1$.

\vspace{-0.05in}
\begin{definition}\label{def:localmaximal} \textbf{(Local Partial Match)} Given a SPARQL query graph $Q$ with $n$ vertices $\{v_1,...,v_n\}$ and a connected subgraph $PM$ with $m$ vertices $\{u_1,...,u_m\}$ ($m \leq n$) in a fragment $F_k$, $PM$ is a \emph{local partial match} in fragment $F_k$ if and only if there exists a function $f:\{v_1,...,v_n\}\to \{u_1,...,u_m\} \cup \{NULL\}$, where the following conditions hold:

\begin{enumerate}
\item If $v_i$ is not a variable, $f(v_i)$ and $v_i$ have the same URI or literal or $f(v_i)=NULL$.
\item If $v_i$ is a variable, $f(v_i) \in \{u_1,...,u_m\}$ or $f(v_i)=NULL$.
\item If there exists an edge $\overrightarrow{v_iv_j}$ in $Q$ ($1 \leq i\neq j \leq n$), then $PM$ should meet one of the following five conditions: (1) there also exists an edge $\overrightarrow{f{(v_i)}f{(v_j)}}$ in $PM$ with property $p$, and $p$ is the same to the property of $\overrightarrow{v_iv_j}$; (2) there also exists an edge $\overrightarrow{f{(v_i)}f{(v_j)}}$ in $PM$ with property $p$, and the property of $\overrightarrow{v_iv_j}$ is a variable; (3) there does not exist an edge $\overrightarrow{f{(v_i)}f{(v_j)}}$, but $f{(v_i)}$ and $f{(v_j)}$ are both in $V_k^e$; (4) $f{(v_i)}=NULL$; (5) $f{(v_j)}=NULL$.
\item  $PM$ contains at least one crossing edge, which guarantees that an empty match does not qualify.
\item If $f(v_i) \in V_k$ (i.e., $f(v_i)$ is an internal vertex in $F_k$) and $\exists \overrightarrow {v_i v_j} \in Q$ (or $\overrightarrow {v_j v_i} \in Q$), there must exist $f(v_j) \neq NULL$ and  $\exists \overrightarrow {f(v_i)f(v_j)} \in PM$ (or $\exists \overrightarrow {f(v_j)f(v_i)} \in PM$). Furthermore, if $\overrightarrow {v_i v_j}$ (or $\overrightarrow {v_j v_i})$ has a property $p$, $\overrightarrow {f(v_i)f(v_j)}$ (or $\overrightarrow {f(v_j)f(v_i)}$) has the same property $p$.
\item If $f(v_i)$ and $f(v_j)$ are both internal vertices in $PM$, then there exists a \emph{weakly connected path} $\pi$ between $v_i$ and $v_j$ in $Q$ and each vertex in $\pi$ maps to an internal vertex of $F_k$ in $PM$.
\end{enumerate}

Vector $[ f{(v_1)}, ..., f{(v_n)}]$ is a serialization of a local partial match.
\end{definition}

\vspace{-0.1in}
\begin{example}\label{exmple:local} Given a SPARQL query $Q$ with six vertices in Figure \ref{fig:querygraph}, the subgraph induced by vertices 001, 002, 007 and 014 (shown in shaded circles and red edges) is a \emph{local partial match} of $Q$ in fragment $F_1$. The function is $\{(v_1,
 002), (v_2, 001),$ $(v_3, NULL), (v_4,$ $007), (v_5, NULL), (v_6, 014)\}$.  The five different local partial matches in $F_1$ are shown in Figure \ref{fig:pmset}.
\end{example}
\vspace{-0.05in}

Definition \ref{def:localmaximal} formally defines a \emph{local partial match}, which is
a subset of a complete SPARQL match. Therefore, some conditions in Definition \ref{def:localmaximal} are analogous to SPARQL match with some subtle differences. In Definition \ref{def:localmaximal}, some vertices of query $Q$ are not matched in a local partial match. They are allowed to match a special value NULL (e.g., $v_3$ and $v_5$ in Example \ref{exmple:local}). As mentioned earlier, a local partial match is the overlapping part of an unknown crossing match and a fragment $F_i$. Therefore, it must have a crossing edge, i.e, Condition 4.

The basic intuition of Condition 5 is that if vertex $v_i$ (in query $Q$) is matched to an internal vertex, all of $v_i$'s neighbours should be matched in this local partial match as well. The following example illustrates the intuition.

\begin{example}
Let us recall the local partial match $PM_1^2$ of Fragment $F_1$ in Figure \ref{fig:pmset}. An internal vertex 001 in fragment $F_1$ is matched to vertex $v_2$ in query $Q$. Assume that $PM$ is an overlapping part between a crossing match $M$ and fragment $F_1$. Obviously, $v_2$'s neighbors, such as $v_1$ and $v_4$, should  also be matched in $M$. Furthermore, the matching vertices should be $001$'s neighbors. Since $001$ is an internal vertex in $F_1$,  $001$'s neighbors are also in fragment $F_1$.
\end{example}

Therefore, if a $PM$ violates Condition 5, it  cannot be a subgraph of a crossing match. In other words, we are not interested in these subgraphs when finding local partial matches, since they do not contribute to any crossing match.

\begin{definition}\label{def:weaklyconnected}
 Two vertices are \emph{weakly connected} in a directed graph if and only if there exists a connected path between the two vertices when all directed edges are replaced with undirected edges. The path is called a \emph{weakly connected path} between the two vertices.
\end{definition}

Condition 6 will be used to prove the correctness of our algorithm in Propositions \ref{proposition:overlapping} and \ref{proposition:optimal}. The following example shows all local partial matches in the running example.

\begin{example}\label{example:pmset}
Given a query $Q$ in Figure \ref{fig:querygraph} and an RDF graph $G$ in Figure \ref{fig:lodfgraph}, Figure \ref{fig:pmset} shows all local partial matches and their serialization vectors in each fragment. A local partial match in fragment $F_i$ is denoted as $PM_i^j$, where the superscript distinguishes different local partial matches in the same fragment. Furthermore, we underline all extended vertices in serialization vectors.
\end{example}


\begin{figure*}[t]
\begin{center}
    \includegraphics[scale=0.15]{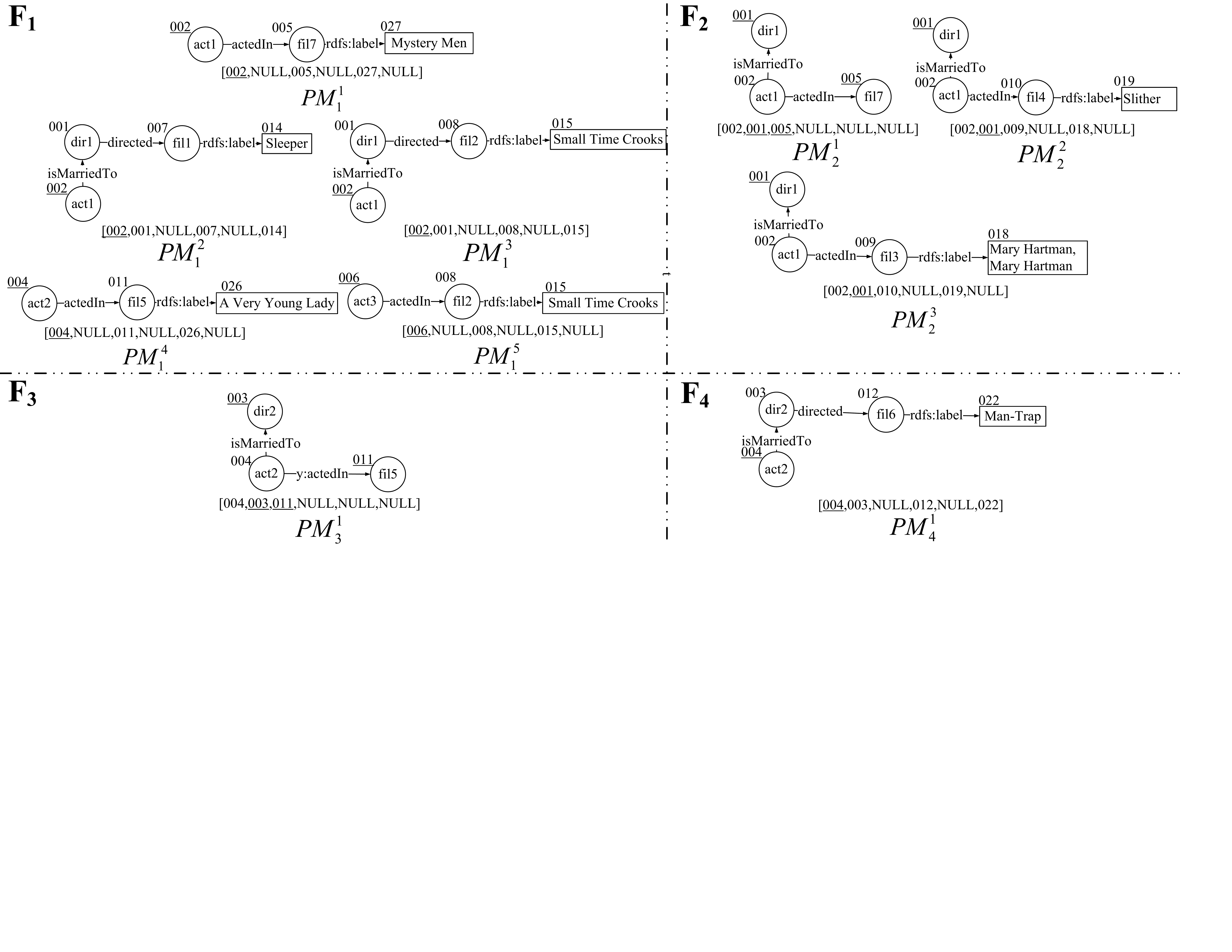}
   \caption{ Local Partial Matches of $Q$ in Each Fragment}
   \label{fig:pmset}
\end{center}
\end{figure*}

The correctness of our method are stated in the following propositions.
\begin{enumerate}
\item The overlapping part between any crossing match $M$ and internal vertices of fragment $F_i$ ($i=1,...,k$) must be a local partial match (see Proposition
 \ref{proposition:overlapping}).
\item Missing any local partial match may lead to result dismissal. Thus, the algorithm should find all local partial matches in each fragment (see Proposition \ref{proposition:optimal}).
\item It is impossible to find two local partial matches $M$ and $M^\prime$ in fragment $F$, where $M^\prime$ is a subgraph of $M$, i.e., each local partial match is maximal (see Proposition \ref{proposition:maximal}).
\end{enumerate}

\begin{proposition}\label{proposition:overlapping} Given any crossing match $M$ of SPARQL query $Q$ in an RDF graph $G$, if $M$ overlaps with some fragment $F_i$, let $(M \cap F_i)$ denote the overlapping part between $M$ and fragment $F_i$.  Assume that $(M \cap F_i)$ consists of several weakly connected components, denoted as $(M \cap F_i )=\{PM_1,...,PM_n\}$. Each weakly connected component $PM_a$ ($1\leq a \leq n$) in $(M \cap F_i)$ must be a \emph{local partial match} in fragment $F_i$.

\end{proposition}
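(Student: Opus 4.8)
The plan is to verify the six defining conditions of Definition~\ref{def:localmaximal} one at a time for an arbitrary weakly connected component $PM_a$ of $(M \cap F_i)$. Fix a crossing match $M$ with witnessing homomorphism $g:\{v_1,\dots,v_n\}\to V(M)$, and let $f$ be obtained from $g$ by setting $f(v_\ell)=g(v_\ell)$ whenever $g(v_\ell)$ is a vertex of $PM_a$, and $f(v_\ell)=NULL$ otherwise. First I would check that $PM_a$ really is a \emph{connected} subgraph of $F_i$ in the sense required: by construction each weakly connected component of $(M\cap F_i)$ is weakly connected, and since $M$ (hence $M\cap F_i$) is a subgraph of $G$ with all edges present in $F_i$, $PM_a$ is a legitimate subgraph of $F_i$. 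Conditions~1 and~2 (constants map to themselves or $NULL$; variables map into $PM_a$ or $NULL$) are immediate from the corresponding Conditions~1--2 for the SPARQL match $M$ together with the definition of $f$.

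For Condition~3, take an edge $\overrightarrow{v_iv_j}\in Q$. Since $M$ is a match, $\overrightarrow{g(v_i)g(v_j)}$ is an edge of $G$ with compatible label (same property, or the $Q$-edge is a variable). Now case on where the endpoints land: if both $g(v_i)$ and $g(v_j)$ lie in $PM_a$, then the edge is an inner edge of $F_i$ between them and we are in case~(1) or~(2); if exactly one, say $g(v_j)$, fails to lie in $PM_a$, then $f(v_j)=NULL$ and we are in case~(5) (symmetrically case~(4)); the remaining possibility is that $g(v_i),g(v_j)$ lie in $F_i$ but in a \emph{different} component of $(M\cap F_i)$ — here I would argue that the connecting edge must be a crossing edge, so both endpoints are extended vertices of $F_i$, landing us in case~(3). (The subtle point: if $g(v_i)$ and $g(v_j)$ are both internal vertices of $F_i$ joined by an edge of $G$, that edge is an inner edge of $F_i$, so they would be in the same weakly connected component; hence the cross-component case forces at least one, and in fact—because every crossing edge's far endpoint is stored as an extended vertex—both, to be extended.) Condition~4 (at least one crossing edge) holds because $M$ is a \emph{crossing} match: it is not contained in any single fragment, so it uses a crossing edge, and that crossing edge together with its endpoints survives in some component; here I would need the observation that a crossing edge incident to $F_i$ has both endpoints recorded in $F_i$ (one internal, one extended), so it lies in $(M\cap F_i)$ and therefore in exactly one component — but one must be slightly careful that it lies in \emph{this} component $PM_a$; the cleaner route is to note that if a component $PM_a$ had \emph{no} crossing edge then all its vertices would be internal with all incident $M$-edges inside $F_i$, making $PM_a$ a connected component of $M$ itself, contradicting that $M$ is connected and crossing. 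Conditions~5 and~6 are the heart of the argument: Condition~5 says an internal vertex $f(v_i)\in V_i$ forces every $Q$-neighbor $v_j$ to be non-$NULL$ with the edge (and its label) present in $PM_a$ — this follows because $g(v_i)$ internal means every $G$-edge at $g(v_i)$ is an inner edge of $F_i$, so $g(v_j)\in V_i$ too, the edge is in $F_i$, and connectivity keeps $v_j$ in the same component $PM_a$; Condition~6 (two internal matched vertices are joined by a $Q$-path all of whose vertices map to internal vertices in $PM_a$) then follows by iterating Condition~5 along a path realizing weak connectivity of $PM_a$.

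I expect the main obstacle to be the bookkeeping around the distinction between \emph{internal} vertices $V_i$ and \emph{extended} vertices $V_i^e$ when chasing edges across component boundaries — precisely, proving that an $M$-edge whose two endpoints fall in two different weakly connected components of $(M\cap F_i)$ must be a crossing edge with \emph{both} endpoints in $V_i^e$ (this is what legitimizes case~(3) of Condition~3 and is needed to make $f$ well defined as a local partial match rather than merely a partial homomorphism). This requires invoking Definition~\ref{def:internaldistributedgraph}, items (3)--(5): a crossing edge incident to $F_i$ has exactly one endpoint in $V_i$ and the other in $V_i^e$, and inner edges have both endpoints in $V_i$. Once that structural dichotomy is in hand, and combined with the definition of weak connectivity (Definition~\ref{def:weaklyconnected}), Conditions~5 and~6 drop out by a short induction along paths, and the proposition follows.
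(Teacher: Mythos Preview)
Your plan is the paper's plan: restrict the match homomorphism $g$ to $PM_a$ to get $f$, then verify Conditions~1--6 of Definition~\ref{def:localmaximal} one at a time. The paper does the same, only more tersely (it groups Conditions~1--3 into one sentence, argues Condition~4 via the weak connectedness of $M$ just as you do, and dispatches Conditions~5 and~6 in a line each).

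One correction in your Condition~3 case analysis. The situation that actually triggers clause~(3) of the definition is not ``$g(v_i),g(v_j)$ lie in $F_i$ but in different components''---there, the endpoint outside $PM_a$ is $NULL$ under your own $f$, so clause~(4) or~(5) already covers it. Clause~(3) is needed instead when $g(v_i)$ and $g(v_j)$ are \emph{both} in $PM_a$ and \emph{both} extended vertices: then the $G$-edge between them lies neither in $E_i$ (inner edges join two internal vertices) nor in $E_i^c$ (a crossing edge has exactly one internal endpoint), so it is absent from $PM_a$ even though $f(v_i),f(v_j)\in V_i^e$. Your first case (``both in $PM_a\Rightarrow$ edge is inner, case~(1)/(2)'') overlooks precisely this sub-case. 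The fix is routine and the paper does not spell it out either, so this is a local wrinkle rather than a structural gap. (Your Condition~6 sketch---lifting a data-graph path in $PM_a$ to a $Q$-path by ``iterating Condition~5''---also needs care, since Condition~5 carries $Q$-edges to $PM$-edges and not the reverse; but the paper's one-line justification is no more explicit.)
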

\begin{proof} (1) Since $PM_a$ ($1\leq a \leq n$) is a subset of a SPARQL match, it is easy to show that Conditions 1-3 of Definition 6 hold.

(2) We prove that each weakly connected component $PM_a$ ($1\leq a \leq n$) must have at least one crossing edge (i.e., Condition 4) as follows.

Since $M$ is a crossing match of SPARQL query $Q$, $M$ must be weakly connected, i.e., any two vertices in $M$ are weakly connected. Assume that $(M \cap F_i)$ consists of several weakly connected components, denoted as $(M \cap F_i )=\{PM_1,...,PM_n\}$. Let $M = (M \cap F_i ) + \overline {(M \cap F_i )} $, where $ \overline {(M \cap F_i )}$ denotes the complement of $(M \cap F_i )$. It is straightforward show that  $ \overline {(M \cap F_i )}$ must occur in other fragments; otherwise it should be found at $(M \cap F_i )$. $PM_a$ ($1\leq a \leq n$) is weakly disconnected with each other because we remove $ \overline {(M \cap F_i )}$ from $M$. In other words, each $PM_a$ must have at least one \emph{crossing edge} to connect $PM_a$ with $\overline {(M \cap F_i )}$. $\overline {(M \cap F_i )}$ are in other fragments and only crossing edges can connect fragment $F_i$ with other fragments. Otherwise, $PM_a$ is a separated part in the crossing match $M$. Since, $M$ is weakly connected, $PM_a$ has at least one crossing edge, i.e, Condition 4.

(3) For Condition 5, for any internal vertex $u$ in $PM_a$ ($1\leq a \leq n$), $PM_a$ retains all its incident edges. Thus, we can prove that Condition 5 holds.

(4) We define $PM_a$ ($1\leq a \leq n$) as a weakly connected part in $(M \cap F_i )$. Thus, Condition 6 holds.

To summarize, the overlapping part between $M$ and fragment $F_i$ satisfies all conditions in Definition 6. Thus, Proposition \ref{proposition:overlapping} holds. $\Box$

\end{proof}

Let us recall Example \ref{example:pmset}. There are some local partial matches that do not contribute to any crossing match, such as $PM_1^5$ in Figure \ref{fig:pmset}. We call these local partial matches \emph{false positives}. However, the partial evaluation stage only depends on the known input. If we do not know the structures of other fragments, we cannot judge whether or not $PM_1^5$ is a false positive. Formally, we have the following proposition, stating that we have to find all local partial matches in each fragment $F_i$ in the partial evaluation stage.

\begin{proposition}\label{proposition:optimal}  The partial-evaluation-and-assembly algorithm does not miss any crossing matches in the answer set \emph{if and only if} all local partial matches in each fragment are found in the partial evaluation stage.
\end{proposition}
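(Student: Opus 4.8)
The statement is a biconditional, so the plan is to establish the two directions separately. The ``if'' direction (finding all local partial matches suffices) is the substantive claim about completeness; the ``only if'' direction (finding all of them is necessary) is essentially a matching-lower-bound argument showing none can be dropped a priori.

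\emph{Direction 1: completeness.} Suppose every local partial match in every fragment has been found. Let $M$ be an arbitrary crossing match of $Q$ in $G$; I must show the assembly stage reconstructs $M$. The key tool is Proposition~\ref{proposition:overlapping}: for each fragment $F_i$ that $M$ touches, $(M \cap F_i)$ decomposes into weakly connected components $\{PM_1, \ldots, PM_n\}$, each of which is a local partial match in $F_i$ and hence, by hypothesis, is among the matches produced in the partial evaluation stage. So all the ``pieces'' of $M$ are available. It then remains to argue that the assembly operation glues exactly these pieces back into $M$: I would show that $M$, restricted over the fragments it intersects, is covered without overlap in internal vertices (vertex-disjoint partitioning of $V$), that the pieces overlap precisely on the crossing edges and their extended endpoints (Condition~4 and the $V_i^e$ machinery of Definition~\ref{def:internaldistributedgraph}), and that Conditions~5 and~6 guarantee the pieces are ``maximal enough'' that the serialization vectors $[f(v_1), \ldots, f(v_n)]$ of the pieces agree on shared variables and combine consistently. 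Concatenating/merging the serialization vectors of the constituent local partial matches along shared (crossing) edges then yields the serialization of $M$. Since $M$ was arbitrary, no crossing match is missed.

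\emph{Direction 2: necessity.} Suppose some local partial match $PM$ in some fragment $F_i$ is \emph{not} found. I would exhibit a completion: construct an RDF graph $\overbar{G}$ on the other fragments (i.e., choose the ``yet unavailable input'') together with a crossing match $M$ of $Q$ in $G = F_i \cup \overbar{G}$ such that $(M \cap F_i)$ has $PM$ as one of its weakly connected components. Concretely, for each query vertex $v_j$ with $f(v_j) = NULL$ in $PM$'s witnessing function, and for each query edge left unsatisfied inside $F_i$ under clauses (3)--(5) of Definition~\ref{def:localmaximal}, add fresh vertices and edges in a neighboring fragment that realize those mappings and attach to $PM$'s crossing edge(s); Condition~4 guarantees at least one crossing edge exists to serve as the attachment point, and Conditions~5 and~6 ensure the internal part of $PM$ needs no further extension inside $F_i$, so the construction is consistent. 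This $M$ is a genuine crossing match of $Q$, but the assembly stage cannot produce it because the piece $PM \subseteq (M \cap F_i)$ was never generated and, by vertex-disjointness, no other fragment's output can supply that internal portion of $F_i$. Hence $M$ is missed, proving necessity by contraposition.

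\emph{Main obstacle.} The delicate point is the gluing argument in Direction~1: one must verify that assembling local partial matches along their common crossing edges produces \emph{exactly} $M$ and nothing spurious gets in the way — i.e., that the decomposition of $(M \cap F_i)$ into weakly connected components (forced by removing $\overbar{(M\cap F_i)}$, as in the proof of Proposition~\ref{proposition:overlapping}) is compatible with how assembly matches up extended vertices across fragments, and that Conditions~5--6 really do prevent a component from being ``too small'' to reassemble. The necessity direction is comparatively routine once one commits to the adversarial-completion construction, though care is needed to make the added external structure satisfy the non-variable URI constraints (clause~1 of Definition~\ref{def:sparqlmatch}) wherever $PM$ pins query vertices to constants.
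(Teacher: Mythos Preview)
Your proposal is correct and follows essentially the same two-part contradiction strategy as the paper: for the ``if'' direction, invoke Proposition~\ref{proposition:overlapping} to decompose an arbitrary crossing match $M$ into local partial matches and argue assembly recovers $M$; for the ``only if'' direction, exhibit a completion of the missed $PM$ into a full crossing match that would then be lost.

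The one noteworthy difference is in Direction~2. The paper's argument is brief to the point of being hand-wavy: it simply writes ``Assume that there exists another local partial match $PM_j$ in $F_j$ that matches a complementary graph of $Q'$'' and derives the contradiction from that assumption, without justifying why such a $PM_j$ must be available. Your adversarial-completion construction---treating $\overbar{G}$ as the yet-unavailable input and manufacturing fresh vertices and edges in a neighboring fragment to realize the NULL mappings---is what actually makes this step rigorous, and it is the right way to read the proposition given the paper's earlier discussion of false-positive local partial matches (which, for a \emph{fixed} $\overbar{G}$, could safely be dropped). Your version also spells out the gluing in Direction~1 more carefully than the paper, which dispatches it in a single sentence (``we can assemble these partial matches $PM_i$ \ldots\ to form the complete cross match $M$''). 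So your plan is the paper's plan, carried out with more care.
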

\begin{proof} In two parts:

(1) The ``If'' part: (proven by contradiction).

Assume that all local partial matches are found in each fragment $F_i$ but a cross match $M$ is missed in the answer set. Since $M$ is a crossing match, suppose that $M$ overlaps with $m$ fragments $F_1$,...,$F_m$. According to Proposition \ref{proposition:overlapping}, the overlapping part between $M$ and $F_i$ ($i=1,...,m$) must be a local partial match $PM_i$ in $F_i$. According to the assumption, these local partial matches have been found in the partial evaluation stage. Obviously, we can assemble these partial matches $PM_i$ ($i=1,...,m$) to form the complete cross match $M$.

In other words, $M$ would not be missed if all local partial matches are found. This contradicts the assumption.

(2) The ``Only If'' part: (proven by contradiction).

We assume that a local partial match $PM_i$ in fragment $F_i$ is missed and the answer set can still satisfy no-false-negative requirement. Suppose that $PM_i$ matches a part of $Q$, denoted as $Q^\prime$. Assume that there exists another local partial match $PM_j$ in $F_j$ that matches a complementary graph of $Q^\prime$, denoted as $\overbar{Q} = Q \setminus Q^\prime$. In this case, we can obtain a complete match $M$ by assembling the two local partial matches. If $PM_i$ in $F_i$ is missed, then match $M$ is missed. In other words, it cannot satisfy the no-false-negative requirement. This also contradicts the assumption. $\Box$
\end{proof}

Proposition \ref{proposition:optimal} guarantees that no local partial matches will be missed. This is important to avoid false negatives.
Based on Proposition \ref{proposition:optimal}, we can further prove the following proposition, which guarantees that the intermediate results in our method involve the smallest number of vertices and edges.

\begin{proposition}\label{proposition:minimalISNum} Given the same underlying partitioning over RDF graph $G$, the number of involved vertices and edges in the intermediate results (in our approach) is not larger than that in any other partition-based solution.
\end{proposition}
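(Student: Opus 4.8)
The plan is to compare, edge by edge and vertex by vertex, the intermediate results produced by our partial-evaluation approach against those produced by an arbitrary partition-based (query-decomposition) competitor that runs on the \emph{same} fragmentation $\mathcal{F}=\{F_1,\dots,F_k\}$. The intermediate results in our method are exactly the union of all local partial matches found at each site (inner matches are computed locally and never shipped, so they contribute nothing to the intermediate results that must be communicated/assembled). So it suffices to show: every vertex and every edge that appears in some local partial match $PM_i$ in fragment $F_i$ must also appear in some subquery result produced by the competitor on the same data. Summing over all fragments then gives the inequality on the total count of involved vertices and edges.

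First I would fix an arbitrary vertex-disjoint partitioning and an arbitrary partition-based algorithm $\mathcal{A}$ that is correct (no false negatives). The key structural fact I would invoke is Proposition~\ref{proposition:optimal}: any correct partial-evaluation-style algorithm must retain \emph{all} local partial matches, and conversely each local partial match is a genuine overlapping piece of some potential crossing match (Proposition~\ref{proposition:overlapping}). The crucial observation to establish is that a correct query-decomposition algorithm $\mathcal{A}$ must \emph{also} retain, within its intermediate (subquery) results, every edge and vertex occurring in any local partial match --- otherwise one could exhibit a completing remainder $\overline{G}$ (an appropriate configuration of the unavailable fragments, consistent with the crossing-edge metadata) for which a true crossing match $M$ exists but is lost by $\mathcal{A}$, contradicting its correctness. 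This is the same ``adversary chooses $\overline{G}$'' argument used in the ``only if'' direction of Proposition~\ref{proposition:optimal}: a missing edge in $\mathcal{A}$'s intermediate results translates into a provably missable answer.

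Then I would argue the reverse containment is not needed: we only need our intermediate results to be no larger, so establishing an injection from (vertices, edges appearing in our local partial matches) into (vertices, edges appearing in $\mathcal{A}$'s intermediate results) suffices. Concretely, for an edge $e$ appearing in a local partial match $PM_i$, $e$ lies on some weakly connected path of $PM_i$ by Condition~6 of Definition~\ref{def:localmaximal}, and $PM_i$ carries at least one crossing edge (Condition~4); I would show $e$ must be covered by at least one triple pattern of $Q$ whose match $\mathcal{A}$ necessarily computes and keeps as (part of) a subquery result, because any decomposition of $Q$ assigns $e$'s generating triple pattern to some subquery, and that subquery's evaluation on the fragments touching $e$ must include $e$ on pain of incorrectness. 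Counting with multiplicity (an edge of $G$ replicated as a crossing edge in two fragments is counted in both, but the same is true for $\mathcal{A}$ since it operates on the same replicated data) yields the stated bound.

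The main obstacle I anticipate is making the phrase ``any other partition-based solution'' precise enough to prove something about without over-claiming: partition-based methods differ in their decomposition strategies, their replication, and what exactly they call ``intermediate results.'' The argument has to be robust to all of these --- in particular it must handle the fact that a competitor may replicate boundary regions more aggressively than we do (which only helps the inequality) and must pin down that ``intermediate results'' for a decomposition method includes \emph{all} subquery match results shipped for the final join, including those that ultimately contribute only to inner matches (which our method computes for free, locally). I would therefore state the comparison under the explicit hypothesis that the competitor is a correct, query-decomposition-based distributed algorithm on the same fragmentation, and lean on Propositions~\ref{proposition:overlapping} and~\ref{proposition:optimal} to drive the edge-covering argument; the remaining work is the bookkeeping of the injection and the with-multiplicity count, which is routine.
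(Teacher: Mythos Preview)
Your proposal is correct and follows essentially the same line as the paper: both argue by contradiction that any vertex or edge appearing in some local partial match must also appear in the intermediate results of any correct partition-based competitor, else one can exhibit a crossing match that the competitor misses. The paper's own proof is in fact terser and less careful than yours---it simply asserts that a missing vertex $u$ causes ``any edges adjacent to $u$'' to be lost in assembly---whereas you explicitly invoke the adversarial-$\overline{G}$ argument from Proposition~\ref{proposition:optimal} and flag the definitional looseness of ``any other partition-based solution,'' which is the genuine soft spot in the claim.
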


\begin{proof} In Proposition \ref{proposition:optimal}, we prove that every local partial match should be found for result completeness (i.e., false negatives). The same proposition proves that our method produces complete results. Therefore, if a partition-based solution omits some of the partial matches (i.e., intermediate results) that are in our solution (i.e., has intermediate result smaller than ours) then it cannot produce complete results. Assuming that they all produce complete results, what remains to be proven is that our set of partial matches is a subset of those generated by other partition-based solutions.  We prove that by contradiction.

Let $A$ be a solution generated by an alternative partition-based approach.  Assume that there exists one vertex $u$ in a local partial match $PM$ produced by  our method, but $u$ is not in the intermediate results of the partition-based solution $A$. This would mean that during the assembly phase to produce the final result, any edges adjacent to $u$ will be missed. This would produce incomplete answer, which contradicts the completeness assumption.

Similarly, it can be argued that it is impossible that there exists an \emph{edge} in our local partial matches (i.e., intermediate results) that it is not in the intermediate results of other partition-based approaches.

In other words, all vertices and edges in local partial matches must occur in the intermediate results of other  partition-based approaches. Therefore, Proposition \ref{proposition:minimalISNum} holds. $\Box$
\end{proof}

\nop{says that we cannot miss any local partial match; otherwise, it will lead to false negatives. Let us consider other query decomposition approaches mentioned in Section \ref{sec:introduction}, such as GraphPartition \cite{VLDB11:GraphPartition} and TripleGroup \cite{DBLP:journals/pvldb/LeeL13}. A query is decomposed into several subqueries, each of which is evaluated in all fragments. According to Theorem \ref{theorem:optimal}, it is easy to know that any local partial match (in our approach) $PM$ should be a subquery match in these query-decomposition methods or a join result of some subquery matches. In other words, we guarantee that the number of vertices and edges contained by local partial matches are the smallest, or any other approaches cannot involve fewer vertices or edges to find out all correct matches.}

Finally, we discuss another feature of a local partial match $PM_i$ in fragment $F_i$. Any $PM_i$ cannot be enlarged by introducing more vertices or edges to become a larger local partial match. The following proposition formalizes this.

\begin{proposition}\label{proposition:maximal} Given a query graph $Q$ and an RDF graph $G$, if $PM_i$ is a local partial match under function $f$ in fragment $F_i$, there exists no local partial match $PM^\prime_i$ under function $f^\prime$ in $F_i$, where $f\subset f^\prime$.
\end{proposition}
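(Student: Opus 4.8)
The plan is to argue by contradiction, exploiting the maximality that is already implicitly enforced by Condition 5 of Definition~\ref{def:localmaximal}. Suppose, for the sake of contradiction, that $PM_i$ is a local partial match under $f$ and that $PM'_i$ is a local partial match under $f'$ in the same fragment $F_i$ with $f \subsetneq f'$. Then there is at least one query vertex $v_j$ with $f(v_j) = NULL$ but $f'(v_j) = u \in \{u_1,\dots,u_m\} \subseteq V_i \cup V_i^e$. Since $Q$ is connected and $PM_i$ is a non-empty match (it contains a crossing edge, so $f$ is not everywhere $NULL$), I can pick such a $v_j$ that is adjacent in $Q$ to some vertex $v_\ell$ with $f(v_\ell) \neq NULL$; such a $v_j$ exists because the set of query vertices mapped to non-$NULL$ by $f$ is non-empty and proper, and connectivity of $Q$ forces an edge crossing the boundary of that set.

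Next I would do a case analysis on whether $u = f'(v_j)$ is an internal vertex of $F_i$ or an extended vertex. The key observation is that $f(v_\ell) = f'(v_\ell)$ (since $f \subset f'$ and $f(v_\ell)\neq NULL$), and $\overrightarrow{v_j v_\ell}$ (or $\overrightarrow{v_\ell v_j}$) is an edge of $Q$. In $PM_i$ we have $f(v_j)=NULL$, so this $Q$-edge is ``absorbed'' by clause (4) or (5) of Condition~3 of Definition~\ref{def:localmaximal}; that is fine for $PM_i$ itself. The tension arises when I ask whether $f(v_\ell)$ is an internal vertex of $F_i$: if it is, then Condition~5 applied to $PM_i$ demands $f(v_j)\neq NULL$, contradicting $f(v_j)=NULL$ directly. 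So $f(v_\ell)$ must be an extended vertex, i.e.\ $f(v_\ell)\in V_i^e$. I would then push the same argument along a path in $Q$: walk from $v_\ell$ into the region where $f$ is non-$NULL$; every vertex encountered must map under $f$ to an extended vertex (else Condition~5 is violated for $PM_i$ at the first internal one whose $Q$-neighbour is still unmatched). But $PM_i$ contains a crossing edge, and a crossing edge by Definition~\ref{def:internaldistributedgraph} has at least one endpoint that is an internal vertex of $F_i$; that internal vertex is matched by $f$ (it lies in $PM_i$), and its $Q$-preimage has a $Q$-neighbour that is also matched by $f$ (the other endpoint of the crossing edge's preimage, or, if the crossing edge is between two extended vertices of $F_i$, we reach a contradiction with the fact that a crossing edge of $F_i$ has an endpoint in $V_i$). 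Tracking this carefully yields a vertex that $f$ maps to an internal vertex while one of its $Q$-neighbours is sent to $NULL$, contradicting Condition~5. Hence no such $v_j$ can exist, so $f' \subseteq f$, and combined with $f \subset f'$ this is absurd.

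The main obstacle I anticipate is making the ``propagation along a path in $Q$'' step rigorous rather than hand-wavy: I need to be careful that when I move from $v_\ell$ toward the part of $Q$ that $f$ matches non-trivially, I genuinely reach an \emph{internal} vertex of $F_i$ in the image of $f$, and I must use Condition~6 (the weak-connectivity of the internal-vertex part of $PM_i$ via a path all of whose vertices map to internal vertices) to guarantee that the internal portion is non-empty and connected, so that its ``frontier'' inside $Q$ consists of edges on which Condition~5 can be invoked. A cleaner packaging may be: let $I = \{v : f(v) \in V_i\}$ be the query vertices mapped by $f$ to internal vertices of $F_i$; $PM_i$ contains a crossing edge whose internal endpoint lies in $f(I)$, so $I \neq \emptyset$; Condition~5 for $PM_i$ says every $Q$-neighbour of every vertex in $I$ is matched by $f$ to a non-$NULL$ vertex; therefore no vertex of $I$ is adjacent in $Q$ to a vertex $v_j$ with $f(v_j)=NULL$; but our enlarging vertex $v_j$ (with $f(v_j)=NULL$) is adjacent to the non-$NULL$ region, and chasing adjacencies within that region, which is glued to $I$, eventually produces an edge from $I$ to a $NULL$-vertex unless the non-$NULL$ region of $f$ is entirely extended vertices disjoint from any crossing edge's internal endpoint — which is impossible since $PM_i$ has a crossing edge. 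I would write the final version around this $I\neq\emptyset$ formulation, as it localizes the contradiction to a single application of Condition~5 and avoids an induction.
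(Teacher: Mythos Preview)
Your core strategy matches the paper's: assume $f\subsetneq f'$, locate an extra edge in $PM'_i\setminus PM_i$, and invoke Condition~5 at an internally matched query vertex to force that edge into $PM_i$. The paper does this in three lines---pick $e=\overrightarrow{uu'}\in PM'_i\setminus PM_i$, note that one endpoint $u$ must be internal, and apply Condition~5 to $PM_i$ at the query vertex $v$ with $f'(v)=u$---and never pauses to check that $f(v)\ne NULL$. You are right to flag this as the delicate point.

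Your fix, however, does not close the gap. In the ``cleaner packaging'' you aim to exhibit a $Q$-edge from $I=\{v:f(v)\in V_i\}$ to some vertex with $f=NULL$, yet Condition~5 on $PM_i$ says precisely that no such edge exists; the ``chasing adjacencies'' step therefore cannot land where you need it, and the path-propagation variant has the same defect (once inside $I$, every $Q$-neighbour is already non-$NULL$, so you never return to a $NULL$ vertex). The missing ingredient is Condition~6 applied to $PM'_i$ rather than to $PM_i$: it makes $I'=\{v:f'(v)\in V_i\}$ weakly connected in $Q$, and then Condition~5 on $PM_i$ together with $\emptyset\ne I\subseteq I'$ forces $I'=I$ (any $Q$-edge from $v\in I$ to $w\in I'\setminus I$ would give $f(w)\ne NULL$ by Condition~5, hence $f(w)=f'(w)$ is internal, hence $w\in I$). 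With $I'=I$ in hand the paper's short step is fully justified: the query vertex mapping to the internal endpoint of the extra edge lies in $I'=I$, so $f$ already matches it internally and Condition~5 on $PM_i$ applies directly. Condition~6 really is required here---without it one can take $Q$ a three-vertex path, a fragment with internal vertices $a,b$ and extended vertex $c$ joined by crossing edges $ac$ and $bc$, and observe that $(a,c,NULL)$ and $(a,c,b)$ both satisfy Conditions~1--5 with the first strictly contained in the second.
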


\begin{proof}(by contradiction) Assume that there exists another local partial match $PM^\prime_i$ of query $Q$ in fragment $F_i$, where $PM_i$ is a subgraph of $PM^\prime_i$. Since $PM_i$ is a subgraph of $PM^\prime_i$, there must exist at least one edge $e=\overrightarrow {uu^\prime}$ where $e\in PM^\prime_i$ and $e \notin PM_i$. Assume that $\overrightarrow{uu^\prime}$ is matching edge $\overrightarrow{vv^{\prime}}$ in query $Q$. Obviously, at least one endpoint of $e$ should be an internal vertex. We assume that $u$ is an internal vertex.  According to Condition 5, edge $\overrightarrow{vv^{\prime}}$ should also be matched in $PM$, since $PM$ is a local partial match.
However, edge $\overrightarrow{uu^\prime}$ (matching $\overrightarrow{vv^{\prime}}$) does not exist in $PM$. This contracts $PM$ being a local partial match. Thus, Proposition \ref{proposition:maximal} holds. $\Box$
\end{proof}

\subsection{Computing Local Partial Matches}\label{sec:computinglocal}

Given a SPARQL query $Q$ and a fragment $F_i$, the goal of partial evaluation is to find all local partial matches (according to Definition \ref{def:localmaximal}) in $F_i$. The matching process consists of determining a function $f$ that
associates vertices of $Q$ with vertices of $F_i$. The matches are expressed as a set of pairs $(v,u)$ ($v \in Q$ and $u \in F_i$). A pair $(v,u)$ represents
the matching of a vertex $v$ of query $Q$ with a vertex $u$ of fragment $F_i$. The set of vertex pairs $(v,u)$ constitutes function $f$ referred to in Definition \ref{def:localmaximal}.

A high-level description of finding local partial matches is outlined in Algorithm \ref{alg:findinglocalmaximal} and Function ComParMatch. According to Conditions 1 and 2 of Definition \ref{def:localmaximal}, each vertex $v$ in query graph $Q$ has a candidate list of vertices in fragment $F_i$. Since function $f$ is as a set of vertex pairs $(v,u)$ ($v \in Q$ and $u \in F_i$), we start with an empty set. In each step, we introduce a candidate vertex pair $(v,u)$ to expand the current function $f$, where vertex $u$ (in fragment $F_i$) is a candidate of vertex $v$ (in query $Q$).

Assume that we introduce a new candidate vertex pair $(v^\prime,u^\prime)$  into the current function $f$ to form another function $f^\prime$. If $f^\prime$ violates any condition except for Conditions 4 and 5 of Definition \ref{def:localmaximal}, the new function $f^\prime$ cannot lead to a local partial match (Lines 6-7 in Function ComParMatch). If $f^\prime$ satisfies all conditions except for Conditions 4 and 5, it means that $f^\prime$ can be further expanded (Lines 8-9 in Function ComParMatch). If $f^\prime$ satisfies all conditions, then $f^\prime$ specifies a local partial match and it is reported (Lines 10-11 in Function ComParMatch).

\begin{algorithm}[h] \label{alg:findinglocalmaximal}

\caption{Computing Local Partial Matches}

\KwIn{A fragment $F_i$ and a query graph $Q$.}
\KwOut{The set of all local maximal partial matches in $F_i$, denoted as $\Omega(F_i)$.}
\For{each vertex $v$ in $Q$}{
\For{each candidate vertex $u$ with regard to $v$}
{
    Initialize a function $f$ with $(v,u)$\\
    Call Function \textbf{ComParMatch}($f$)
}
}
Return $\Omega(F_i)$;
\end{algorithm}


\begin{function}[h] \label{alg:functioncom}
\small
\caption{ComParMatch($f$)}
\If{all vertices of query $Q$ have been matched in the function $f$}
{
Return;
}
Select an unmatched $v^\prime$ adjacent to a matched vertex $v$ in the function $f$ \\
\For{each candidate vertex $u^\prime$ with regard to $v^\prime$}
{
   $f^\prime$ $\gets$ $f$ $\cup$ $(v^\prime,u^\prime)$ \\
   \If{$f^\prime$ violates any condition (except for condition 4 and 5 of Definition \ref{def:localmaximal})}
   {
      Continue
   }
   \If{$f^\prime$ satisfies all conditions (except for condition 4 and 5 of Definition \ref{def:localmaximal})}
   {
      \textbf{ComParMatch}($f^\prime$)
   }
   \If{$f^\prime$ satisfies all conditions of Definition \ref{def:localmaximal} }
   {
      $f$ specifies a local partial match $PM$ that will be inserted into the answer set $\Omega(F_i)$
   }
 }
\end{function}

At each step, a new candidate vertex pair $(v^\prime,u^\prime)$ is added to an existing function $f$ to form a new function $f^\prime$. The order of selecting the query vertex can be arbitrarily defined. However, QuickSI \cite{VLDB08:QuickSI} proposes several heuristic rules to select an optimized order that can speed up the matching process. These rules are also utilized in our experiments.

To compute local partial matches (Algorithm \ref{alg:findinglocalmaximal}), we revise a graph-based SPARQL query engine, gStore, which is our previous work. Since gStore adopts ``subgraph matching'' technique to answer SPARQL query processing, it is easy to revise its subgraph matching algorithm to find ``local partial matches'' in each fragment. gStore adopts a state transformation technique to find SPARQL matches. Here, a state corresponds to a partial match (i.e. a function from $Q$ to $G$).

Our \emph{state} \emph{transformation} algorithm is as follows. Assume that $u$ matches vertex $v$ in SPARQL query $Q$. We first initialize a state with $u$. Then, we search the RDF data graph for $u$'s neighbor $u^\prime$ corresponding to $v^\prime$ in $Q$, where $v^\prime$ is one of $v$'s neighbors and edge $\overrightarrow{uu^\prime}$ satisfies query edge $\overrightarrow{vv^\prime}$. The search will extend the state step-by-step. The search branch terminates when a state corresponding to a local partial match is found or search cannot continue. In this case, the algorithm backtracks and tries another search branch.

The only change that is required to implement Algorithm \ref{alg:findinglocalmaximal} is in the termination condition (i.e., the final state) so that it stops when a partial match is found rather than looking for a complete match.


\begin{figure}[h]
\begin{center}
    \includegraphics[scale=0.18]{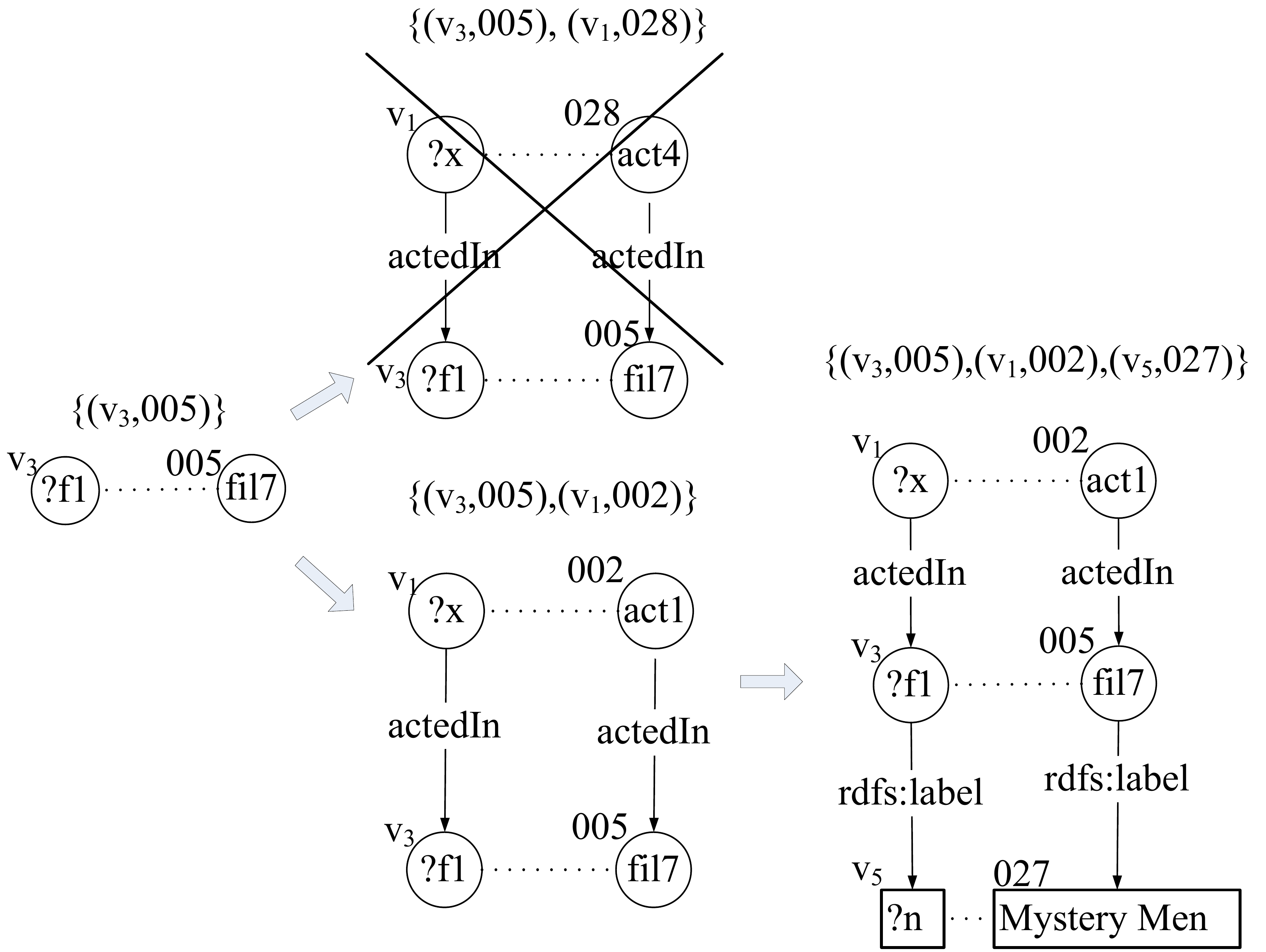}
       \vspace{-0.1in}
   \caption{Finding Local Partial Matches}
   \label{fig:dfslpm}
\end{center}
   \vspace{-0.2in}
\end{figure}

\begin{example}
Figure \ref{fig:dfslpm} shows how to compute $Q$'s local partial matches in fragment $F_1$. Suppose that we initialize a  function $f$ with $(v_3,005)$. In the second step, we expand to $v_1$ and consider $v_1$'s candidates, which are $002$ and $028$. Hence, we introduce two vertex pairs $(v_1,002)$ and $(v_1,028)$ to expand $f$. Similarly, we introduce $(v_5,027)$ into the function $\{(v_3,005),(v_1,002)\}$ in the third step. Then, $\{(v_3,005),(v_1,002),$ $(v_5,027)\}$ satisfies all conditions of Definition \ref{def:localmaximal}, thus, it is a local partial match, and is returned. In another search branch, we check the  function $\{(v_3,005),(v_1,028)\}$, which cannot be expanded, i.e., we cannot introduce a new matching pair;without violating some conditions in Definition \ref{def:localmaximal}. Therefore, this search branch is terminated.
\end{example}

\section{Assembly}\label{sec:assembling}

Each site $S_i$ finds all local partial matches in fragment $F_i$. The next step is to assemble partial matches to compute crossing matches and compute the final results. We propose two assembly strategies: centralized and distributed (or parallel). In centralized, all local partial matches are sent to a single site for assembly. For example, in a client/server system, all local partial matches may be sent to the server. In distributed/parallel, local partial matches are combined at a number of sites in parallel. Here, when $S_i$ sends the local partial matches to the final assembly site for joining, it also tags which vertices in local partial matches are internal vertices or extended vertices of $F_i$. This will be useful for avoiding some computations as discussed in this section.

In Section \ref{sec:basicjoin}, we define a basic join operator for assembly. Then, we propose a centralized assembly algorithm in Section \ref{sec:centralized} using the join operator. In Section \ref{sec:distributed}, we study how to assemble local partial matches in a distributed manner.

\subsection{Join-based Assembly}
\label{sec:basicjoin}

We first define the conditions under which two partial matches are joinable. Obviously, crossing matches can only be formed by assembling partial matches from different fragments. If local partial matches from the same fragment could be assembled, this would result in a larger local partial match in the same fragment, which is contrary to Proposition \ref{proposition:maximal}.

\nop{
We cannot assemble local partial matches from the same fragment to form a crossing match. It is easily proven by the contradiction. Assume that two local partial matches from the same fragment $F_i$ can be assembled to form another larger local partial match in fragment $F_i$. It is contradicted to Theorem \ref{theorem:maximal}, which says that any local partial match is maximal and cannot be enlarged. As a result, the join process only happens between local partial matches from different fragments. The join conditions are defined as follows.
}

\begin{definition}\label{def:joinrelation} \textbf{(Joinable)}
Given a query graph $Q$ and two fragments $F_i$ and $F_j$ ($i \neq j$), let $PM_i$ and $PM_j$ be the corresponding local partial matches over fragments $F_i$ and $F_j$ under functions $f_i$ and $f_j$. $PM_i$ and $PM_j$ are \emph{joinable} if and only if the following conditions hold:

\begin{enumerate}
\item There exist no vertices $u$ and $u^\prime$ in $PM_i$ and $PM_j$, respectively, such that $f^{-1}_i(u)=f^{-1}_j(u^\prime)$.
\item There exists at least one crossing edge $\overrightarrow{uu^\prime}$ such that $u$ is an internal vertex and $u^\prime$ is an extended vertex in $F_i$, while $u$ is an extended vertex and $u^\prime$ is an internal vertex in $F_j$. Furthermore, $f^{-1}_i{(u)}=f^{-1}_j{(u)}$ and $f^{-1}_i{(u^\prime)}=f^{-1}_j{(u^\prime)}$.
\end{enumerate}
\end{definition}

The first condition says that the same query vertex cannot be matched by different vertices in joinable partial matches. The second condition says that two local partial matches share at least one common crossing edge that corresponds to the same query edge.

\begin{example}\label{example:joinable}  Let us recall query $Q$ in Figure \ref{fig:querygraph}.  Figure \ref{fig:assembly} shows two different local partial matches $PM_1^2$ and $PM_2^2$. We also show the functions in Figure \ref{fig:assembly}. There do not exist two different vertices in the two local partial matches that match the same query vertex. Furthermore, they share a common crossing edge $\overrightarrow{002,001}$, where $002$ and $001$ match query vertices $v_2$ and $v_1$ in the two local partial matches, respectively. Hence, they are joinable.
\end{example}

\nop{
\begin{figure}[h]
\begin{center}
\vspace{-0.1in}
    \includegraphics[scale=0.15]{pics/q1_joinable.pdf}
    \vspace{-0.1in}
   \caption{Joinable Local Partial Matches $PM_1^4$ and $PM_3^1$; and Not Joinable Local Partial Matches $PM_1^4$ and $PM_2^1$}
   \vspace{-0.2in}
   \label{fig:q1joinable}
\end{center}
\end{figure}
}

The join result of two joinable local partial matches is defined as follows.

\begin{definition}\textbf{(Join Result)}\label{def:joinresult}
Given a query graph $Q$ and two fragments $F_i$ and $F_j$, $i\neq j$, let $PM_i$ and $PM_j$ be two joinable local partial matches of $Q$ over fragments $F_i$ and $F_j$ under functions $f_i$ and $f_j$, respectively.  The \emph{join result} of $PM_i$ and $PM_j$ is defined under a new function $f$ (denoted as $PM=PM_i \Join_{f} PM_j$), which is defined as follows for any vertex $v$ in $Q$:

\begin{enumerate}
\item if $f_i(v) \neq NULL \wedge f_j(v)=NULL $ \footnote{$f_j(v)=NULL$ means that vertex $v$ in query $Q$ is not matched in local partial match $PM_j$. It is formally defined in Definition \ref{def:localmaximal} condition (2)}, $f(v)$ $\gets$ $ f_i(v)$ \footnote{In this paper, we use ``$\gets$'' to denote the assignment operator.};
\item if $f_i(v) = NULL \wedge f_j(v)\neq NULL $, $f(v)$  $\gets$ $f_j(v)$;
\item if $f_i(v) \neq NULL \wedge f_j(v) \neq NULL $, $f(v)$ $\gets$ $f_i(v)$ (In this case, $f_i(v)=f_j(v)$)
\item if $f_i(v) = NULL \wedge f_j(v) = NULL $, $f(v)$ $\gets$ $NULL$
\end{enumerate}
\end{definition}

Figure \ref{fig:assembly} shows the join result of $PM_1^2 \Join_{f} PM_2^2$.

\nop{It is a crossing match of query $Q$. The function of the join result
is also given in Figure \ref{fig:assembly} . }

\nop{
If the join result has been a complete crossing match, it is returned as a result. Otherwise, the join result is called an intermediate result. For example, joining $PM_1^4$ and $PM_3^1$ (denoted as $PM_1^4 \Join PM_3^1$) results in an intermediate result. Actually, the intermediate result is a local partial match in the merged fragment $F_1 \cup F_3$. For the ease of presentation, we also call an ``intermediate result'' as a ``local partial match'' when the context is clear.
}

\subsection{Centralized Assembly}\label{sec:centralized}
In  centralized assembly, all local partial matches are sent to a final assembly site. We propose an iterative join algorithm (Algorithm 2) to find all crossing matches. In each iteration, a local partial match join with a intermediate result generated in the past iteration. When the join is complete (i.e., a match has been found), the result is returned (Lines 12-13 in Algorithm 2);  otherwise, the result is joined with other local partial matches in the next iteration (Lines 14-15). There are $|V(Q)|$ iterations of Lines 4-16 in the worst case, since at each iteration only a single new matching vertex is introduced (worst case) and $Q$ has $|V(Q)|$ vertices.  If no new intermediate results are generated at some iteration, the algorithm can stop early (Lines 5-6).

\begin{example}\label{example:join} Figure \ref{fig:assembly} shows the join result of $PM_1^2 \Join_{f} PM_2^2$. In this example, we consider a crossing match formed by three local partial matches. Let us consider three local partial matches $PM_1^4$, $PM_4^1$ and $PM_3^1$ in Figure \ref{fig:pmset}. In the first iteration, we obtain the intermediate result $PM_1^4 \Join_{f} PM_3^1$ in Figure \ref{fig:q1join}. Then, in the next iteration, $(PM_1^4 \Join_{f} PM_3^1)$ joins with $PM_4^1$ to obtain a crossing match.
\end{example}

\begin{figure}
\begin{center}
\vspace{-0.1in}
    \includegraphics[scale=0.15]{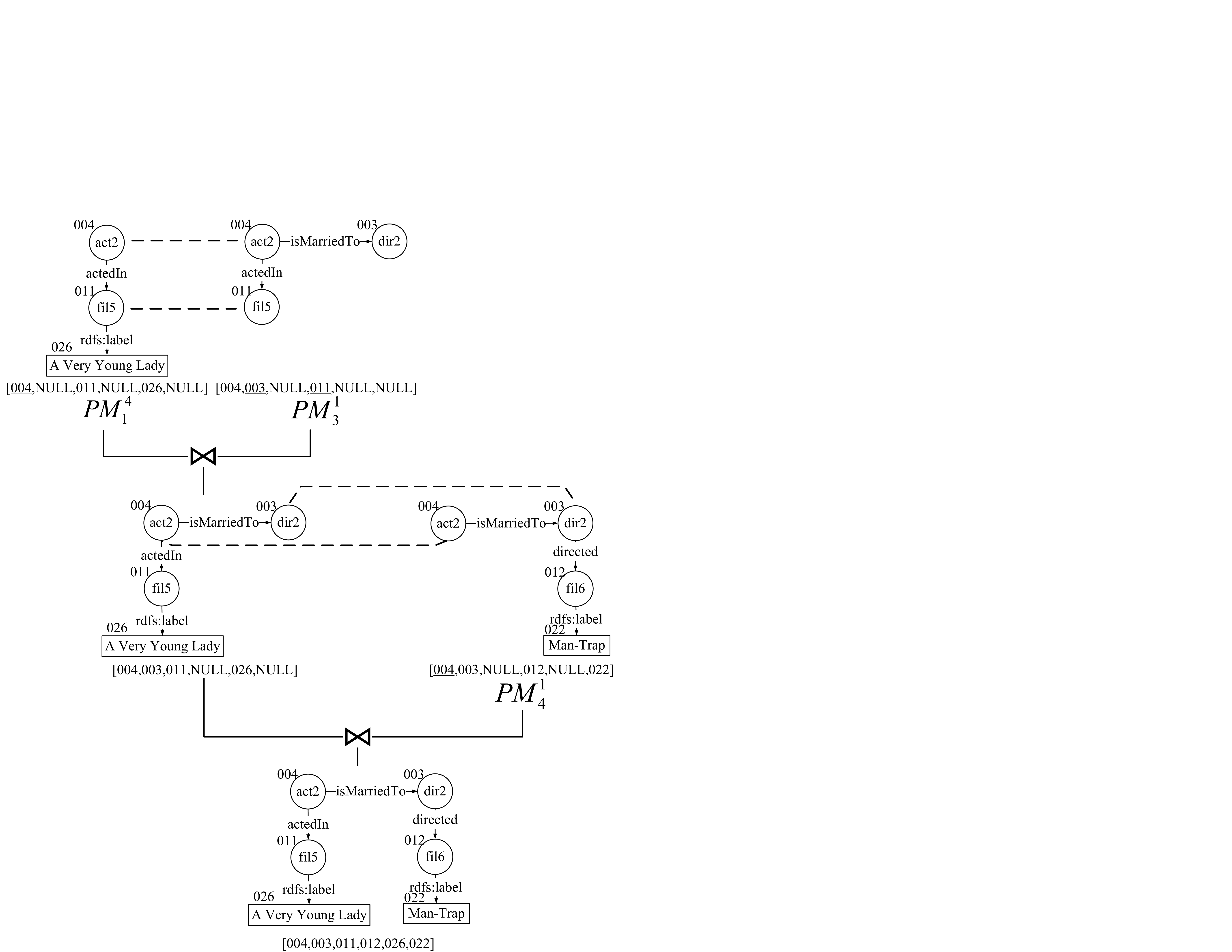}
    \vspace{-0.1in}
   \caption{Joining $PM_1^4$, $PM_3^1$ and $PM_4^1$}
   \vspace{-0.2in}
   \label{fig:q1join}
\end{center}
\end{figure}

\begin{algorithm} \label{alg:naivemerging}
\caption{Centralized Join-based Assembly}

\KwIn{$\Omega(F_i)$, i.e, the set of local partial matches in each fragment $F_i$, $i=1,...,k$}
\KwOut{ All crossing matches set $RS$.}

Each fragment $F_i$ sends the set of local partial matches in each fragment $F_i$ (i.e., $\Omega(F_i)$) to a single site for the assembly\\
Let $\Omega\gets \bigcup\nolimits_{i = 1}^{i = k} {\Omega (F_i )}$\\
Set $MS$ $\gets$ $\Omega$ \\
\While{$MS \ne \emptyset$}
{
    Set $MS^\prime$ $\gets$  $\emptyset$ \\
    \For{each intermediate result $PM$ in $MS$}
    {
      \For{each local partial match $PM^\prime$ in $\Omega$}
      {
      \If{$PM$ and $PM^\prime$ are joinable}
      {
         Set $PM^{\prime\prime}$= $PM \Join PM^\prime$\\
         \If{$PM^{\prime\prime}$ is a complete match of $Q$}
         {
         put $PM^{\prime\prime}$ into $RS$
         }
         \Else
         {
         put $PM^{\prime\prime}$ into $MS^\prime$
         }
       }
      }
    }
    $MS\gets MS^\prime$
}
Return $RS$
\end{algorithm}

\subsubsection{Partitioning-based Join Processing}
\label{sec:optimizedjoin}

The join space in Algorithm 2 is large, since we need to check if every pair of local partial matches $PM_i$ and $PM_j$ are joinable. This subsection proposes an optimized technique to reduce the join space.

The intuition of our method is as follows. We divide all local partial matches into multiple partitions such that two local partial matches in the same set cannot be joinable; we only consider joining local partial matches from different partitions. The following theorem specifies which local partial matches can be put in the same partition.

\begin{theorem}\label{theorem:pmpartition} Given two local partial matches $PM_i$ and $PM_j$ from fragments $F_i$ and $F_j$ with functions $f_i$ and $f_j$, respectively, if there exists a query vertex $v$ where both $f_i(v)$ and $f_j(v)$ are internal vertices of fragments $F_i$ and $F_j$, respectively, $PM_i$ and $PM_j$ are not joinable.
\end{theorem}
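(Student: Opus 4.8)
The plan is to prove the contrapositive indirectly: assume, for contradiction, that such a query vertex $v$ exists (with $f_i(v)$ internal in $F_i$ and $f_j(v)$ internal in $F_j$) and yet $PM_i$ and $PM_j$ are joinable. By Definition \ref{def:joinrelation}, joinability gives us a crossing edge $\overrightarrow{uu^\prime}$ that is witnessed in both $PM_i$ and $PM_j$, with $f_i^{-1}(u)=f_j^{-1}(u)$ and $f_i^{-1}(u^\prime)=f_j^{-1}(u^\prime)$; call these query vertices $v_u$ and $v_{u^\prime}$. The first joinability condition also forbids any query vertex from being mapped to distinct data vertices by $f_i$ and $f_j$, so the two functions agree (are consistent) on their common matched vertices. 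I would then use Condition 6 of Definition \ref{def:localmaximal} (the weakly-connected-path condition) inside each fragment to force a contradiction at the shared crossing edge.

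The key steps, in order: First, establish that since $f_i(v)$ is an \emph{internal} vertex of $F_i$, Condition 5 of Definition \ref{def:localmaximal} forces every query-neighbor of $v$ to be matched (non-$NULL$) in $PM_i$ to a neighbor of $f_i(v)$ in $PM_i$; by induction along $Q$ (which is connected), this propagates so that the entire "internal core" of $PM_i$ reachable from $v$ through internal vertices is large — in particular, by Condition 6, any query vertex whose $PM_i$-image is internal is joined to $v$ by a weakly connected path in $Q$ all of whose vertices map to internal vertices of $F_i$. Second, do the symmetric analysis in $F_j$. Third, look at the shared crossing edge $\overrightarrow{uu^\prime}$: by Definition \ref{def:joinrelation} Condition 2, in $F_i$ the endpoint pattern is (internal, extended) while in $F_j$ it is (extended, internal) — so $v_u$ has internal $PM_i$-image $u$ but extended $PM_j$-image, and $v_{u^\prime}$ has extended $PM_i$-image but internal $PM_j$-image $u^\prime$. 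Fourth, derive the contradiction: since $v$ maps to an internal vertex in $F_i$ and $v_{u^\prime}$ also occurs in $PM_i$ (it is the endpoint $u^\prime$, an extended vertex), trace a weakly connected path in $Q$ from $v$ to $v_{u^\prime}$; somewhere along this path the $PM_i$-image must transition from internal to extended, and I argue the last internal vertex $w$ on that path has a query-neighbor $w'$ with extended $PM_i$-image, which by Condition 5 must still be matched — fine — but then apply the same reasoning in $F_j$, where $v$'s image is also internal, and the path from $v$ to $v_u$ (whose $PM_j$-image $u$ is extended) forces the symmetric transition. The tension is that the consistency of $f_i$ and $f_j$ on shared matched query vertices, combined with both sides insisting $v$ sits in an all-internal weakly connected neighborhood, cannot be reconciled with the crossing edge straddling both fragments: a single query edge cannot have both of its $Q$-endpoints map to internal vertices of $F_i$ and to internal vertices of $F_j$ simultaneously once we chase connectivity, because internal vertices of different fragments are disjoint ($V_i \cap V_j = \emptyset$ by Definition \ref{def:internaldistributedgraph}).

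I expect the main obstacle to be making the "propagation of internalness along $Q$" argument fully rigorous — specifically, showing that if $f_i(v)$ is internal then every query vertex connected to $v$ in $Q$ that gets matched in $PM_i$ at all is forced (via iterated application of Conditions 5 and 6) to have an \emph{internal} image in $F_i$, so that $PM_i$ contains \emph{no} crossing edge through the component of $v$, directly contradicting the joinability requirement that the shared crossing edge $\overrightarrow{uu^\prime}$ be present in $PM_i$ with $v$ reachable from its endpoints. One has to be careful about whether $v$ is actually in the same weakly connected component of $Q$ as the crossing edge; but since $PM_i$ is itself a \emph{connected} subgraph (Definition \ref{def:localmaximal}) and $v$'s image lies in it, and the crossing edge lies in it, connectivity of $PM_i$ pulls the path through, and then Condition 6 upgrades that connectivity to a path witnessed in $Q$ with the internal-image property — which is exactly the contradiction, since the crossing edge's endpoints cannot both be internal to $F_i$. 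Once that lemma-like step is nailed down, the rest is bookkeeping with Definition \ref{def:joinrelation}.
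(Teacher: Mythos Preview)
Your approach is far more intricate than necessary, and the central lemma you plan to prove --- that if $f_i(v)$ is internal then, by iterated application of Conditions~5 and~6, \emph{every} matched query vertex in the component of $v$ has an internal image in $F_i$ --- is in fact false. Condition~5 only forces query-neighbors of an internal-image vertex to be \emph{matched}; it does not force them to be matched to \emph{internal} vertices. Indeed, the whole point of a local partial match is that it contains at least one crossing edge (Condition~4), so by design some matched query vertices map to extended vertices while others map to internal ones. The running example $PM_1^2$ in the paper already witnesses this: $v_2\mapsto 001$ is internal while $v_1\mapsto 002$ is extended. Hence the propagation argument you sketch cannot go through, and the contradiction you are aiming for at the crossing edge never materializes.

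The paper's proof is a two-line case split that you have overlooked. Either $f_i(v)\neq f_j(v)$, in which case the same query vertex $v$ is sent to two distinct data vertices and Condition~1 of Definition~\ref{def:joinrelation} is violated directly; or $f_i(v)=f_j(v)$, in which case this single data vertex is simultaneously internal to $F_i$ and internal to $F_j$. Since the internal vertex sets $V_i$ and $V_j$ are disjoint for $i\neq j$ (Definition~\ref{def:internaldistributedgraph}), this forces $F_i=F_j$, and then $PM_i$ and $PM_j$ come from the same fragment --- but joinability is only defined for distinct fragments (and, as argued at the start of Section~\ref{sec:basicjoin}, same-fragment partial matches can never be assembled anyway, by Proposition~\ref{proposition:maximal}). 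That is the entire argument; no connectivity or propagation machinery is needed.
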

\begin{proof}If $f_i(v)\ne f_j(v)$, then a vertex $v$ in query $Q$ matches two different vertices in $PM_i$ and $PM_j$, respectively. Obviously, $PM_i$ and $PM_j$ cannot be joinable.

If $f_i(v)= f_j(v)$, since $f_i(v)$ and $f_j(v)$ are both internal vertices, both $PM_i$ and $PM_j$ are from the same fragment. As mentioned earlier, it is impossible to assemble two local partial matches from the same fragment (see the first paragraph of Section \ref{sec:basicjoin}), thus, $PM_i$ and $PM_j$ cannot be joinable. $\Box$
\end{proof}

\begin{example}
Figure \ref{fig:dividingpmgroup} shows the serialization vectors (defined in Definition \ref{def:localmaximal}) of four local partial matches. For each local partial match, there is an internal vertex that  matches $v_1$ in query graph. The underline indicates the extended vertex in the local partial match. According to Theorem \ref{theorem:pmpartition}, none of them are joinable.
\end{example}

\begin{figure}[h]
\begin{center}
\vspace{-0.1in}
    \includegraphics[scale=0.15]{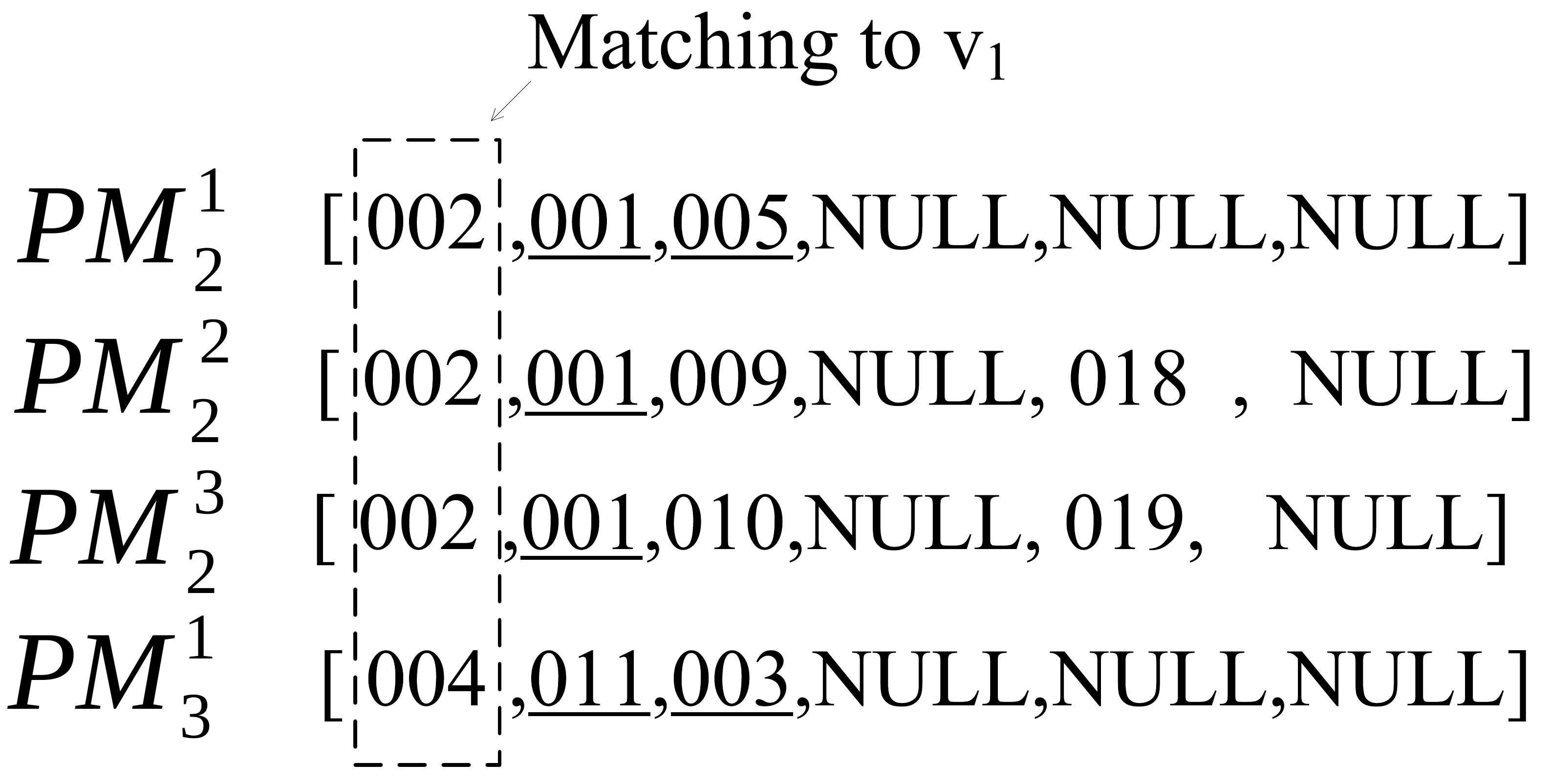}
    \vspace{-0.1in}
   \caption{The Local Partial Match Partition on $v_1$}
   \vspace{-0.2in}
   \label{fig:dividingpmgroup}
\end{center}
\end{figure}

\begin{definition}\label{def:lpmgroup} \textbf{(Local Partial Match Partitioning)}.
Consider a SPARQL query $Q$ with $n$ vertices $\{v_1,...,v_n\}$. Let $\Omega$ denote all local partial matches. $\mathcal{P}=\{P_{v_1},...,P_{v_n}\}$ is a partitioning of $\Omega$ if and only if the following conditions hold.
\begin{enumerate}
\item Each partition $P_{v_i}$ ($i=1,...,n$) consists of a set of local partial matches, each of which has an internal vertex that matches $v_i$.
\item $P_{v_i}  \cap P_{v_j}  = \emptyset$, where $1 \le i \ne j \le n$.
\item $P_{v_1}  \cup ... \cup P_{v_n}  = \Omega$

\end{enumerate}
\end{definition}
\nop{
We refer to each $P_{v_i}$ as a partition.
}

\begin{figure*}%
   \centering
   \subfigure[][]{%
      \includegraphics[scale=0.15]{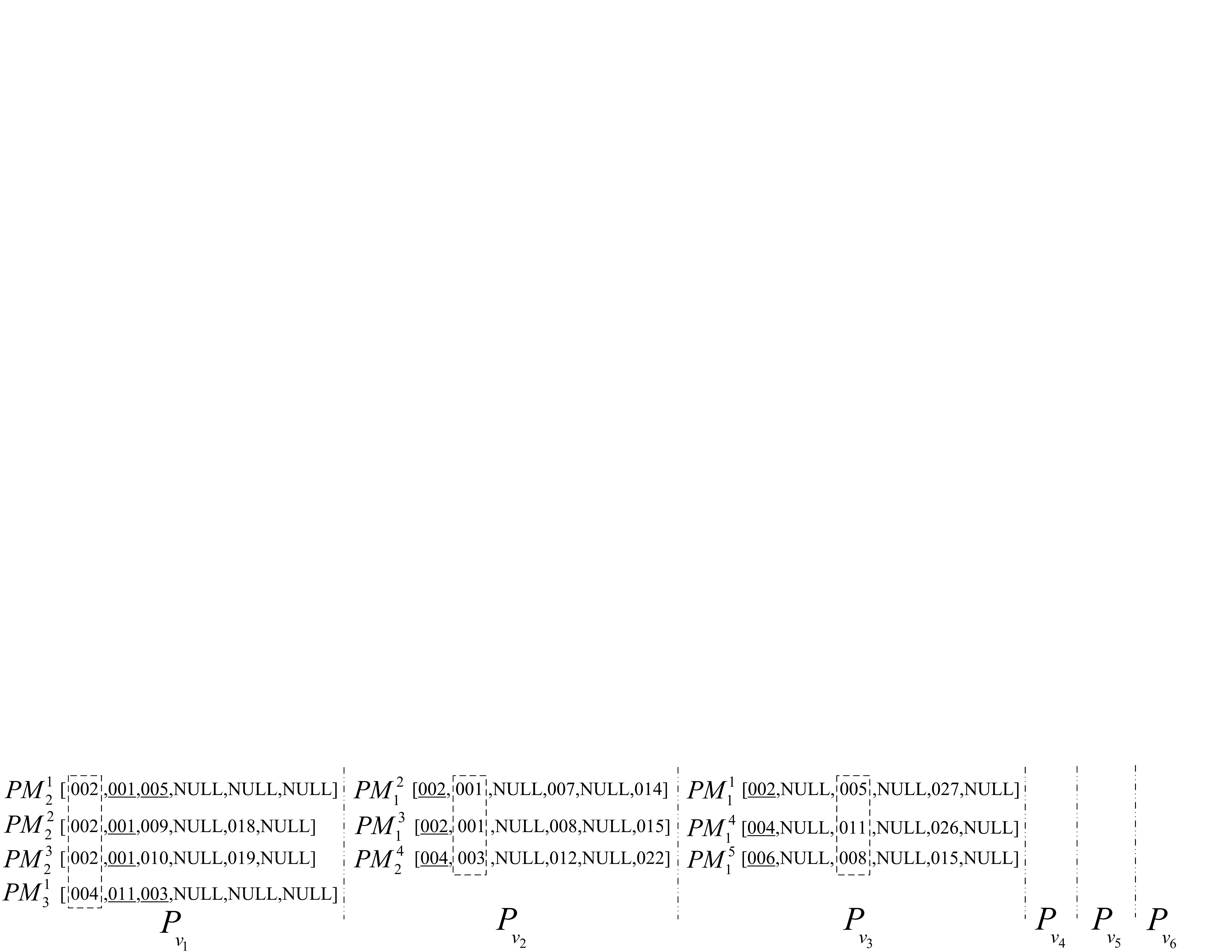}
       \label{fig:groupuset1}%
       }%
       \\
   \subfigure[][]{%
      \includegraphics[scale=0.15]{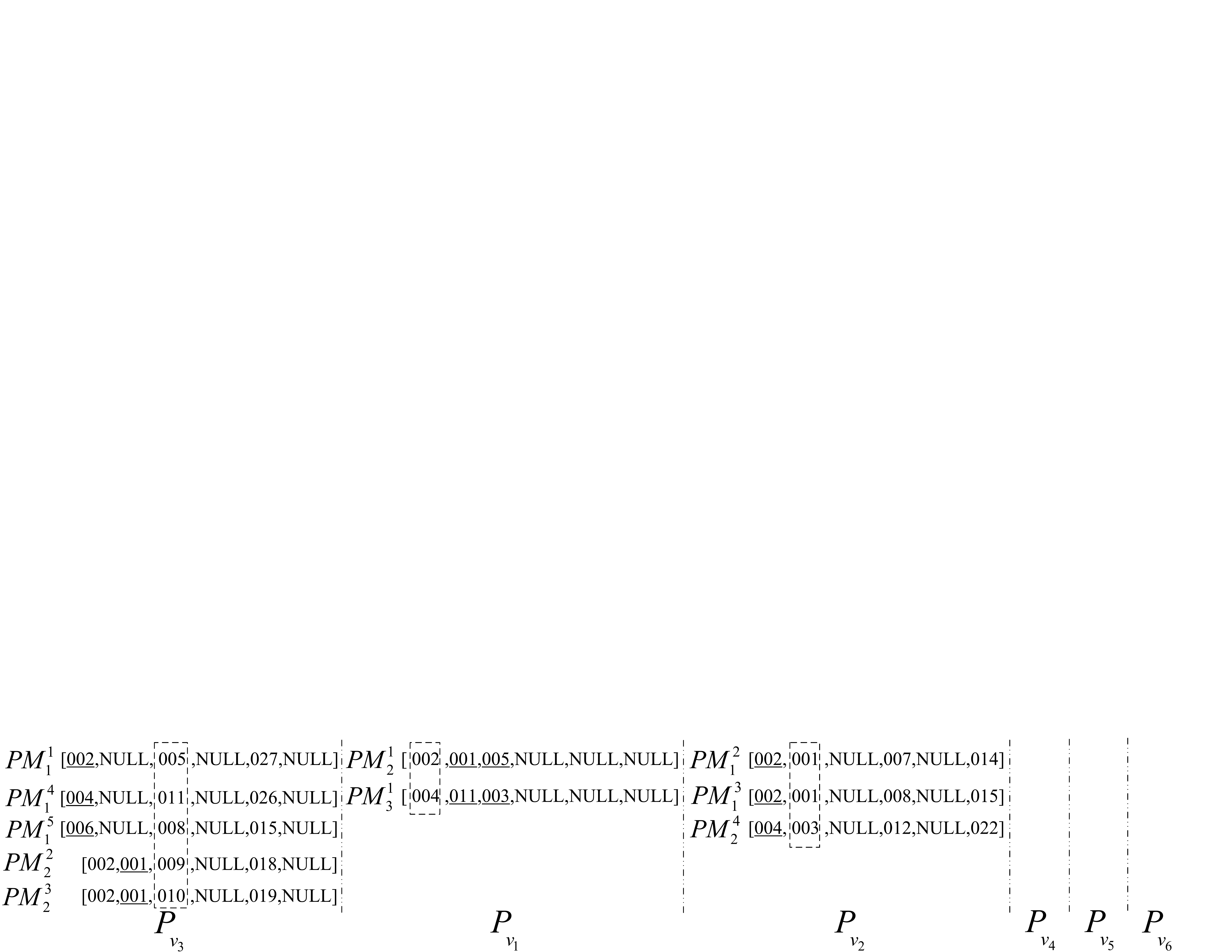}
       \label{fig:groupuset2}%
       }%
 \caption{\small Evaluation of Two Partitionings of Local Partial Matches}%
 \label{fig:allreducedpmgroups}
\end{figure*}

\begin{example} Let us consider all local partial matches of our running example in Figure \ref{fig:pmset}. Figure \ref{fig:allreducedpmgroups} shows two different partitionings.
\end{example}

As mentioned earlier, we only need to consider joining local partial matches from different partitions of $\mathcal{P}$. Given a partitioning $\mathcal{P}=\{P_{v_{1}},...,P_{v_{n}}\}$, Algorithm 3 shows how to perform partitioning-based join of local partial matches. Note that different partitionings and the different join orders in the partitioning will impact the performance of Algorithm 3. In Algorithm 3, we assume that the partitioning $\mathcal{P}=\{P_{v_{1}},...,P_{v_{n}}\}$ is given, and that the join order is from $P_{v_1}$ to $P_{v_n}$, i.e. the order in $\mathcal{P}$. Choosing a good partitioning and the optimal join order will be discussed in Sections \ref{sec:optimalpartition} and  \ref{sec:analysis}.

\begin{algorithm} \label{alg:advancedmerging}
\caption{Partitioning-based Joining Local Partial Matches}

\KwIn{A partitioning $\mathcal{P}=\{P_{v_{1}},...,P_{v_{n}}\}$ of all local partial matches.}
\KwOut{ All crossing matches set $RS$.}
$MS \gets P_{v_{1}}$ \\
\For{$i \gets 2$ to $n$}
{
$MS^\prime \gets \emptyset$\\
\For{each intermediate result $PM$ in $MS$}
{
\For{each local partial match $PM^\prime$ in $P_{v_i}$}
     {
               \If{$PM$ and $PM^\prime$ are joinable}
      {
         Set $PM^{\prime\prime}$ $\gets$ $PM \Join PM^\prime$\\
          \If{$PM^{\prime\prime}$ is a complete match}
         {
           Put $PM^{\prime\prime}$ into the answer set $RS$ \\
         }
         \Else
         {
           Put $PM^{\prime\prime}$ into $MS$
         }

      }
      Put $PM^{\prime}$ into $MS$
     }
 }

 Insert $MS$ into $MS$
 }
Return $RS$
\end{algorithm}

The basic idea of Algorithm 3 is to iterate the join process on each partition of $\mathcal{P}$. First, we set $MS$ $\gets$ $P_{v_{1}}$ (Line 1 in Algorithm 3). Then, we try to join local partial match $PM$ in $MS$ with local partial match $PM^{\prime}$ in $P_{v_{2}}$ (the first loop of Line 3-13). If the join result is a complete match, it is inserted into the answer set $RS$ (Lines 8-9). If the join result is an intermediate result, we insert it into a temporary set $MS^{\prime}$ (Lines 10-11). We also need to insert $PM^{\prime}$ into $MS^{\prime}$, since the local partial match $PM^{\prime}$ (in $P_{v_{2}}$) will join local partial matches in the later partition of $\mathcal{P}$ (Line 12). At the end of the iteration, we insert all intermediate results (in $MS^{\prime}$) into $MS$, which will join local partial matches in the later partition of $\mathcal{P}$ in the next iterations (Line 13). We iterate the above steps for each partition of $\mathcal{P}$ in the partitioning (Lines 3-13).

\subsubsection{Finding the Optimal Partitioning}\label{sec:optimalpartition}

Obviously, given a set $\Omega$ of local partial matches, there may be multiple feasible local partial match partitionings, each of which leads to a different join performances. In this subsection, we discuss how to find the ``optimal'' local partial match partitioning over $\Omega$, which can minimize the joining time of Algorithm 3.

First, there is a need for a measure that would define more precisely the \emph{join cost} for a local partial match partitioning. We define it as follows.

\begin{definition}\textbf{(Join Cost)}.\label{def:joincost} Given a query graph $Q$ with $n$ vertices $v_1$,...,$v_n$ and a partitioning $\mathcal{P}=\{P_{v_1}$,...,$P_{v_n}\}$ over all local partial matches $\Omega$, the join cost is
\begin{equation}\label{equ:joincost}
Cost({\Omega}) = O(\prod\nolimits_{i = 1}^{i = n} {(|P_{v_i} |}  + 1))
\end{equation}
where $|P_{v_i}|$ is the number of local partial matches in $P_{v_i}$ and 1 is introduced to avoid the ``0'' element in the product.
\end{definition}

Definition \ref{def:joincost} assumes that each pair of local partial matches (from different partitions of $\mathcal{P}$) are joinable so that we can quantify the worst-case performance. Naturally, more sophisticated and more realistic cost functions can be used instead, but, finding the most appropriate cost function is a major research issue in itself and outside the scope of this paper.

\begin{example} The cost of the partitioning in Figure \ref{fig:groupuset1} is $5\times 4 \times 4=80$, while that of Figure \ref{fig:groupuset2} is $6\times 3 \times 4=72$. Hence, the partitioning in Figure \ref{fig:groupuset2} has lower join cost.
\end{example}

Based on the definition of join cost, the ``optimal'' local partial match partitioning is one with the minimal join cost. We formally define the \emph{optimal partitioning} as follows.

\begin{definition} \textbf{(Optimal Partitioning)}. Given a partitioning $\mathcal{P}$ over all local partial matches $\Omega$, $\mathcal{P}$ is the optimal partitioning if and only if there exists no another partitioning that has smaller join cost.
\end{definition}

Unfortunately, Theorem \ref{theorem:npc} shows that finding the optimal partitioning is NP-complete.

\begin{theorem} \label{theorem:npc} Finding the optimal partitioning is NP-complete problem.\end{theorem}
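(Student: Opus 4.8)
I would prove NP-completeness of the decision version of the problem: given the set $\Omega$ of local partial matches, the query $Q$ with $n$ vertices, and a bound $B$, decide whether there is a valid partitioning $\mathcal{P}$ with $Cost(\Omega)=\prod_{i=1}^{n}(|P_{v_i}|+1)\le B$. Membership in NP is immediate: a certificate is the assignment of every $PM\in\Omega$ to a partition index; one checks in polynomial time that each $PM$ assigned to $P_{v_i}$ indeed has an internal vertex mapped to $v_i$ (so $\mathcal{P}$ is a valid partitioning in the sense of Definition~\ref{def:lpmgroup}), and that the resulting product does not exceed $B$.

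For hardness I would reduce from \textsc{Exact Cover by 3-Sets} (X3C): given a ground set $X$ with $|X|=3q$ and a family $\mathcal{C}=\{C_1,\dots,C_m\}$ of $3$-element subsets of $X$, decide whether some sub-family of size $q$ partitions $X$. The reduction introduces one query vertex $v_{C_j}$ per set $C_j$ (so $n=m$) and one local partial match $PM_x$ per element $x\in X$, designed so that the set of query vertices to which $PM_x$ has an internal-vertex mapping is exactly $\{v_{C_j}\,:\,x\in C_j\}$. Consequently, in any valid partitioning $PM_x$ may be placed only in some $P_{v_{C_j}}$ with $x\in C_j$, and each bin $P_{v_{C_j}}$ can receive at most the three elements of $C_j$; in particular $|P_{v_{C_j}}|\le 3$ always. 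Set $B=4^q$.

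The core of the argument is the cost analysis. Writing the nonempty-bin sizes as $s_1,\dots,s_t$, each in $\{1,2,3\}$ and summing to $3q$, minimizing $\prod_j(s_j+1)$ is the same as minimizing $\sum_j\log(s_j+1)$; a short calculation, using $\tfrac13\log 4<\tfrac12\log 3<\log 2$ so that ``mass'' is cheapest in size-$3$ bins, shows this sum is minimized uniquely by the multiset of exactly $q$ threes, giving minimum value $\log 4^q$ — hence $Cost(\Omega)\ge 4^q$ for every valid partitioning, with equality iff there are exactly $q$ nonempty bins all of size $3$. Such a configuration means $q$ sets of $\mathcal{C}$ are each filled with all three of their elements while every element is assigned exactly once, i.e., an exact cover; conversely an exact cover yields such a partitioning of cost $4^q$. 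Thus the constructed instance admits a partitioning of cost $\le B$ iff the X3C instance is a yes-instance, and the reduction is clearly polynomial, so finding the optimal partitioning is NP-hard; with membership in NP it is NP-complete.

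\textbf{Main obstacle.} The delicate step is the gadget realization: one must exhibit an RDF graph with a fragmentation and the query $Q$ whose set of local partial matches is exactly $\{PM_x:x\in X\}$ with the prescribed internal-vertex signatures, while respecting Conditions 4--6 of Definition~\ref{def:localmaximal} (each $PM_x$ must contain a crossing edge, every $Q$-neighbor of an internal vertex must be matched, and the internal part must be weakly connected), and without creating spurious extra local partial matches that would perturb the product. I would handle this by taking $Q$ to be a suitably padded graph in which every signature set $\{v_{C_j}:x\in C_j\}$ induces a connected subgraph, placing the realization of each $PM_x$ in its own isolated region of $G$ so the realizations cannot interfere, and attaching minimal extended-vertex and crossing-edge gadgets so that Conditions 4--6 are met exactly; then verify Definition~\ref{def:localmaximal} for each $PM_x$. (Alternatively, one may observe that the partitioning subproblem's input is just the collection of local-partial-match vectors, so the abstract construction already suffices.)
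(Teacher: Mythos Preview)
Your argument is correct and takes a genuinely different route from the paper's. The paper argues hardness via 0-1 integer programming: it sets up a bipartite graph between local partial matches and query vertices, writes the partitioning problem as $\min \prod_{i}\bigl(\sum_j x_{ji}+1\bigr)$ subject to $\sum_i x_{ji}=1$ for all $j$, and asserts equivalence with a ``classical NP-complete'' integer program. As written this is really a \emph{formulation} of partitioning as an IP rather than a reduction in the hardness direction, so it leans on the reader accepting that this specific product-objective assignment IP is itself NP-hard. Your \textsc{X3C} reduction, by contrast, is self-contained and makes the hardness explicit: one query vertex per 3-set, one local partial match per ground element, and the inequality chain $\tfrac13\log 4<\tfrac12\log 3<\log 2$ (equivalently, the local swaps $1{+}1{+}1\to 3$, $1{+}2\to 3$, $2{+}2{+}2\to 3{+}3$ each strictly decrease the product) pins the unique minimum at $q$ triples, so cost $\le 4^q$ is attainable iff an exact cover exists. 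The one place your proof needs care is exactly where you flag it---realizing the eligibility structure as an actual RDF/query/fragmentation instance satisfying Conditions~4--6 of Definition~\ref{def:localmaximal} without spurious matches---but your parenthetical observation that the optimal-partitioning subproblem of Section~\ref{sec:optimalpartition} takes as input $\Omega$ together with the internal-vertex signatures is in fact how the paper treats it, so the abstract construction already suffices and the gadget step can be bypassed.
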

\begin{proof} We can reduce a 0-1 integer planning problem to finding the optimal partitioning.
We build a bipartite graph $B$, which contains two vertex groups $B_1$ and $B_2$. Each vertex $a_j$ in $B_1$ corresponds to a local partial match $PM_j$ in $\Omega$, $j=1,...,|\Omega|$. Each vertex $b_i$ in $B_2$ corresponds to a query vertex $v_i$, $i=0,...,n$. We introduce an edge between $a_j$ and $b_i$ if and only if $PM_j$ has a internal vertex that is matching query vertex $b_i$. Let a variable $x_{ji}$ denote the edge label of the edge $\overline{a_jb_i}$. Figure \ref{fig:PMBipartiteGraph} shows an example bipartite graph of all local partial matches in Figure \ref{fig:pmset}.

\begin{figure}[h]
\begin{center}
    \includegraphics[scale=0.4]{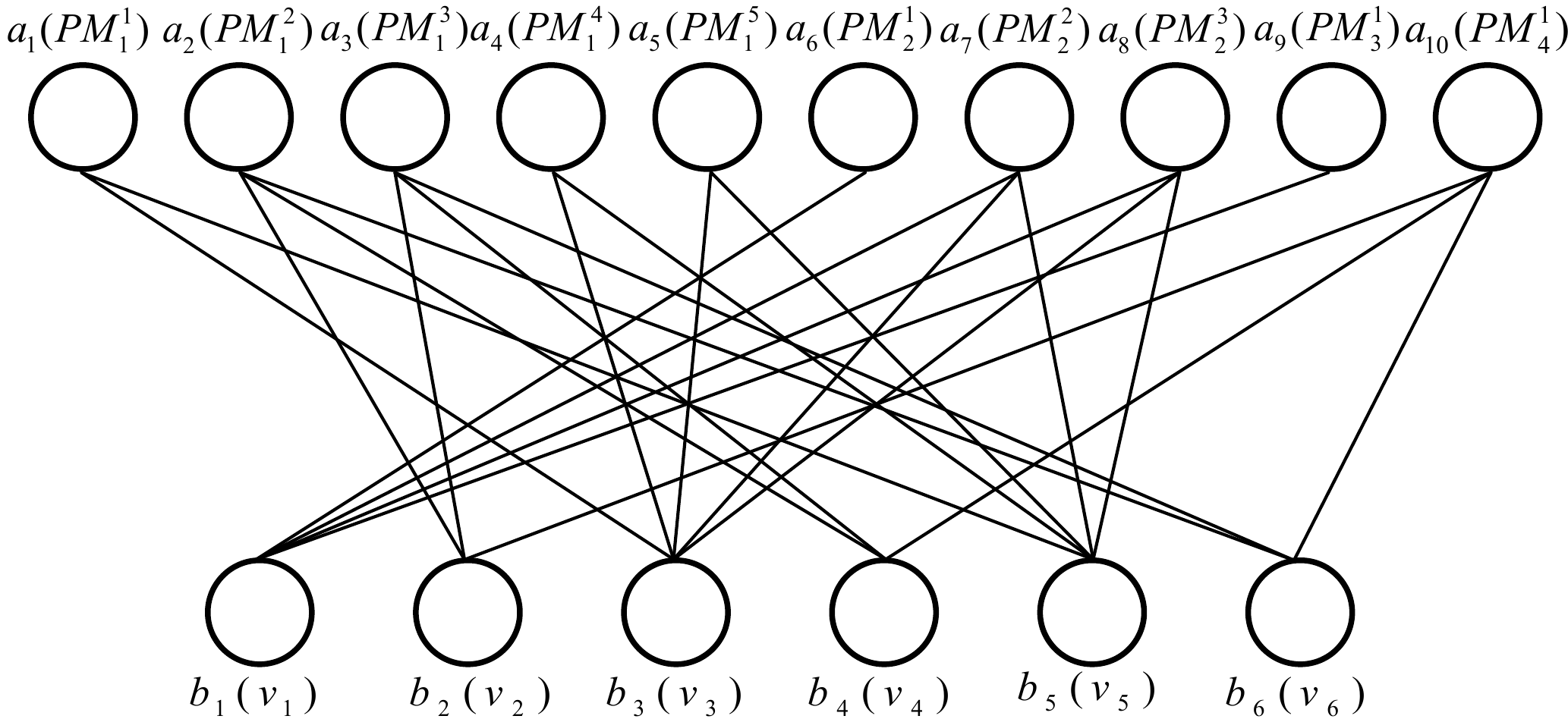}
   \caption{Example Bipartite Graph}
   \vspace{-0.3in}
   \label{fig:PMBipartiteGraph}
\end{center}
\end{figure}

We formulate the 0-1 integer planning problem as follows:
\[
\begin{array}{l}
 \min {\rm{ }}\prod\nolimits_{{\rm{i = 0}}}^{{\rm{i = n}}} {(\sum\nolimits_j {x_{ji} }  + 1)}  \\
 st.\forall j,\sum\nolimits_i {x_{ji} }  = 1 \\
 \end{array}
\]
The above equation means that each local partial match should be assigned to only one query vertex.

The equivalence between the 0-1 integer planning and finding the optimal partitioning is straightforward. The former is a classical NP-complete problem. Thus, the theorem holds.
$\Box$
\end{proof}

Although finding the optimal partitioning is NP-complete (see Theorem \ref{theorem:npc}), in this work, we propose an algorithm with time complexity $(2^n\times |\Omega|)$, where $n$ (i.e., $|V(Q)|$) is small in practice. Theoretically, this algorithm is called fixed-parameter tractable \cite{DBLP:journals/siamcomp/DowneyFVW99} \footnote{An algorithm is called fixed-parameter tractable for a problem of size $l$, with respect to a parameter $n$, if it can be solved in time $O(f(n)g(l))$, where $f(n)$ can be any function but $g(l)$ must be polynomial \cite{DBLP:journals/siamcomp/DowneyFVW99}.}.

Our algorithm is based on the following feature of optimal partitioning (see Theorem \ref{theorem:optimalpartition}). Consider a query graph $Q$ with $n$ vertices $v_1$,...,$v_n$. Let $U_{v_i}$ ($i=1,...,n$) denote all local partial matches (in $\Omega$) that have internal vertices matching $v_i$. Unlike the partitioning defined in Definition \ref{def:lpmgroup}, $U_{v_i}$ and $U_{v_j}$ ($1 \leq i\neq j \leq n$) may have overlaps. For example, $PM_2^3$ (in Figure \ref{fig:fullgrouppm}) contains an internal vertex 002 that matches $v_1$, thus, $PM_2^3$ is in $U_{v_1}$. $PM_2^3$ also has internal vertex 010 that matches $v_3$, thus, $PM_2^3$ is also in $U_{v_3}$. However, the partitioning defined in Definition \ref{def:lpmgroup} does not allow overlapping among partitions of $\mathcal{P}$.

\begin{figure}[h]
\begin{center}
\vspace{-0.1in}
    \includegraphics[scale=0.135]{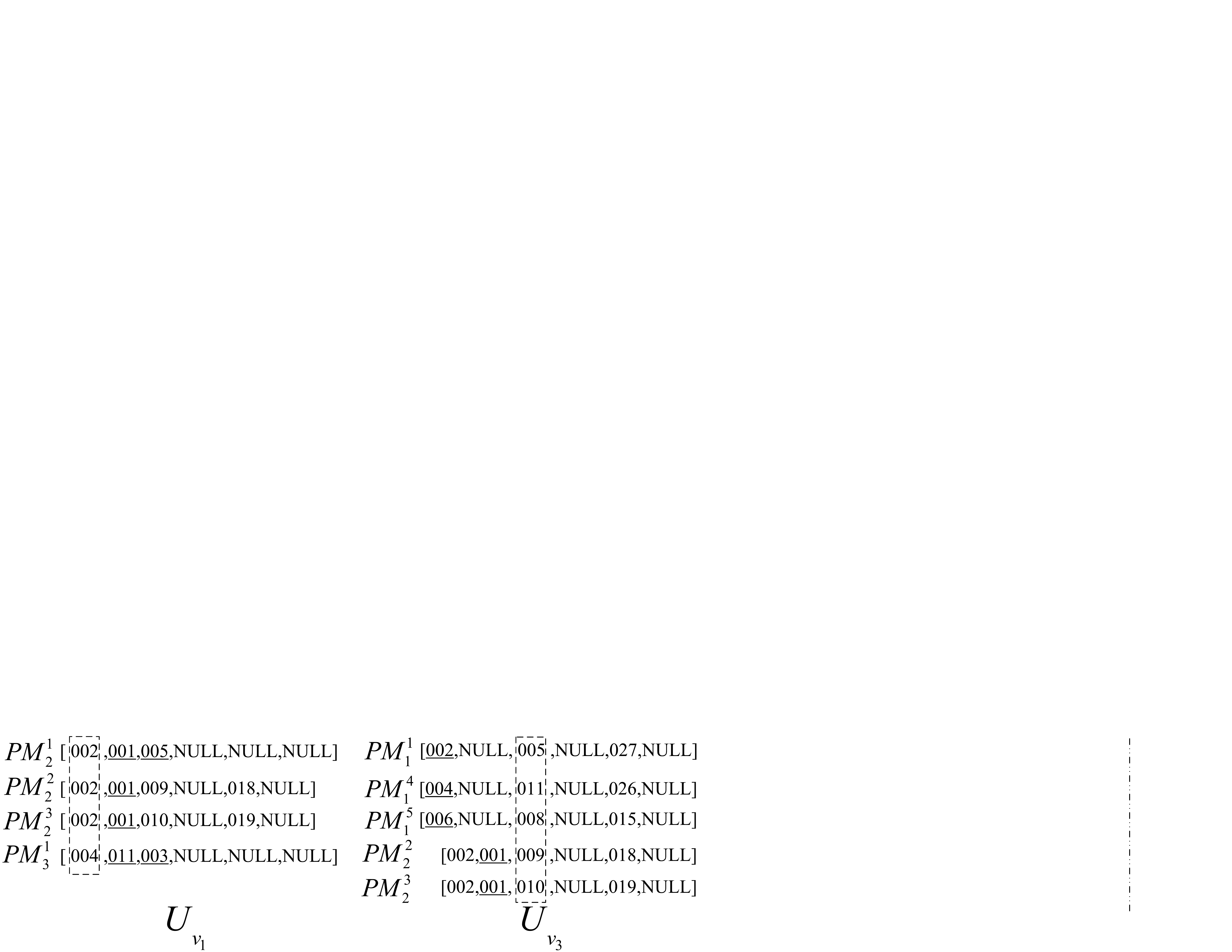}
    \vspace{-0.1in}
   \caption{$U_{v_1}$ and $U_{v_3}$}
   \vspace{-0.2in}
   \label{fig:fullgrouppm}
\end{center}
\end{figure}


\begin{theorem}\label{theorem:optimalpartition} Given a query graph $Q$ with $n$ vertices $\{v_1$,...,$v_n\}$ and a set of all local partial matches $\Omega$, let $U_{v_i}$ ($i=1,...,n$) be all local partial matches (in $\Omega$) that have internal vertices matching $v_i$. For the optimal partitioning $\mathcal{P}_{opt}=\{P_{v_1},..,P_{v_n}\}$ where $P_{v_n}$ has the largest size (i.e., the number of local partial matches in $P_{v_n}$ is maximum) in $\mathcal{P}_{opt}$, $P_{v_n} = U_{v_n}$.
\end{theorem}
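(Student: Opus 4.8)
The plan is to prove the statement by a contradiction / exchange argument. Suppose $\mathcal{P}_{opt}=\{P_{v_1},\dots,P_{v_n}\}$ is an optimal partitioning in which $P_{v_n}$ has the largest size, but $P_{v_n}\neq U_{v_n}$. Since every local partial match in $P_{v_n}$ necessarily has an internal vertex matching $v_n$ (by condition 1 of Definition \ref{def:lpmgroup}), we always have $P_{v_n}\subseteq U_{v_n}$. Hence $P_{v_n}\neq U_{v_n}$ forces the existence of at least one local partial match $PM\in U_{v_n}\setminus P_{v_n}$. By the partitioning property ($P_{v_1}\cup\dots\cup P_{v_n}=\Omega$ and the parts are disjoint), $PM$ lies in exactly one other part, say $PM\in P_{v_t}$ with $t\neq n$. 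The idea is then to move $PM$ from $P_{v_t}$ into $P_{v_n}$ and show that the resulting $\mathcal{P}'$ is still a valid partitioning with strictly smaller (or no larger) join cost, contradicting optimality — or, in the non-strict case, obtaining another optimal partitioning whose largest part is $U_{v_n}$.

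First I would check that $\mathcal{P}'=\{P_{v_1},\dots,P_{v_t}\setminus\{PM\},\dots,P_{v_n}\cup\{PM\}\}$ is a legal partitioning: conditions 2 and 3 of Definition \ref{def:lpmgroup} are immediate since we only moved one element between two parts, and condition 1 holds for the new $P_{v_n}\cup\{PM\}$ precisely because $PM\in U_{v_n}$ means $PM$ has an internal vertex matching $v_n$. Next I would compare costs using the product formula of Definition \ref{def:joincost}: writing $p_i=|P_{v_i}|$, the original cost is proportional to $\prod_{i=1}^n (p_i+1)$ and the new cost is proportional to this product with the factor $(p_t+1)(p_n+1)$ replaced by $p_t\,(p_n+2)$. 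So the sign of the change is governed by comparing $p_t(p_n+2)$ with $(p_t+1)(p_n+1)=p_t p_n+p_t+p_n+1$; expanding, $p_t(p_n+2)=p_t p_n+2p_t$, so the new cost is no larger iff $2p_t\le p_t+p_n+1$, i.e. $p_t\le p_n+1$. Since $P_{v_n}$ was chosen to have the largest size, $p_t\le p_n\le p_n+1$, so the inequality holds and $Cost(\mathcal{P}')\le Cost(\mathcal{P}_{opt})$.

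To upgrade this to an actual proof of the claimed equality, I would iterate the move: as long as $U_{v_n}\setminus P_{v_n}$ is nonempty, pick such a $PM$, relocate it into $P_{v_n}$, and argue the cost never increases while $|P_{v_n}|$ strictly increases and the total number of matches outside $P_{v_n}$ strictly decreases, so the process terminates with a partitioning $\mathcal{P}^\star$ satisfying $P^\star_{v_n}=U_{v_n}$ and $Cost(\mathcal{P}^\star)\le Cost(\mathcal{P}_{opt})$. By optimality of $\mathcal{P}_{opt}$ this forces equality of costs, so $\mathcal{P}^\star$ is itself optimal; moreover one checks that its part $U_{v_n}=P^\star_{v_n}$ is still the largest part (its size only grew, and the only parts that shrank were ones with size $\le p_n$). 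This yields an optimal partitioning of the required form, which — since the theorem statement is really an existence/structure claim about "the optimal partitioning" — establishes the result. I would also note the easy base observation that $U_{v_n}$ being the largest among the $U_{v_i}$ can be assumed after relabelling, so that the hypothesis "$P_{v_n}$ has the largest size in $\mathcal{P}_{opt}$" is consistent.

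The main obstacle I anticipate is the bookkeeping around the non-strict inequality and the exact meaning of "the optimal partitioning": if optimal partitionings are not unique, the theorem should be read as "there exists an optimal partitioning with $P_{v_n}=U_{v_n}$", and the exchange argument must be carried out carefully to ensure that after all moves the part $U_{v_n}$ is genuinely of maximum size and that no cost strictly decreased (which would contradict that we started from an optimum). A secondary subtlety is making sure the relocation does not violate condition 1 for the \emph{other} parts — but this is free, since removing an element from $P_{v_t}$ cannot break the "has an internal vertex matching $v_t$" property of the remaining members. Handling the corner case $p_t=p_n$ (equal-sized largest parts) is where one must be most careful to conclude "no larger" rather than "strictly smaller".
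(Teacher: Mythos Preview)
Your approach is essentially the same exchange argument the paper uses: pick a local partial match $PM\in U_{v_n}\setminus P_{v_n}$, move it from its current part $P_{v_t}$ into $P_{v_n}$, and compare the two products. The one place you are more cautious than necessary is the inequality analysis. You correctly reduce the comparison to $p_t(p_n+2)$ versus $(p_t+1)(p_n+1)$, whose difference is $(p_n+1)-p_t$. Since $P_{v_n}$ has the largest size, $p_t\le p_n$, hence $p_t<p_n+1$ \emph{strictly}; the new cost is therefore \emph{strictly} smaller after a single move. This already contradicts optimality of $\mathcal{P}_{opt}$, so the iteration, the discussion of non-uniqueness, and the ``corner case $p_t=p_n$'' are all unnecessary --- even when $p_t=p_n$ you get $p_n(p_n+2)=p_n^2+2p_n<(p_n+1)^2$. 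With that sharpening your argument collapses to exactly the paper's one-step contradiction.
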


\begin{proof}
(by contradiction) Assume that $P_{v_n} \ne U_{v_n}$ in the optimal partitioning $\mathcal{P}_{opt}=\{P_{v_1},..,P_{v_n}\}$. Then, there exists a local partial match $PM\notin P_{v_n}$ and $PM\in U_{v_n}$. We assume that $PM\in P_{v_j}$, $j \neq n$. The cost of $\mathcal{P}_{opt}=\{P_{v_1},..,P_{v_n}\}$ is:

\begin{equation}\label{equ:1}
Cost({\Omega})_{opt}=(\prod_{1\le i< n \land i\ne j}(|P_{v_i}|+1)) \times (|P_{v_j}|+1) \times (|P_{v_n}|+1)
\end{equation}

Since $PM\in U_{v_n}$, $PM$ has an internal vertex matching $v_n$. Hence, we can also put $PM$ into $P_{v_n}$. Then, we get a new partitioning $\mathcal{P}^\prime=\{P_{v_1},...,P_{v_j}- \{PM\},...,,P_{v_n}\cup \{PM\}\}$. The cost of the new partitioning is:

\begin{equation}\label{equ:2}
Cost(\Omega) =(\prod_{1\le i< n \land i\ne j}(|P_{v_i}|+1)) \times |P_{v_j}| \times (|P_{v_n}|+2)
\end{equation}

Let $C=\prod_{1\le i< n \land i\ne j}(|P_{v_i}|+1)$, which exists in both Equations \ref{equ:1} and \ref{equ:2}. Obviously, $C >0$.

\[
\begin{array}{l}
 Cost(\Omega)_{opt} - Cost(\Omega) \\
  = C \times (|P_{v_n } | + 1) \times (|P_{v_j } | + 1) - C \times (|P_{v_n } | + 2) \times (|P_{v_j } |) \\
  = C \times (|P_{v_n } | + 1 - |P_{v_j } |) \\
 \end{array}
\]

Because ${P}_{v_n}$ is the largest partition in $\mathcal{P}_{opt}$, $|P_{v_n}| + 1 - |P_{v_j}|>0$. Furthermore, $C >0$. Hence, $Cost(\Omega)_{opt}-Cost({\Omega})>0$, meaning that the optimal partitioning has larger cost. Obviously, this cannot happen.

Therefore, in the optimal partitioning $\mathcal{P}_{opt}$, we cannot find a local partial match $PM$, where $|P_{v_n}|$ is the largest, $PM \notin P_{v_n}$ and $PM \in U_{v_n}$.  In other words, $P_{v_n}=U_{v_n}$ in the optimal partitioning. $\Box$
\end{proof}

Let $\Omega$ denote all local partial matches. Assume that the optimal partitioning is $\mathcal{P}_{opt}=\{P_{v_1},P_{v_2},...,P_{v_n}\}$. We re-order the partitions of $\mathcal{P}_{opt}$ in non-descending order of sizes, i.e., $\mathcal{P}_{opt}=\{P_{v_{k_1}},...,P_{v_{k_n}}\}$, $|P_{v_{k_1}}| \ge |P_{v_{k_2}}| \ge ... \ge |P_{v_{k_n}}|$. According to Theorem \ref{theorem:optimalpartition}, we can conclude that  $P_{v_{k_1}}=U_{v_{k_1}}$ in the optimal partitioning $\mathcal{P}_{opt}$.

Let  $\Omega_{\overline{v_{k_1}}}=\Omega - U_{v_{k_1}}$, i.e.,  the set of local partial matches excluding the ones with an internal vertex matching $v_{k_1}$. It is straightforward to know $Cost(\Omega)_{opt}=|P_{v_{k_1}}| \times Cost(\Omega_{\overline{v_{k_1}}})_{opt}=|U_{v_{k_1}}| \times Cost(\Omega_{\overline{v_{k_1}}})_{opt}$. In the optimal partitioning over $\Omega_{\overline{v_{k_1}}}$, we assume that $P_{v_{k_2}}$ has the largest size. Iteratively, according to Theorem \ref{theorem:optimalpartition}, we know that $P_{v_{k_2} } = U^{\prime}_{v_{k_2}}$, where $U^{\prime}_{v_{k_2}}$ denotes the set of local partial matches with an internal vertex matching $v_{k_2}$ in $\Omega_{\overline{v_{k_1}}}$.

According to the above analysis, if a vertex order is given, the partitioning over $\Omega$ is fixed. Assume that the optimal vertex order that leads to minimum join cost is given as $\{v_{k_1},..., v_{k_n}\}$. The partitioning algorithm work as follows.

Let $U_{v_{k_1}}$ denote all local partial matches (in $\Omega$) that have internal vertices matching vertex $v_{k_1}$\footnote{When we find local partial matches in fragment $F_i$ and send them to join, we tag which vertices in local partial matches are internal vertices of $F_i$.}. Obviously, $U_{v_{k_1}}$ is fixed if $\Omega$ and the vertex order is given. We set $P_{v_{k_1}}=U_{v_{k_1}}$.  In the second iteration, we remove all local partial matches in $U_{v_{k_1}}$ from $\Omega_{\overline{v_{k_1}}}$, i.e,  $\Omega_{\overline{v_{k_1}}}=\Omega - U_{v_{k_1}}$. We set $U_{v_{k_2}}^{\prime}$ to be all local partial matches (in $\Omega_{\overline{v_{k_1}}}$) that have internal vertices matching vertex $v_{k_2}$. Then, we set $P_{v_{k_2}}=U_{v_{k_2}}^{\prime}$. Iteratively, we can obtain $P_{v_{k_3}},..., P_{v_{k_n}}$.

\begin{example}
Consider all local partial matches in Figure \ref{fig:LPMPartitioning}. Assume that the optimal vertex order is $\{v_3,v_1,v_2\}$. We will discuss how to find the optimal order later. In the first iteration, we set $P_{v_{3}}=U_{v_{3}}$, which contains five matches. For example, $PM_1^1=[\underline{002}\footnote{We underline all extended vertices in serialization vectors.},NULL,005,NULL,$ $027,NULL]$ is in $U_{v_3}$, since internal vertex $005$ matches $v_3$. In the second iteration, we set $\Omega_{\overline{v_3}}=\Omega-P_{v_{3}}$. Let $U_{v_{1}}^{\prime}$ to be all local partial matches in $\Omega_{\overline{v_3}}$ that have internal vertices matching vertex $v_{1}$. Then, we set $P_{v_{1}}=U_{v_{1}}^{\prime}$. Iteratively, we can obtain the partitioning $\{P_{v_{3}},P_{v_{1}}, P_{v_{2}}\}$, as shown in Figure \ref{fig:LPMPartitioning}.
\end{example}
\vspace{0.1in}
Therefore, the challenging problem is how to find the optimal vertex order $\{v_{k_1},...,v_{k_n}\}$. Let us denote by $\Omega_{\overline{v_{k_1} } }$ all local partial matches (in $\Omega$) that do not contain internal vertices matching $v_{k_1}$, i.e., $\Omega_{\overline{v_{k_1} } }= \Omega - U_{v_{k_1}}$.
It is straightforward to have the following \emph{optimal substructure}\footnote{ A problem is said to have \emph{optimal substructure} if an optimal solution can be constructed efficiently from optimal solutions of its subproblems \cite{DBLP:books/daglib/0023376}. This property is often used in dynamic programming formulations. } in Equation \ref{equ:optimalsubstructure}.

\nop{
Based on Theorem \ref{theorem:optimalpartition}, we propose a dynamic programming algorithm (Algorithm 4 with Function ComCost). The intuition of the algorithm is as follows. Assume that the optimal partitioning is $\mathcal{P}_{opt}=\{P_{v_1},P_{v_2},...,P_{v_n}\}$. We re-order the partitions of $\mathcal{P}_{opt}$ in non-descending order of sizes, i.e., $\mathcal{P}_{opt}=\{P_{v_{k_1}},...,P_{v_{k_n}}\}$, $|P_{v_{k_1}}| \ge |P_{v_{k_2}}| \ge ... \ge |P_{v_{k_n}}|$. According to Theorem \ref{theorem:optimalpartition}, $P_{v_{k_1}}=U_{v_{k_1}}$. Let us denote by $\Omega_{\overline{v_{k_1} } }$ all local partial matches (in $\Omega$) that do not contain internal vertices matching $v_{k_1}$, i.e., $\Omega_{\overline{v_{k_1} } }= \Omega - U_{v_{k_1}}$. }

\begin{equation}\label{equ:optimalsubstructure}
\begin{array}{l}
 Cost(\Omega )_{opt}  = |P_{v_{k_1 } } | \times Cost(\Omega _{\overline {v_{k_1 } } } )_{opt}  \\
  = |U_{v_{k_1 } } | \times Cost(\Omega _{\overline {v_{k_1 } } } )_{opt}  \\
 \end{array}
\end{equation}

Since we do not know which vertex is $v_{k_1}$, we introduce the following optimal structure that is used in our dynamic programming algorithm (Lines 3-7 in Algorithm 4 ).

\begin{equation}\label{equ:optimalsubstructure1}
\begin{array}{l}
 Cost(\Omega )_{opt}  = MIN_{1 \le i \le n} (|P_{v_i } | \times Cost(\Omega _{\overline {v_i } } )_{opt} ) \\
  = MIN_{1 \le i \le n} (|U_{v_i } | \times Cost(\Omega _{\overline {v_i } } )_{opt} ) \\
 \end{array}
\end{equation}

Obviously, it is easy to design a naive dynamic algorithm based on Equation \ref{equ:optimalsubstructure1}.  However, it can be further optimized by recording some intermediate results. Based on Equation \ref{equ:optimalsubstructure1}, we can prove the following equation.

\begin{equation}\label{equ:optimalsubstructuremultilevel}
\begin{array}{l}
 Cost(\Omega )_{opt}  = MIN_{1 \le i \le n;1 \le j \le n;i \ne j} (|P_{v_i } | \times |P_{v_j } | \times Cost(\Omega _{\overline {v_i v_j } } )_{opt} ) \\
  = MIN_{1 \le i \le n;1 \le j \le n;i \ne j} (|U_{v_i } | \times |U_{v_j }^{\prime}| \times Cost(\Omega _{\overline {v_i v_j } } )_{opt} ) \\
 \end{array}
\end{equation}
where $\Omega _{\overline {v_i v_j } }$ denotes all local partial matches that do not contain internal vertices matching $v_i$ or $v_j$, and $U_{v_j}^\prime$ denotes all local partial matches (in $\Omega_{\overline{v_i}}$) that contain internal vertices matching vertex $v_j$.

However, if Equation \ref{equ:optimalsubstructuremultilevel} is used naively in the dynamic programming formulation, it would result in repeated computations. For example, $Cost(\Omega _{\overline {v_1 v_2 } } )_{opt} $ will be computed twice in both $|U_{v_1} | \times |U_{v_2 }^{\prime}| \times Cost(\Omega _{\overline {v_1 v_2 } } )_{opt}$ and $
|U_{v_2 } | \times |U_{v_1 }^{\prime}| \times Cost(\Omega _{\overline {v_1 v_2 } } )_{opt} $. To avoid this, we introduce a map that records  $Cost(\Omega^\prime)$ that is already calculated (Line 16 in Function OptComCost), so that subsequent uses of $Cost(\Omega^\prime)$ can be serviced directly by searching the map (Lines 8-10 in Function ComCost).

\begin{figure*}
   \centering
   \includegraphics[scale=0.32]{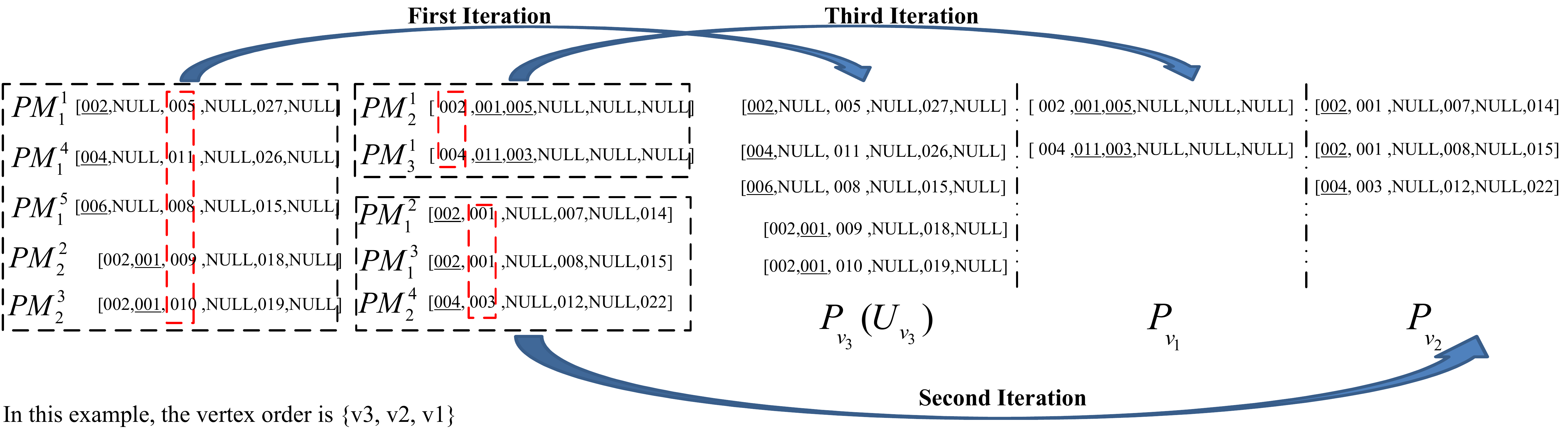}
      \vspace{-0.25in}
 \caption{Example of Partitioning Local Partial Matches}%
 \label{fig:LPMPartitioning}
\end{figure*}

We can prove that there are $
\Huge\sum_{i=1}^{n}\normalsize \left( {\begin{array}{c}
   n  \\
   i \\
\end{array}} \right) =2^n
$  items in the map (worst case), where $n=|V(Q)|$. Thus, the time complexity of the algorithm is $(2^n\times |\Omega|)$. Since $n$ (i.e., $|V(Q)|$) is small in practice, this algorithm is fixed-parameter tractable.

\begin{algorithm}[t] \label{alg:findingoptimalpartition}
\caption{Finding the Optimal Partitioning}
\small

\KwIn{All local partial matches $\Omega$}
\KwOut{The Optimal Partitioning $\mathcal{P} _{opt}$ and $Cost_{opt}(\Omega)$}
$minID \gets \Phi $\\
$Cost(\Omega)_{opt} \gets \infty $\\
\For{$i=1$ to $n$}
{
$Cost_{opt}(\Omega - U_{v_i})$, $ \mathcal{P}_i^\prime $  $\gets$ $\textbf{ComCost}(\Omega - U_{v_i}, \{v_i\} )$
 \tcc{Call Function ComCost, $U_{v_i}^{\prime}$ denotes all local partial matches (in $\Omega^{\prime}$) that have vertices match $v_i$.}

\If{$Cost(\Omega)_{opt} > |U_{v_i}| \times Cost_{opt}(\Omega - U_{v_i})$}
  {
       $Cost(\Omega)_{opt} \gets |U_{v_i}| \times Cost_{opt}(\Omega - U_{v_i})$ \\
              $minID=i $\\
  }
}
$\mathcal{P}_{opt}$  $\gets$  $\{U_{v_{minID}}\} $ $\bigcup$  $\mathcal{P}^\prime_{minID}$\\
Return $\mathcal{P}_{opt}$

\end{algorithm}

\begin{function}[t] \label{alg:functioncomcost}
\small
\caption{ComCost($\Omega^\prime, W$)}
\KwIn{local partial match set $\Omega^{\prime}$ and a set $W$ of vertices that have been used }
\KwOut{The Optimal Partitioning $\mathcal{P}^{\prime} _{opt}$ over $\Omega^\prime$ and $Cost_{opt}(\Omega^{\prime})$}
\If{$\Omega^{\prime}  =\Phi$}
{
Return  $\mathcal{P}^{\prime}_{opt} \gets \Phi$ ;   $Cost(\Omega^\prime)_{opt} \gets 1$ ;
}
\Else{
$minID \gets \Phi $\\
$Cost(\Omega^\prime)_{opt} \gets \infty $\\
\For{$i=1$ to $n$}
{
\If{$v_i   \notin W$ }
{
\If{$MAP$ consists the key ($\Omega^\prime$-$U^{\prime}_{v_i})$ }
{
\tcc{if $Cost(\Omega^\prime$-$U^{\prime}_{v_i}))_{opt}$ has been calculated before}
$Cost_{opt}(\Omega ^{\prime}- U^{\prime}_{v_i})$, $ \mathcal{P}_i^\prime $  $\gets$ $MAP[(\Omega^\prime$-$U^{\prime}_{v_i})]$ \\
 \tcc{finding the corresponding join cost and the optimal partitioning over $(\Omega^\prime$-$U^{\prime}_{v_i} )$ from the map}
}
\Else
{
$Cost_{opt}(\Omega ^{\prime}- U^{\prime}_{v_i})$, $ \mathcal{P}_i^\prime $  $\gets$ $\textbf{ComCost}(\Omega^{\prime} - U^{\prime}_{v_i}, W \cup \{v_i\} )$ \\
 \tcc{Call Function ComCost, $U_{v_i}^{\prime}$ denotes all local partial matches (in $\Omega^{\prime}$) that have vertices match $v_i$.}

}

\If{$Cost(\Omega^\prime)_{opt} > |U^{\prime}_{v_i}| \times Cost_{opt}(\Omega^{\prime} - U^{\prime}_{v_i})$}
  {
       $Cost(\Omega^{\prime})_{opt} \gets |U_{v_i}| \times Cost_{opt}(\Omega^{\prime} - U^{\prime}_{v_i})$ \\
       $minID=i $\\
  }
  }
}
$\mathcal{P}^{\prime}_{opt}$  $\gets$  $\{U_{v_{minID}}\} $ $\bigcup$  $\mathcal{P}^\prime_{minID}$\\
Insert (key=$\Omega^\prime$, value=($Cost(\Omega^\prime)_{opt}$, $\mathcal{P}^{\prime}_{opt}$) ) into the $MAP$. \\
Return  $Cost(\Omega^\prime)_{opt}$ and $\mathcal{P}^{\prime}_{opt}$
}
\end{function}

\nop{
\begin{example}
Figure \ref{fig:LPMPartitioning} shows how to find the optimal local partial match partitioning for all local partial matches in Figure \ref{fig:pmset}. In the first iteration, $U_{v_{3}}$, which contains all five matches having internal vertices matching vertex $v_3$, is the largest among $U_{v_{i}}$ ($1\le i\le 6$). Hence, we set $P_{v_{3}}=U_{v_{3}}$. In the second iteration, we set $\Omega^{\prime}=\Omega-P_{v_{3}}$ and let $U_{v_{i}}^{\prime}$ to be all local partial matches in $\Omega^{\prime}$ that have internal vertices matching vertex $v_{i}$. We find that $U_{v_{2}}^{\prime}$ is the largest. Then, we set $P_{v_{2}}=U_{v_{2}}^{\prime}$. Iteratively, we can obtain the partitioning $\{P_{v_{3}},P_{v_{1}}, P_{v_{2}}\}$, as shown in Figure \ref{fig:LPMPartitioning}.
\end{example}
}

\subsubsection{Join Order}\label{sec:analysis}

When we determine the optimal partitioning of local partial matches, the join order is also determined. If the optimal partitioning is $\mathcal{P}_{opt}=\{P_{v_{k_1}},...,P_{v_{k_n}}\}$ and $|P_{v_{k_1}}| \ge |P_{v_{k_2}}| \ge ... \ge |P_{v_{k_n}}|$, then the join order must be $P_{v_{k_1}} \Join P_{v_{k_2}} \Join ... \Join P_{v_{k_n}}$. The reasons are as follows.

First, changing the join order may not prune any intermediate results. Let us recall the example optimal partitioning $\{P_{v_{3}}, P_{v_{2}}, P_{v_{1}}\}$ shown in Figure \ref{fig:groupuset2}. The join order should be $P_{v_{3}} \Join P_{v_{2}} \Join P_{v_{1}}$, and any changes in the join order would not prune intermediate results. For example, if we first join $P_{v_{2}}$ with $P_{v_{1}}$, we can not prune the local partial matches in $P_{v_{2}}$ that can not join with any local partial matches in $P_{v_{1}}$. This is because there may be some local partial matches $P_{v_{3}}$ that have an internal vertex matching
$v_1$ and can join with local partial matches in $P_{v_{2}}$. In other words, not only the results of $P_{v_{2}} \Join P_{v_{1}}$ but also $P_{v_{2}}$ should join with $P_{v_{3}}$. Similarly, we can observe that any other changes of the join order of the partitioning have no effects.

Second, in some special cases, the join order may have an effect on the performance. Given a partitioning $\mathcal{P}_{opt}=\{P_{v_{k_1}},...,P_{v_{k_n}}\}$ and $|P_{v_{k_1}}| \ge |P_{v_{k_2}}| \ge ... \ge |P_{v_{k_n}}|$, if the set of the first $n^\prime$ vertices, $\{v_{k_1}, v_{k_2}, ..., v_{n^\prime}\}$, is a vertex cut of the query graph, the join order for the remaining $n-n^\prime$ partitions of $\mathcal{P}$ has an effect. For example, let us consider the partitioning $\{P_{v_{1}}, P_{v_{3}}, P_{v_{2}}\}$ in Figure \ref{fig:groupuset1}. If the partitioning is optimal, then both joining $P_{v_{1}}$ with $P_{v_{2}}$ first and joining $P_{v_{1}}$ with $P_{v_{3}}$ first can work. However, it is possible for their cost to be different.\footnote{Note that, in this example, their cost values are the same, but they are possible to be different.} In the worst case, if the query graph is a complete graph, the join order has no effect on the performance.

In conclusion, when the optimal partitioning is determined as $\mathcal{P}_{opt}=\{P_{v_{k_1}},...,P_{v_{k_n}}\}$ and $|P_{v_{k_1}}| \ge |P_{v_{k_2}}| \ge ... \ge |P_{v_{k_n}}|$, then the join order must be $P_{v_{k_1}} \Join P_{v_{k_2}} \Join ... \Join P_{v_{k_n}}$. The join cost can be estimated based on the cost function (Definition \ref{def:joincost}).

\subsection{Distributed Assembly}
\label{sec:distributed}\vspace{-0.05in}

An alternative to centralized assembly is to assemble the local partial matches in a distributed fashion. We adopt Bulk Synchronous Parallel (BSP) model \cite{DBLP:journals/cacm/Valiant90} to design a synchronous algorithm for distributed assembly. A BSP computation proceeds in a series of global \emph{supersteps}, each of which consists of three components: local computation, communication and barrier synchronisation. In the following, we discuss how we apply this strategy to distributed assembly.

\textbf{Local Computation}. Each processor performs some computation based on the data stored in the local memory. The computations on different processors are independent in the sense that different processors perform the computation in parallel.

Consider the $m$-th superstep. For each fragment $F_i$, let  $\Delta_{in}^{m}(F_i)$ denote all received intermediate results in the $m$-th superstep and $\Omega^{m}(F_i)$ denote all local partial matches in fragment $F_i$. In the $m$-th superstep, we join intermediate results in $\Delta_{in}^{m}(F_i)$ with local partial matches in $\Omega^{m}(F_i)$ by Algorithm 5. For each intermediate result $PM$, we check if it can join with some local partial match $PM^{\prime}$ in $\Omega^{m}(F_i) \bigcup \Delta_{in}^{m}(F_i)$. If the join result $PM^{\prime\prime}=$$PM$ $\Join$ $PM^{\prime}$ is a complete crossing match, it is returned. If the join result $PM^{\prime\prime}$ is an intermediate result, we will check if $PM^{\prime\prime}$ can further join with another local partial match in $\Omega^{m}(F_i) \bigcup \Delta_{in}^{m}(F_i)$ in the next iteration. We also insert the intermediate result $PM^{\prime\prime}$ into $\Delta_{out}^{m}(F_i)$ that will be sent to other fragments in the communication step discussed below. Of course, we can also use the partitioning-based solution (in Section \ref{sec:optimizedjoin}) to optimize join processing, but we do not discuss that due to space limitation.

\begin{algorithm}[h] \label{alg:localcomputation}
\caption{Local Computation in Each Fragment $F_i$}

\KwIn{$\Omega^m(F_i)$, the local partial matches in fragment $F_i$}
\KwOut{$RS$, the crossing matches found at this superstep; $\Delta^m_{out}(F_i)$, the intermediate results that will be sent}

Let $\Omega = \Omega^m(F_i) \cup \Delta_{in}^{m}(F_i)$\\
Set $MS$ = $\Delta_{in}^{m}(F_i)$\\

\For{$N=1$ to $|V(Q)|$}
{
    \If{$|MS|$=0}
    {
        Break;
    }
    Set $MS^\prime$ = $\phi$ \\
    \For{each intermediate result $PM$ in $MS$}
    {
      \For{each local partial match $PM^\prime$ in $\Omega^{m}(F_i) \cup \Delta_{in}^{m}(F_i)$}
      {
      \If{$PM$ and $PM^\prime$ are joinable}
      {
         $PM^{\prime\prime}\gets PM \Join PM^\prime$\\
         \If{$PM^{\prime\prime}$ is a SPARQL match}
         {
           Put $PM^{\prime\prime}$ into the answer set $RS$ \\
         }
         \Else
         {
           Put $PM^{\prime\prime}$ into $MS^\prime$
         }
      }
      }
    }

Insert $MS^\prime$ into $\Delta_{out}^m(F_i)$ \\
Clear $MS$ and $MS\gets MS^\prime$
}
$\Omega^{m+1}(F_i)=\Omega^{m}(F_i)\cup \Delta_{out}^m(F_i)$\\
Return $RS$ and $\Delta_{out}^m(F_i)$
\end{algorithm}

\textbf{Communication.} Processors exchange data among themselves. Consider the $m$-th superstep. A straightforward communication strategy is as follows. If an intermediate result $PM$ in $\Delta_{out}^{m}(F_i)$ shares a crossing edge with fragment $F_j$, $PM$ will be sent to site $S_j$ from $S_i$ (assuming fragments $F_i$ and $F_j$ are stored in sites $S_i$ and $S_j$, respectively).

However, the above communication strategy may generate duplicate results. For example, as shown in Figure \ref{fig:pmset}, we can assemble $PM_1^4$ (at site $S_1$) and $PM_3^1$ (at site $S_3$) to form a complete crossing match. According to the straightforward communication strategy, $PM_1^4$ will be sent to $S_1$ from $S_3$ to produce $PM_1^4 \Join PM_3^1$ at $S_3$. Similarly,  $PM_3^1$ is sent from $S_3$ to $S_1$ to assemble at site $S_1$. In other words, we obtain the join result $PM_1^4 \Join PM_3^1$ at both sites $S_1$ and $S_3$. This wastes resources and increases total evaluation time.

To avoid duplicate result computation, we introduce a ``divide-and-conquer'' approach. We define a \emph{total order} ($\prec$) over fragments $\mathcal{F}$ in a non-descending order of $|\Omega(F_i)|$, i.e., the number of local partial matches in fragment $F_i$ found at the partial evaluation stage.

\vspace{-0.05in}
\begin{definition}\label{def:totalorder} Given any two fragments $F_i$ and $F_j$, $F_i \prec F_j $ if and only if $|\Omega(F_i)| \leq |\Omega(F_j)|$ ($1 \leq i,j \leq n$).
\end{definition}
\vspace{-0.05in}

Without loss of generality, we assume that $F_1 \prec F_2 \prec ...\prec F_n$ in the remainder. The basic idea of the divide-and-conquer approach is as follows. Assume that a crossing match $M$ is formed by joining local partial matches that are from different fragments $F_{i_1}$,...,$F_{i_m}$, where $F_{i_1} \prec F_{i_2} \prec ...\prec F_{i_m}$ ($1 \leq i_1,...,i_m \leq n$). The crossing match should only be generated at fragment site $S_{i_m}$ rather than other fragment sites.

For example, at site $S_2$, we generate crossing matches by joining local partial matches from $F_1$ and $F_2$. The crossing matches generated at $S_2$ should not contain any local partial matches from $F_3$ or even larger fragments (such as $F_4$,...,$F_n$). Similarly, at site $S_3$, we should generate crossing matches by joining local partial matches from $F_3$ and fragments smaller than $F_3$. The crossing matches should not contain any local partial match from $F_4$ or even larger fragments (such as $F_5$,...,$F_n$).

The ``divide-and-conquer'' framework can avoid duplicate results, since each crossing match can be only generated at a single site according to the ``divided search space''. To enable the ``divide-and-conquer'' framework, we need to introduce some constraints over data communication. The transmission (of local partial matches) from fragment site $S_i$ to $S_j$ is allowed only if $F_i \prec F_j$.

Let us consider an intermediate result $PM$ in $\Delta^m_{out}(F_i)$. Assume that $PM$ is generated by joining intermediate results from $m$ different fragments  $F_{i_1},...,F_{i_m}$, where $F_{i_1} \prec F_{i_2} \prec... \prec F_{i_m}$. We send $PM$ to another fragment $F_j$ if and only if  two conditions hold: (1) $F_j > F_{i_m}$; and (2) $F_j$ shares common crossing edges with at least one fragment of $F_{i_1}$,...,$F_{i_m}$.

\textbf{Barrier Synchronisation.} All communication in the $m$-th superstep should finish before entering in the $(m+1)$-th superstep.

We now discuss the initial state (i.e., $0$-th superstep) and the system termination condition.

\textbf{Initial State}. In the 0-th superstep, each fragment $F_i$ has only local partial matches in $F_i$, i.e, $\Omega_{F_i}$.  Since it is impossible to assemble local partial matches in the same fragment, the 0-th superstep requires no local computation. It enters the communication stage directly. Each site $S_i$ sends $\Omega_{F_i}$ to other fragments according to the communication strategy that has been discussed before.

\textbf{System Termination Condition.} A key problem in the BSP algorithm is \emph{the number of the supersteps} to terminate the system. In order to facilitate the analysis, we propose using a fragmentation graph topology graph.

\vspace{-0.05in}
\begin{definition}(\textbf{Fragmentation Topology Graph}) Given a fragmentation $\mathcal{F}$ over an RDF graph $G$, the corresponding \emph{fragmentation topology graph} $T$ is defined as follows: Each node in $T$ is a fragment $F_i$, $i=1,...,k$. There is an edge between nodes $F_i$ and $F_j$ in $T$, $1 \leq i \neq j \leq n$, if and only if there is at least one crossing edge between $F_i$ and $F_j$ in RDF graph $G$.
\end{definition}
\vspace{-0.05in}

Let $Dia(T)$ be the diameter of $T$. We need at most $Dia(T)$ supersteps to transfer the local partial matches in one fragment $F_i$ to any other fragment $F_j$. Hence, the number of the supersteps in the BSP-based algorithm is $Dia(T)$.

\section{Handling General SPARQL}\label{sec:Extensions}
So far, we only consider BGP (basic graph pattern) query evaluation. In this section, we discuss how to extend our method to general SPARQL queries involving UNION, OPTIONAL and FILTER statements.

A general SPARQL query and SPARQL query results can be defined recursively based on BGP queries.

\begin{definition}\label{def:query}\textbf{(General SPARQL Query)}
Any BGP is a SPARQL query. If $Q_1$ and $Q_2$ are SPARQL queries, then expressions $(Q_1 \; AND \; Q_2)$, $(Q_1 \;  UNION \; Q_2)$, $(Q_1 \; OPT \; Q_2)$ and $(Q_1\; FILTER\; F)$ are also SPARQL queries.
\end{definition}

\begin{figure}
   \centering
      \includegraphics[scale=0.54]{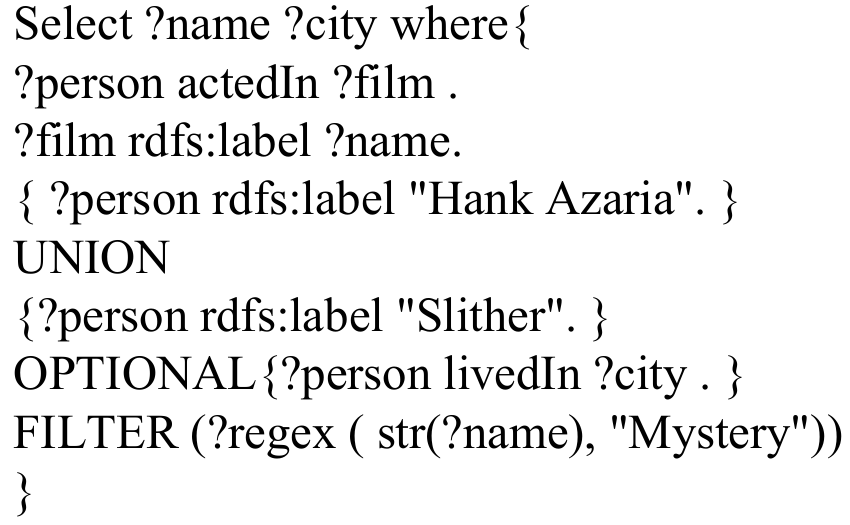}
 \caption{Example General SPARQL Query with UNION, OPTIONAL and FILTER}%
 \label{fig:GeneralSPARQLQuery}
\end{figure}

Figure \ref{fig:GeneralSPARQLQuery} shows an example general SPARQL query with multiple operators, including UNION, OPTIONAL and FILTER.
The set of all matches for $Q$ is denoted as $\llbracket Q \rrbracket$.

\begin{definition}\label{def:queryresult}\textbf{(Match of General SPARQL Query)}
Given an RDF graph $G$, the match set of a SPARQL query $Q$ over $G$, denoted as $[\kern-0.15em[ Q] \kern-0.15em]$, is defined recursively as follows:
\begin{enumerate}
\item If $Q$ is a BGP, $[\kern-0.15em[ Q] \kern-0.15em]$ is the set of matches defined in Definition 3 of Section 3.
\item If $Q=Q_1 \; AND \; Q_2$, then  $
 [\kern-0.15em[ Q ]\kern-0.15em]  =  [\kern-0.15em[ Q_1  ]\kern-0.15em]  \Join [\kern-0.15em[ Q_2  ]\kern-0.15em] $
\item If $Q=Q_1 \; UNION \; Q_2$, then  $
 [\kern-0.15em[ Q ]\kern-0.15em]  =  [\kern-0.15em[ Q_1  ]\kern-0.15em]  \cup [\kern-0.15em[ Q_2  ]\kern-0.15em]$
\item If $Q=Q_1 \; OPT \; Q_2$, then  $
 [\kern-0.15em[ Q ]\kern-0.15em]  =  ([\kern-0.15em[ Q_1  ]\kern-0.15em]  \Join [\kern-0.15em[ Q_2  ]\kern-0.15em] ) \cup ([\kern-0.15em[ Q_1  ]\kern-0.15em] \backslash [\kern-0.15em[ Q_2  ]\kern-0.15em])  $
\item If $Q=Q_1 \; FILTER \; F$, then  $
 [\kern-0.15em[ Q ]\kern-0.15em]  =   \Theta_F ([\kern-0.15em[ Q_1 ] \kern-0.15em]) $
\end{enumerate}

\end{definition}

We can parse each SPARQL query into a parse tree\footnote{We use ANTRL v3's grammar which is an implementation of the SPARQL grammar's specifications. It is available at http://www.antlr3.org/grammar/1200929755392/}, where the root is a \emph{pattern group}. A pattern group specifies a SPARQL statement, and consists of a BGP query with UNION, OPTIONAL and FILTER statements. The UNION and OPTIONAL may recursively contain multiple pattern groups. It is easy to show that each leaf node (in the parser tree) is a BGP query whose evaluation was discussed earlier. We design a recursive algorithm (Algorithm \ref{alg:handlingGeneralSPARQLs}) to find answers to handle UNION, OPTIONAL and FILTER. Specifically, we perform left-outer join between BGP and OPTIONAL query results (Lines 4-5 in Function RecursiveEvaluation). Then, we join the answer set with  UNION query results  (Line 9 in Function RecursiveEvaluation). Finally, we evaluate FILTER operator (Line 13).

\begin{algorithm} \label{alg:handlingGeneralSPARQLs}
\caption{Handling General SPARQLs}

\KwIn{A SPARQL $Q$}
\KwOut{The result set $RS$ of $Q$}

Parse $Q$ into a parser tree $T$\\
$RS=$\textbf{RecursiveEvaluation}($T$) // Call Function
\end{algorithm}

\begin{function}\label{alg:functioncom}
\small
\caption{RecursiveEvaluation($T$)}
Evaluate BGP in $T$ and put all its results into $RS$ \\
\For{each subtree $T^\prime$ in OPTIONAL statement of $T$}
{// \emph{Handling OPTIONAL stamtent.} \\
   $RS^\prime=$\textbf{RecursiveEvaluation}($T^\prime$)\\
   $RS=RS\leftouterjoin RS^\prime$
}
$RS^\prime=\emptyset$\\
\For{each subtree $T^\prime$ of pattern group in UNIONs of $T$}
{// \emph{Handling UNION stamtent.} \\
    $RS^\prime=RS^\prime \ UNION$ \textbf{RecursiveEvaluation}($T^\prime$)\\
}
$RS=RS\Join RS^\prime$\\
\For{each expression $F$ in FILTER operators}
{// \emph{Handling FILTER operator.} \\
    Select $RS$ by using expression $F$\\
}
Return $RS$
\end{function}

\begin{figure}
   \centering
      \includegraphics[scale=0.375]{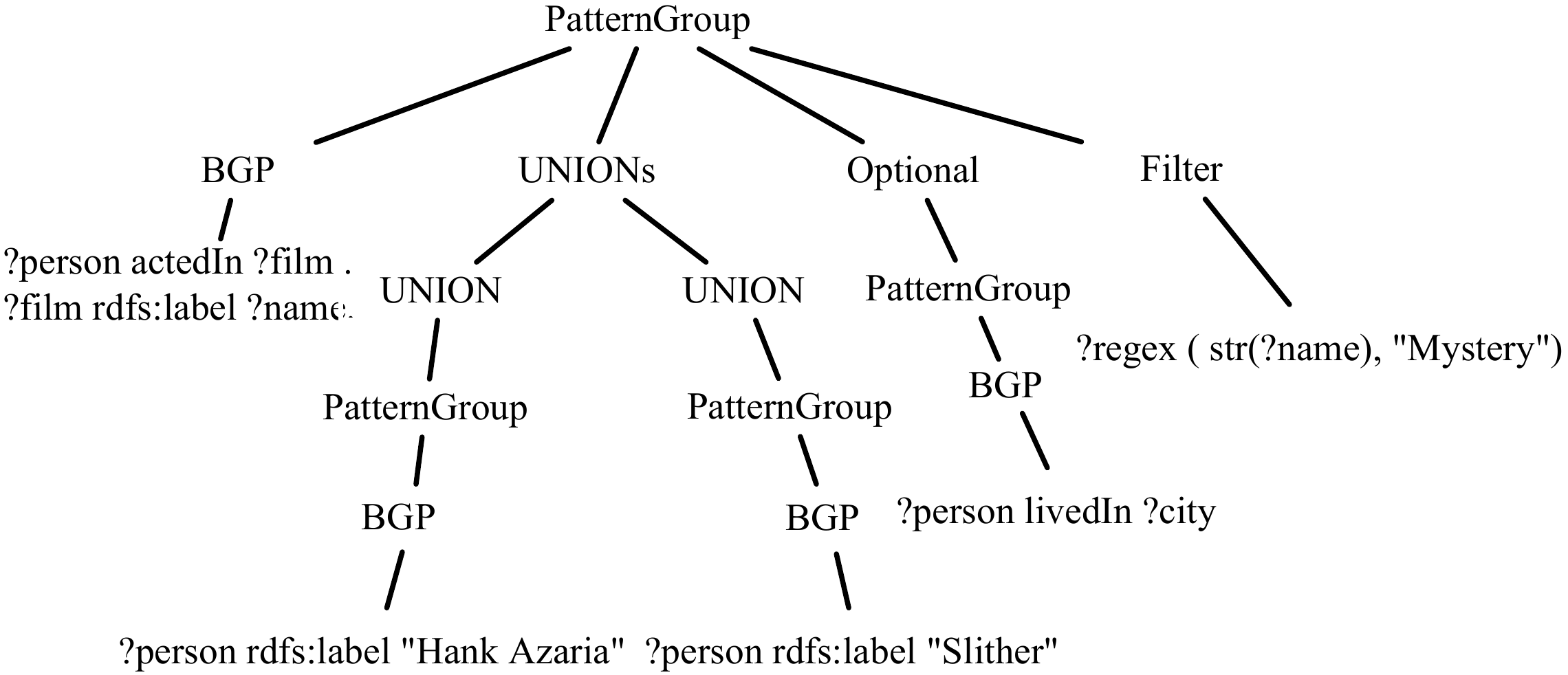}
 \caption{Parser Tree of Example General SPARQL Query}%
 \label{fig:TreePaser}
\end{figure}

Further optimizing general SPARQL evaluation is also possible (e.g., \cite{DBLP:conf/sigmod/Atre15}). However, this issue is independent on our studied problem in this paper.


\section{Experiments}\label{sec:experiment}

We evaluate our method over both real and synthetic RDF datasets, and compare our approach with the state-of-the-art distributed RDF systems, including  a cloud-based approach (EAGRE \cite{ICDE13:EAGRE}), two partition-based approaches (GraphPartition \cite{VLDB11:GraphPartition} and TripleGroup \cite{DBLP:journals/pvldb/LeeL13}), two memory-based systems (TriAD \cite{SIGMOD2014:TriAD} and Trinity.RDF \cite{VLDB13:Trinity}) and two federated SPARQL query systems (FedX \cite{DBLP:FedX} and SPLENDID \cite{DBLP:SPLENDID}). The results of of federated system comparisons are given in Appendix \ref{sec:feder-exp} since, as argued earlier, the environment targeted by these systems is different than ours.

\nop{
federated SPARQL query systems (DARQ \cite{DBLP:conf/esws/QuilitzL08}, FedX \cite{DBLP:conf/esws/SchwarteHHSS11} and Q-Tree \cite{DBLP:conf/edbt/PrasserKK12}) and RDF data partition-based methods (GraphPartition \cite{VLDB11:GraphPartition} and EAGRE \cite{ICDE13:EAGRE}.)
}

\textbf{Setting.} We use two benchmark datasets with different sizes and one real dataset in our experiments, in addition to FedBench used in federated system experiments. Table \ref{table:datasets} summarizes the statistics of these datasets. All sample queries are shown in Appendix \ref{sec:allqueries}.

1) WatDiv \cite{DBLP:WatDiv} is a benchmark that enables diversified stress testing of RDF data management systems. In WatDiv, instances of the same type can have different attribute sets. We generate three datasets varying sizes from 100 million to 1 billion triples. We use 20 queries of the basic testing templates provided by WatDiv \cite{DBLP:WatDiv} to evaluate our method. We randomly partition the WatDiv datasets into several fragments (except in Exp. 6 where we test different partitioning strategies). We assign each vertex $v$ in RDF graph to the $i$-th fragment if $H(v) MOD\ N=i$, where $H(v)$ is a hash function and N is the number of fragments. By default, we use the uniform hash function and $N=10$. Each machine stores a single fragment.

\nop{
In Exp 6, we test the performance of our system under different partitioning strategies on the same WatDiv database. We use the uniform distribution hash function, the exponential distribution hash function and minimum-cut graph partitioning strategies.

We use a hash function to associate each vertex in WatDiv with a value. Each vertex is assigned to a fragment according to its hash value. The number of fragments is the same to the number of machines and each fragment is randomly allocated to one machine. We also use other partitioning strategies to divide RDF graph to evaluate the impact of different partitioning strategies.}

2) LUBM \cite{DBLP:LUBM} is a benchmark that adopts an ontology for the university domain, and can generate synthetic OWL data scalable to an arbitrary size. We assign the university number to 10000. The number of triples is about 1.33 billion. We partition the LUBM datasets according to the university identifiers.
Although LUBM defines 14 queries, some of these are similar; therefore we use the 7 benchmark queries that have been used in some recent studies \cite{DBLP:conf/www/AtreCZH10,Zou:2013fk}. We report the results over all 14 queries in Appendix \ref{sec:allqueries} for completeness. As expected ,the results over 14 benchmark queries are similar to the results over 7 queries.

3) BTC 2012 (http://km.aifb.kit.edu/projects/btc-2012/) is a real dataset that serves as the basis of submissions to the Billion Triples Track of the Semantic Web Challenge. After eliminating all redundant triples, this dataset contains about 1 billion triples. We use METIS to partition the RDF graph, and use the 7 queries in \cite{ICDE13:EAGRE}.

4) FedBench \cite{DBLP:FedBench} is used for testing against federated systems; it is described in Appendix \ref{sec:feder-exp} along with the results.

\begin{table}
\scriptsize
\centering
\caption{Datasets}
\begin{tabular}{|c|c|r|r|r|}
\hline
\multicolumn{2}{|c|}{Dataset}& \tabincell{c}{Number of\\Triples}& \tabincell{c}{RDF N3 File\\Size(KB)}  & \tabincell{c}{Number of\\Entities}\\
  \hline
  \hline
  \multicolumn{2}{|c|}{WatDiv 100M} & 109,806,750 & 15,386,213 & 5,212,745\\
  \hline
  \multicolumn{2}{|c|}{WatDiv 300M} & 329,539,576  & 46,552,961 &15,636,385 \\
  \hline
  \multicolumn{2}{|c|}{WatDiv 500M} & 549,597,531  & 79,705,831 & 26,060,385\\
  \hline
  \multicolumn{2}{|c|}{WatDiv 700M} & 769,065,496  & 110,343,152 & 36,486,007\\
  \hline
  \multicolumn{2}{|c|}{WatDiv 1B} & 1,098,732,423 & 159,625,433 & 52,120,385\\
  \hline
  \multicolumn{2}{|c|}{LUBM 1000} &133,553,834 & 15,136,798 & 21,715,108\\
  \hline
  \multicolumn{2}{|c|}{LUBM 10000}& 1,334,481,197 & 153,256,699 & 217,006,852\\
  \hline
  \multicolumn{2}{|c|}{BTC}&1,056,184,911 & 238,970,296 & 183,835,054\\
  \hline
  \end{tabular}
\label{table:datasets}
\end{table}

We conduct all experiments on a cluster of 10 machines running Linux, each of which has one CPU with four cores of 3.06GHz, 16GB memory and 500GB disk storage. Each site holds one fragment of the dataset. At each site, we install gStore \cite{Zou:2013fk} to find inner matches, since it supports the graph-based SPARQL evaluation paradigm. We revise gStore to find all local partial matches in each fragment as discussed in Section \ref{sec:partialcomputing}. All implementations are in standard C++. We use MPICH-3.0.4 library for communication.

\nop{
We compare our method with the cloud-based competitors (EAGRE, GraphPartition and TripleGroup), distributed memory-based systems (TriAD and Trinity.RDF) and federal distributed RDF query systems (FedX and SPLENDID) in our experiments.
}

\nop{
Because that one author of EAGRE is also co-author of this paper, we compare this paper with the authors' codes of EAGRE. As well, GraphPartition and TripleGroup codes have been released by authors. EAGRE stores all triples as flat files in HDFS and answers SPARQL queries by scanning the files. It further use C to employ MPICH2 for consulting to reduce the unnecessary I/O. GraphPartition and TripleGroup manually decompose the input query into multiple subqueries in offline, and employ the existing centralized RDF systems to find intermediate results. If there exist some crossing matches, GraphPartition and TripleGroup should start MapReduce jobs to join the intermediate results. The codes by GraphPartition and TripleGroup only include MapReduce job codes, which exclude the centralized RDF systems for finding intermediate results. Actually, any centralized RDF systems can be used. In our experiments, we use gStore to find intermediate results. Obviously, MapReduce job related codes are Java in these systems.
}

\nop{
In addition, memory-based systems (like TriAD and Trinity.RDF) are provided by authors, and the two systems are based on C++. Federal query systems (like FedX and SPLENDID) have also been released by authors. We download these systems and use Sesame 2.7\footnote{http://rdf4j.org/} to build SPARQL endpoints.
}

\begin{table*}
\scriptsize
\begin{threeparttable}
  \begin{tabular}{|p{0.85cm}|p{0.35cm}|p{0.15cm}|p{0.08cm}|r|r|r|r|r|r|r|r|r|r|r|}
  \hline
   &  & &  & \multicolumn{3}{c|}{Partial Evaluation} & \multicolumn{3}{c|}{Assembly} &  \multicolumn{3}{c|}{Total} & \tabincell{c}{$\#$ of }  &  \tabincell{c}{$\#$ of}\\
  \cline{5-13}
  &  & & &  & & & \multicolumn{2}{c|}{Time(in ms)}& & \multicolumn{2}{c|}{Time(in ms)}& & LPMFs\tnote{$8$} &CMFs\tnote{$9$}\\
  \cline{8-9}\cline{11-12}
  & & & &  Time(in ms) &  $\#$ of LPMs\tnote{$2$}& $\#$ of IMs\tnote{$3$} & \tabincell{c}{Centralized}	& \tabincell{c}{Distributed} & $\#$ of CMs\tnote{$4$}& PECA\tnote{$5$}	& PEDA\tnote{$6$}& $\#$ of Matches\tnote{$7$} & & \\
 \hline
  \multirow{7}{*}{\tabincell{c}{Star}}& \multirow{7}{*}{\includegraphics[scale=0.07]{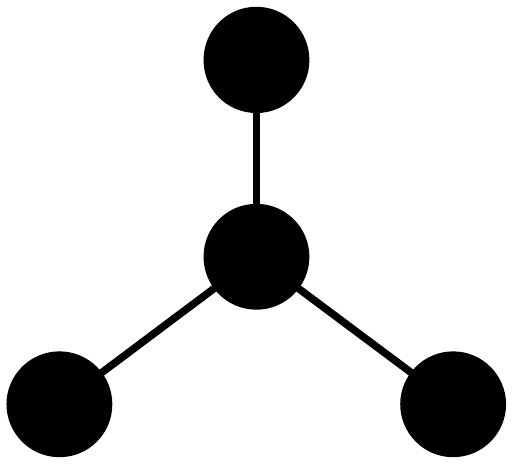}}&  \tabincell{l}{$S_1$}&$\surd$\tnote{$1$} & 43803& 0 & 1 & 0 & 0	& 0 &43803  & 43803 &1  & 0&0\\
  \cline{3-15}
  & & \tabincell{l}{$S_2$} & \tabincell{l}{$\surd$} & 74479& 0 & 13432 & 0 & 0	& 0 &74479  & 74479 & 13432  & 0&0 \\
  \cline{3-15}
  & & \tabincell{l}{$S_3$} & \tabincell{l}{$\surd$} & 8087& 0 & 13335 & 0 & 0	& 0 &8087  & 8087 & 13335  & 0&0 \\
  \cline{3-15}
  & & \tabincell{l}{$S_4$} & \tabincell{l}{$\surd$} & 16520 & 0 & 2 &0  & 0	& 0 &16520  & 16520 & 1 & 0&0 \\
  \cline{3-15}
  & & \tabincell{l}{$S_5$} & \tabincell{l}{$\surd$} & 1861 & 0 & 112 & 0  & 0	& 0  & 1861 & 1861 &940  & 0&0 \\
  \cline{3-15}
  & & \tabincell{l}{$S_6$} & \tabincell{l}{$\surd$} &50865  & 0 & 14 & 0  & 0	& 0  & 50865 & 50865 & 14 & 0&0\\
  \cline{3-15}
  & & \tabincell{l}{$S_7$} & \tabincell{l}{$\surd$} & 56784 &  0 & 1 & 0  & 0	& 0  & 56784 & 56784 &1  & 0&0 \\
  \cline{1-15}
  \multirow{5}{*}{\tabincell{c}{Linear}}& \multirow{5}{*}{\includegraphics[scale=0.07]{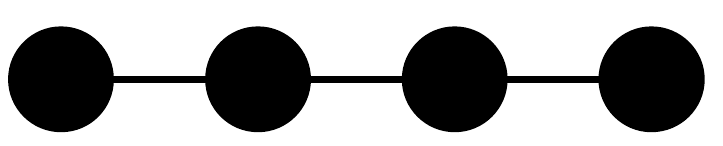}} &\tabincell{l}{$L_1$} & \tabincell{l}{$\surd$} &  15340 & 2 & 0 & 1	&16	& 1 &15341 & 15356 &1 &2 & 2\\
  \cline{3-15}
  & &  \tabincell{l}{$L_2$} & \tabincell{l}{$\surd$} &  1492	& 794	& 88  &  18	& 130 	&  793 &  1510	& 1622  & 881 &10 &10 \\
  \cline{3-15}
  & &  \tabincell{l}{$L_3$} & \tabincell{l}{$\surd$} &  16889	& 0 & 5  &  0	& 0	&  0 &  16889	&16889  & 5  & 0 &0  \\
  \cline{3-15}
  & &  \tabincell{l}{$L_4$} & \tabincell{l}{$\surd$} &  261	&0  &  6005 &   0	& 0	&  0 &  261	& 261 & 6005 & 0 & 0 \\
  \cline{3-15}
  & &  \tabincell{l}{$L_5$} & \tabincell{l}{$\surd$} &  48055	& 1274 & 141  & 572	& 1484	& 1273  &  48627	& 49539 & 1414 & 10 &10 \\
  \cline{1-15}
  \multirow{5}{*}{\tabincell{c}{Snowflake}}& \multirow{5}{*}{\includegraphics[scale=0.07]{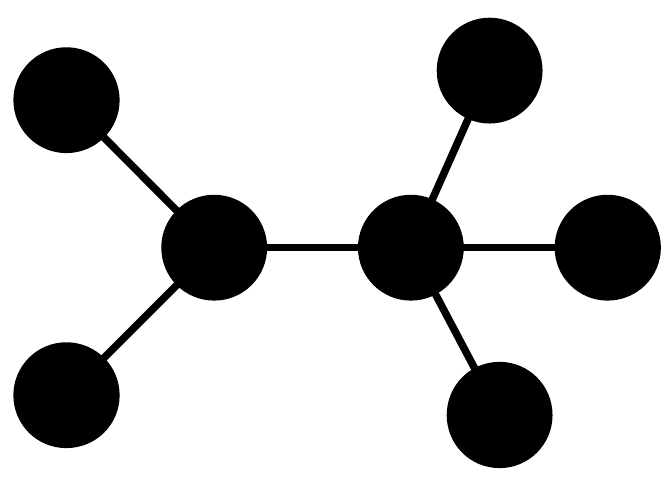}} &\tabincell{l}{$F_1$} & \tabincell{l}{$\surd$} & 64699 & 29	& 1	 & 9	&49	& 14 &  64708	&  64748   &15 & 10 &10 \\
  \cline{3-15}
  & &  \tabincell{l}{$F_2$} & \tabincell{l}{$\surd$} &  203968	& 2184 &  99 &  1598	& 3757	&  1092 & 205566 &	207725 & 1191 & 10 &10 \\
  \cline{3-15}
  & &  \tabincell{l}{$F_3$} & \tabincell{l}{$\surd$} &  2341932	&  4065632 &  58 & 3673409  &	2489325  	&  6200 &  6015341	&4831257 & 6258 & 10 &10 \\
  \cline{3-15}
  & &  \tabincell{l}{$F_4$} & \tabincell{l}{$\surd$} &  251546	& 6909 & 0  & 13693 &	8864	&  1808 & 265239 & 260410 & 1808 & 10 &10 \\
  \cline{3-15}
  & &  \tabincell{l}{$F_5$} & \tabincell{l}{$\surd$} &  25180	& 92 & 3  &  58	& 1028	&  46 & 25238 	&26208 & 49 & 10 &10 \\
  \cline{1-15}
  \multirow{3}{*}{\tabincell{c}{Complex}}& \multirow{3}{*}{\includegraphics[scale=0.07]{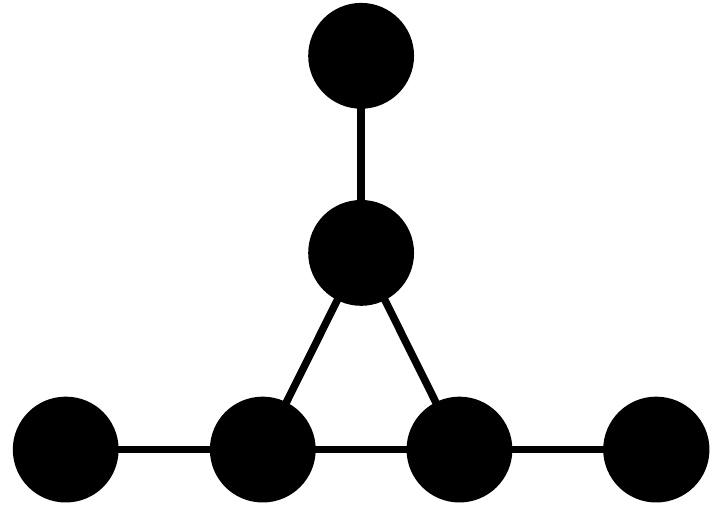}}  &\tabincell{l}{$C_1$} & &206864&   161803	&  4&  9195	&	5265&	356& 	216059&212129	& 360 & 10 &10 \\
  \cline{3-15}
  & & \tabincell{l}{$C_2$} & & 1613525  & 937198 & 0  &   229381& 174167&  155 &  1842906 &1787692 & 155 &  10 &10 \\
  \cline{3-15}
  & & \tabincell{l}{$C_3$} & & 123349& 0 & 80997 & 0 & 0 & 0  & 123349 & 123349 &80997 & 0 & 0 \\
  \hline
  \end{tabular}
  \begin{tablenotes}
  \scriptsize

  \item[$1$] $\surd$ means that the query involves some selective triple patterns.~~~~~~~~~~~~~~~~~~~~~~~~~~$^6$``PEDA'' is the abbreviation of \emph{P}artial \emph{E}valuation $\&$ \emph{D}istributed \emph{A}ssembly.
  \item[$2$] ``$\#$ of LPMs'' means the number of local partial matches. ~~~~~~~~~~~~~~~~~~~~~~~~~~~~~~~~~ $^7$  ``$\#$ of Matches'' means the number of matches.
  \item[$3$] ``$\#$ of IMs'' means the number of inner matches.  ~~~~~~~~~~~~~~~~~~~~~~~~~~~~~~~~~~~~~~~~~~~~~~~ $^8$  ``$\#$ of LPMFs'' means the number of fragments containing local partial matches.

  \item[$4$] ``$\#$ of CMs'' means the number of crossing matches.  ~~~~~~~~~~~~~~~~~~~~~~~~~~~~~~~~~~~~~~~~~ $^9$  ``$\#$ of CMFs'' means the number of fragments containing crossing matches.

  \item[$5$] ``PECA'' is the abbreviation of \emph{P}artial \emph{E}valuation $\&$ \emph{C}entralized \emph{A}ssembly.

  \end{tablenotes}
  \end{threeparttable}
  \caption{Evaluation of Each Stage on WatDiv 1B}
  \label{table:queriesperformanceWatDiv}
\end{table*}

 \begin{table*}
\vspace{-0.1in}
\scriptsize
\begin{threeparttable}
  \begin{tabular}{|c|p{0.35cm}|p{0.15cm}|p{0.08cm}|r|r|r|r|r|r|r|r|r|r|r|}
  \hline
   & & & & \multicolumn{3}{c|}{Partial Evaluation} & \multicolumn{3}{c|}{Assembly} &  \multicolumn{3}{c|}{Total}  & \tabincell{c}{$\#$ of }  &  \tabincell{c}{$\#$ of} \\
  \cline{5-13}
  & & & &  & & & \multicolumn{2}{c|}{Time(in ms)}& & \multicolumn{2}{c|}{Time(in ms)}& & LPMFs &CMFs \\
  \cline{8-9}\cline{11-12}
  & & & &  Time(in ms) &  $\#$ of LPMs& $\#$ of IMs & \tabincell{c}{Centralized}	& \tabincell{c}{Distributed} & $\#$ of CMs& \tabincell{c}{PECA}	& \tabincell{c}{PEDA}& $\#$ of Matches & & \\
 \hline
  \multirow{3}{*}{\tabincell{c}{Star}}& \multirow{3}{*}{\includegraphics[scale=0.07]{query_style_star.pdf}}&  \tabincell{l}{$Q_2$}& & 1818 & 0 & 1081187 & 0	& 0	& 0& 1818 & 1818 & 1081187 & 0 & 0  \\
  \cline{3-15}
  & & \tabincell{l}{$Q_4$} & \tabincell{l}{$\surd$} & 82	& 0 & 10 & 0	& 0	& 0 & 82	& 82 & 10 & 0 & 0 \\
  \cline{3-15}
  & & \tabincell{l}{$Q_5$} & \tabincell{l}{$\surd$} & 8	& 0 & 10 & 0	& 0	& 0 & 8	& 8 & 10 & 0 & 0 \\
  \cline{1-15}
  \multirow{2}{*}{\tabincell{p{0.9cm}}{Snowflake}}& \multirow{2}{*}{\includegraphics[scale=0.07]{query_style_snowflake.pdf}} &  \multirow{2}{*}{\tabincell{l}{$Q_6$}}& \multirow{2}{*}{$\surd$} & \multirow{2}{*}{158}	& \multirow{2}{*}{ 6707} & \multirow{2}{*}{110} & \multirow{2}{*}{164}	& \multirow{2}{*}{125}	& \multirow{2}{*}{15} & \multirow{2}{*}{322}	& \multirow{2}{*}{283} & \multirow{2}{*}{125} & \multirow{2}{*}{10} & \multirow{2}{*}{10} \\
  & &   &  &  	&  &   &  	&  	&   &  	&  &  & & \\
  \cline{1-15}
  \multirow{3}{*}{\tabincell{p{0.8cm}}{Complex}}& \multirow{3}{*}{\includegraphics[scale=0.07]{query_style_complex.pdf}}  &\tabincell{l}{$Q_1$} &  & 52548	& 3033 & 2524 & 53	& 60	& 4 & 52601	&52608 & 2528 & 10 & 10 \\
  \cline{3-15}
  & & \tabincell{l}{$Q_3$} &  & 920	&3358 &0 & 36 	&  48 & 0	& 956 	& 968	& 0 & 10&0 \\
  \cline{3-15}
  & & \tabincell{l}{$Q_7$} &  & 3945	&167621 & 42479 & 211670&	35856	& 1709 & 215615	&39801 & 44190 & 10&10 \\
  \hline
  \end{tabular}
  \end{threeparttable}
  \caption{Evaluation of Each Stage on LUBM 1000}
  \label{table:queriesperformanceLUBM}
\end{table*}

\textbf{Exp 1. Evaluating Each Stage's Performance.} In this experiment, we study the performance of our system at each stage (i.e., partial evaluation and assembly process) with regard to different queries in WatDiv 1B and LUBM 1000. We report the running time of each stage (i.e., partial evaluation and assembly) and the number of local partial matches, inner matches, and crossing matches, with regard to different query types in Tables \ref{table:queriesperformanceWatDiv} and \ref{table:queriesperformanceLUBM}. We also compare the centralized and distributed assembly strategies. The time for assembly includes the time for computing the optimal join order. Note that we classify SPARQL queries into four categories according to query graphs' structures: star, linear, snowflake (several stars linked by a path) and complex (a combination of the above with complex structure). \nop{There is no linear query in LUBM.}

\underline{Partial Evaluation}: Tables \ref{table:queriesperformanceWatDiv} and \ref{table:queriesperformanceLUBM} show that if there are some selective triple patterns\footnote{A triple pattern t is a ``selective triple pattern'' if it has no more than 100 matches in RDF graph G} in the query, the partial evaluation is much faster than others. Our partial evaluation algorithm (Algorithm \ref{alg:findinglocalmaximal}) is based on a state transformation, while the selective triple patterns can reduce the search space. Furthermore, the running time also depends on the number of inner matches and local partial matches, as shown in Tables \ref{table:queriesperformanceWatDiv} and \ref{table:queriesperformanceLUBM}. More inner matches and local partial matches lead to higher running time in the partial evaluation stage.

\underline{Assembly}: In this experiment, we compare centralized and distributed assembly approaches. Obviously, there is no assembly process for a star query. Thus, we only study the performances of linear, snowflake and complex queries.  We find that distributed assembly can beat the centralized one when there are lots of local partial matches and crossing matches. The reason is as follows: in centralized assembly, all local partial matches need to be sent to the server where they are assembled. Obviously, if there are lots of local partial matches, the server becomes the bottleneck. However, in distributed assembly, we can take advantage of parallelization to speed up both the network communication and assembly. For example, in $F_3$, there are 4065632 local partial matches. It takes a long time to transfer the local partial matches to the server and assemble them in the server in centralized assembly. So, distributed assembly outperforms the centralized alternative. However, if the number of local partial matches and the number of crossing matches are small, the barrier synchronisation cost dominates the total cost in distributed assembly. In this case, the advantage of distributed assembly is not clear. A quantitative comparison between  distributed  and centralized assembly approaches needs more statistics about the network communication, CPU and other parameters. A sophisticated quantitative study is beyond the scope of this paper and  is left as future work.

In Tables \ref{table:queriesperformanceWatDiv} and \ref{table:queriesperformanceLUBM}, we also show the number of fragments involved in each test query. For most queries, their local partial matches and crossing matches  involve all fragments. Queries containing selective triple patterns ($L_1$ in WatDiv) may only involve a part of the fragmentation.

\textbf{Exp 2: Evaluating Optimizations in Assembly.} In this experiment, we use WatDiv 1B to evaluate two different optimization techniques in the assembly: partitioning-based join strategy (Section \ref{sec:basicjoin}) and the divide-and-conquer approach in the distributed assembly (Section \ref{sec:distributed}). If a query does not have any local partial matches in RDF graph $G$, it does not need the assembly process. Therefore, we only use the benchmark queries that need assembly  ($L_1$, $L_2$, $L_5$, $F_1$, $F_2$, $F_3$, $F_3$, $F_4$, $F_5$, $C_1$ and $C_2$) in our experiments.

\underline{Partitioning-based Join}. First, we compare partitioning-based join (i.e., Algorithm 3) with naive join processing (i.e., Algorithm 2) in Table \ref{table:PartitionJoin}, which shows that the partitioning-based strategy can greatly reduce the join cost. Second, we evaluate the effectiveness of our cost model. Note that the join order depends on the partitioning strategy, which is based on our cost model as discussed in Section \ref{sec:optimalpartition}. In other words, once the partitioning is given, the join order is fixed. So, we use the cost model to find the optimal partitioning and report the running time of the assembly process in Table \ref{table:PartitionJoin}. We find that the assembly with the optimal partitioning is faster than that with random partitioning, which confirms the effectiveness of our cost model. Especially for $C_2$, the assembly with the optimal partitioning is an order of magnitude faster than the assembly with random partitioning.\nop{ Note that, there are so few local partial matches for $L_1$, $L_2$, $L_5$ and $F_1$ that the difference among joining strategies is very small.}

\underline{Divide-and-Conquer in Distributed Assembly}. Table \ref{table:Dividingeffect}
shows that dividing the search space will speed up  distributed assembly. Otherwise some duplicate results can be generated, as discussed in Section \ref{sec:distributed}. Elimination of duplicates and parallelization  speeds up distributed assembly. For example, for $C_1$, dividing search space lowers the time of assembly more than twice as much as no dividing search space.

\textbf{Exp 3: Scalability Test}. In this experiment, we vary the RDF dataset size from 100 million triples (WatDiv 100M) to 1 billion triples (WatDiv 1B) to study the scalability of our methods. Figures \ref{fig:ScalabilityTestPECA} and \ref{fig:ScalabilityTestPEDA} show the performance of different queries using centralized and distributed assembly.

\begin{table}[t]
\scriptsize
\begin{threeparttable}
  \begin{tabular}{|p{0.05in}|r|r|r|}
  \hline
  & \tabincell{c}{Partitioning-based Join Based\\on the Optimal Partitioning }	& \tabincell{c}{Partitioning-based Join Based\\on the Random Partitioning }& \tabincell{c}{Naive Join}\\
  \hline
   $L_1$ & 1 	 &1 &1 \\
     \hline
    $L_2$ &	18 & 23 & 139\\
  \hline
    $L_5$ &572	 & 622 & 3419\\
\hline
   $F_1$ & 1 	 & 1 &1 \\
     \hline
    $F_2$ &	1598 & 2286 & 48096\\
  \hline
    $F_3$ &3673409	 & 4005409 & timeout\tnote{$1$}\\
    \hline
   $F_4$ & 13693 	 & 13972 &timeout\\
     \hline
    $F_5$ &	58  &80  & 8383\\
  \hline
    $C_1$ &	9195  & 10582 &timeout\\
  \hline
  $C_2$ &	229381 & 4083181& timeout\\
  \hline
  \end{tabular}
    \begin{tablenotes}
  \scriptsize
  \item[$1$] timeout is issued if query evaluation does not terminate in 10 hour
  \end{tablenotes}
  \end{threeparttable}
  \caption{Running Time of Partitioning-based Join vs. Naive Join (in ms)}
  \label{table:PartitionJoin}
\end{table}

\begin{table}
\scriptsize
\centering
  \begin{tabular}{|r|r|r|}
  \hline
  &  \multicolumn{2}{c|}{Distributed Assembly Time (in ms)}\\ \hline
  & \tabincell{c}{Dividing}	& \tabincell{c}{No Dividing}\\
  \hline
   $L_1$ & 16	& 19 \\
  \hline
     $L_2$& 130	& 151  \\
  \hline
    $L_5$ & 1484 	&1684 \\
    \hline
    $F_1$ & 49	& 55 \\
  \hline
     $F_2$& 3757	&  5481 \\
  \hline
    $F_3$ & 2489325 	&4439430 \\
  \hline
     $F_4$& 8864	&  19759 \\
  \hline
    $F_5$ & 1028 	&1267 \\
  \hline
     $C_1$& 5265	& 12194  \\
  \hline
    $C_2$ & 174167 	& 225062 \\
  \hline
  \end{tabular}
  \caption{Dividing vs. No Dividing (in ms)}
  \label{table:Dividingeffect}
\end{table}

\begin{figure}%
   \centering
   \subfigure[Star Queries]{%
		\resizebox{0.48\columnwidth}{!}{
			\begin{tikzpicture}[font=\large]
    \begin{semilogyaxis}[
        xlabel=Datasets,
        ylabel=Query Response Time (in ms),
        xtick = {10,30,50,70,100},
        xticklabels={100M,300M,500M,700M,1000M},
        legend cell align=left,
        legend style={draw=none},
        legend pos= north west
    ]
 \addplot plot[mark size=3.5pt] coordinates {
        (10,     4095)
        (30,     5452)
        (50,    7259)
        (70,     28119)
        (100,    43803)
    };
    \addplot plot[mark size=3.5pt]  coordinates {
        (10,     5910)
        (30,     18452)
        (50,    31316)
        (70,     42811.9)
        (100,    74479)
    };
    \addplot plot[mark=triangle*,mark size=3.5pt]  coordinates {
        (10,    869)
        (30,    2869)
        (50,    3357)
        (70,    5722)
        (100,   8087)
    };
    \addplot plot[mark size=3.5pt]  coordinates {
        (10,    1506)
        (30,    4506)
        (50,    8378)
        (70,    10506)
        (100,   16520)
    };
    \addplot plot[mark size=3.5pt]  coordinates {
        (10,    208)
        (30,    608)
        (50,    1015)
        (70,    1408)
        (100,   1861)
    };
    \addplot plot[mark size=3.5pt]  coordinates {
        (10,    5153)
        (30,    8038)
        (50,    12206)
        (70,    35038)
        (100,   50865)
    };
    \addplot plot[mark size=3.5pt]  coordinates {
        (10,    5047)
        (30,    15470)
        (50,    23272)
        (70,    37000)
        (100,   56784)
    };
    \legend{$S_1$\\$S_{2}$\\$S_3$\\$S_{4}$\\$S_5$\\$S_{6}$\\$S_7$\\}

    \end{semilogyaxis}
\end{tikzpicture}
		}
       \label{fig:edgeScalabilityPECA}%
       }%
   \subfigure[Linear Queries]{%
		\resizebox{0.48\columnwidth}{!}{
			\begin{tikzpicture}[font=\large]
    \begin{semilogyaxis}[
        xlabel=Datasets,
        ylabel=Query Response Time (in ms),
        xtick = {10,30,50,70,100},
        xticklabels={100M,300M,500M,700M,1000M},
        legend cell align=left,
        legend style={draw=none},
        legend pos= north west
    ]
      \addplot plot[mark size=3.5pt] coordinates {
        (10,     2301)
        (30,     6301)
        (50,    9880)
        (70,     12301)
        (100,    15341)
    };
    \addplot plot[mark size=3.5pt]  coordinates {
        (10,     271)
        (30,     571)
        (50,    709)
        (70,     1171)
        (100,    1510)
    };
    \addplot plot[mark=triangle*,mark size=3.5pt]  coordinates {
        (10,    1115)
        (30,     3571)
        (50,    6164)
        (70,     7135)
        (100,   16889)
    };
    \addplot plot[mark size=3.5pt]  coordinates {
        (10,    37)
        (30,    107)
        (50,    187)
        (70,    227)
        (100,   261)
    };
    \addplot plot[mark size=3.5pt]  coordinates {
        (10,    7741)
        (30,    22107)
        (50,    29378)
        (70,    38107)
        (100,   48627)
    };
    \legend{$L_1$\\$L_{2}$\\$L_3$\\$L_{4}$\\$L_5$\\}

    \end{semilogyaxis}
\end{tikzpicture}
		}
       \label{fig:starScalabilityPECA}%
       }%
       \\
   \subfigure[Snowflake Queries]{%
		\resizebox{0.48\columnwidth}{!}{
			\begin{tikzpicture}[font=\large]
    \begin{semilogyaxis}[
        xlabel=Datasets,
        ylabel=Query Response Time (in ms),
        xtick = {10,30,50,70,100},
        xticklabels={100M,300M,500M,700M,1000M},
        legend cell align=left,
        legend style={draw=none},
        legend pos= north west
    ]

    \addplot plot[mark size=3.5pt]  coordinates {
        (10,    5754)
        (30,    17680)
        (50,    34757)
        (70,    37680)
        (100,   64708)
    };

    \addplot plot[ mark size=3.5pt]  coordinates {
        (10,    11809)
        (30,    15832)
        (50,    19016)
        (70,    98320)
        (100,   205566)
    };
    \addplot plot[mark=triangle*,mark size=3.5pt]  coordinates {
        (10,     246277)
        (30,    583200)
        (50,    1636342)
        (70,    2283200)
        (100,   6015341)
    };

    \addplot plot[mark size=3.5pt]  coordinates {
        (10,    26439)
        (30,    78320)
        (50,    144505)
        (70,    208320)
        (100,   265239)
    };
    \addplot plot[mark size=3.5pt]  coordinates {
        (10,    11630)
        (30,    14320)
        (50,    17063)
        (70,    21320)
        (100,   25238)
    };
    \legend{ $F_1$\\$F_{2}$\\$F_3$\\$F_{4}$\\$F_5$\\}

    \end{semilogyaxis}
\end{tikzpicture}
		}
       \label{fig:snowflakeScalabilityPECA}%
       }
\subfigure[Complex Queries]{%
		\resizebox{0.48\columnwidth}{!}{
			\begin{tikzpicture}[font=\large]
    \begin{semilogyaxis}[
        xlabel=Datasets,
        ylabel=Query Response Time (in ms),
        xtick = {10,30,50,70,100},
        xticklabels={100M,300M,500M,700M,1000M},
        legend cell align=left,
        legend style={draw=none},
        legend pos= north west
    ]

    \addplot plot[mark size=3.5pt]  coordinates {
        (10,     14980)
        (30,     44980)
        (50,    88932)
        (70,     104980)
        (100,    216059)
    };
    \addplot plot[mark size=3.5pt]  coordinates {
        (10,     147962)
        (30,     447962)
        (50,    784926)
        (70,     947962)
        (100,   1842906)
    };
    \addplot plot[mark=triangle*,mark size=3.5pt]  coordinates {
        (10,    11631)
        (30,    33631)
        (50,    76749)
        (70,    107631)
        (100,    123349)
    };

    \legend{  $C_1$\\$C_2$\\$C_3$\\}

    \end{semilogyaxis}
\end{tikzpicture}
		}
       \label{fig:otherScalabilityPECA}%
       }
 \caption{\small Scalability Test of PECA}%
 \label{fig:ScalabilityTestPECA}
\end{figure}

\begin{figure}%
   \centering
   \subfigure[Star Queries]{%
		\resizebox{0.48\columnwidth}{!}{
			\begin{tikzpicture}[font=\large]
    \begin{semilogyaxis}[
        xlabel=Datasets,
        ylabel=Query Response Time (in ms),
        xtick = {10,30,50,70,100},
        xticklabels={100M,300M,500M,700M,1000M},
        legend cell align=left,
        legend style={draw=none},
        legend pos= north west
    ]
 \addplot plot[mark size=3.5pt] coordinates {
        (10,     4095)
        (30,     5452)
        (50,    7259)
        (70,     28119)
        (100,    43803)
    };
    \addplot plot[mark size=3.5pt]  coordinates {
        (10,     5910)
        (30,     18452)
        (50,    31316)
        (70,     42811.9)
        (100,    74479)
    };
    \addplot plot[mark=triangle*,mark size=3.5pt]  coordinates {
        (10,    869)
        (30,    2869)
        (50,    3357)
        (70,    5722)
        (100,   8087)
    };
    \addplot plot[mark size=3.5pt]  coordinates {
        (10,    1506)
        (30,    4506)
        (50,    8378)
        (70,    10506)
        (100,   16520)
    };
    \addplot plot[mark size=3.5pt]  coordinates {
        (10,    208)
        (30,    608)
        (50,    1015)
        (70,    1408)
        (100,   1861)
    };
    \addplot plot[mark size=3.5pt]  coordinates {
        (10,    5153)
        (30,    8038)
        (50,    12206)
        (70,    35038)
        (100,   50865)
    };
    \addplot plot[mark size=3.5pt]  coordinates {
        (10,    5047)
        (30,    15470)
        (50,    23272)
        (70,    37000)
        (100,   56784)
    };
    \legend{$S_1$\\$S_{2}$\\$S_3$\\$S_{4}$\\$S_5$\\$S_{6}$\\$S_7$\\}

    \end{semilogyaxis}
\end{tikzpicture}
		}
       \label{fig:edgeScalabilityPEDA}%
       }%
   \subfigure[Linear Queries]{%
		\resizebox{0.48\columnwidth}{!}{
			\begin{tikzpicture}[font=\large]
    \begin{semilogyaxis}[
        xlabel=Datasets,
        ylabel=Query Response Time (in ms),
        xtick = {10,30,50,70,100},
        xticklabels={100M,300M,500M,700M,1000M},
        legend cell align=left,
        legend style={draw=none},
        legend pos= north west
    ]
      \addplot plot[mark size=3.5pt] coordinates {
        (10,     2325)
        (30,     6375)
        (50,    10409)
        (70,     12375)
        (100,    15356)
    };
    \addplot plot[mark size=3.5pt]  coordinates {
        (10,     339)
        (30,     539)
        (50,    754)
        (70,     1139)
        (100,    1622)
    };
    \addplot plot[mark=triangle*,mark size=3.5pt]  coordinates {
        (10,    1115)
        (30,    3115)
        (50,    6164)
        (70,    10115)
        (100,   16889)
    };
    \addplot plot[mark size=3.5pt]  coordinates {
        (10,    37)
        (30,    107)
        (50,    187)
        (70,    227)
        (100,   261)
    };
    \addplot plot[mark size=3.5pt]  coordinates {
        (10,    7863)
        (30,    15227)
        (50,    29788)
        (70,    38227)
        (100,   49539)
    };
    \legend{$L_1$\\$L_{2}$\\$L_3$\\$L_{4}$\\$L_5$\\}

    \end{semilogyaxis}
\end{tikzpicture}
		}
       \label{fig:starScalabilityPEDA}%
       }%
       \\
   \subfigure[Snowflake Queries]{%
		\resizebox{0.48\columnwidth}{!}{
			\begin{tikzpicture}[font=\large]
    \begin{semilogyaxis}[
        xlabel=Datasets,
        ylabel=Query Response Time (in ms),
        xtick = {10,30,50,70,100},
        xticklabels={100M,300M,500M,700M,1000M},
        legend cell align=left,
        legend style={draw=none},
        legend pos= north west
    ]

    \addplot plot[mark size=3.5pt]  coordinates {
        (10,    5768)
        (30,    17680)
        (50,    34777)
        (70,    37680)
        (100,   64748)
    };

    \addplot plot[ mark size=3.5pt]  coordinates {
        (10,    11832)
        (30,    15832)
        (50,    19351)
        (70,    98320)
        (100,   207725)
    };
    \addplot plot[mark=triangle*,mark size=3.5pt]  coordinates {
        (10,     163642)
        (30,    583200)
        (50,    1265177)
        (70,    2283200)
        (100,   4831257)
    };

    \addplot plot[mark size=3.5pt]  coordinates {
        (10,    26421)
        (30,    78320)
        (50,    143416)
        (70,    208320)
        (100,   260410)
    };
    \addplot plot[mark size=3.5pt]  coordinates {
        (10,    11654)
        (30,    14654)
        (50,    17326)
        (70,    22654)
        (100,   26208)
    };
    \legend{ $F_1$\\$F_{2}$\\$F_3$\\$F_{4}$\\$F_5$\\}

    \end{semilogyaxis}
\end{tikzpicture}
		}
       \label{fig:snowflakeScalabilityPEDA}%
       }
\subfigure[Complex Queries]{%
		\resizebox{0.48\columnwidth}{!}{
			\begin{tikzpicture}[font=\large]
    \begin{semilogyaxis}[
        xlabel=Datasets,
        ylabel=Query Response Time (in ms),
        xtick = {10,30,50,70,100},
        xticklabels={100M,300M,500M,700M,1000M},
        legend cell align=left,
        legend style={draw=none},
        legend pos= north west
    ]

    \addplot plot[mark size=3.5pt]  coordinates {
        (10,     14667)
        (30,     44667)
        (50,    86524)
        (70,     94667)
        (100,    212129)
    };
    \addplot plot[mark size=3.5pt]  coordinates {
        (10,     147406)
        (30,     447406)
        (50,    782545)
        (70,     982545)
        (100,   1787692)
    };
    \addplot plot[mark=triangle*,mark size=3.5pt]  coordinates {
        (10,    11631)
        (30,    33631)
        (50,    76749)
        (70,    77631)
        (100,    123349)
    };

    \legend{  $C_1$\\$C_2$\\$C_3$\\}

    \end{semilogyaxis}
\end{tikzpicture}
		}
       \label{fig:otherScalabilityPEDA}%
       }
 \caption{\small Scalability Test of PEDA}%
 \label{fig:ScalabilityTestPEDA}
\end{figure}

Query response time is affected by both the increase in data size (which is $1x \rightarrow 10x$ in these experiemnts) and the query type. For star queries, the query response time increases proportional to the data size, as shown in Figures \ref{fig:starScalabilityPECA} and \ref{fig:starScalabilityPEDA}. For other query types, the query response times may grow faster than the data size. Especially for $F_3$, the query response time increases 30 times as the datasize increases 10 times. This is because the complex query graph shape causes more complex operations in query processing, such as joining and assembly. However, even for complex queries, the query performance is scalable with RDF graph size on the benchmark datasets.

Note that, as mentioned in Exp. 1, there is no assembly process for star queries, since matches of a star query cannot cross two fragments. Therefore, the query response times for star queries in centralized and distributed assembly are the same. In contrast, for other query types, some local partial matches and crossing matches result in differences between the performances of centralized and distributed assembly. Here, $L_3$, $L_4$ and $C_3$ are special cases. Although they are not star queries, there are few local partial matches for $L_3$, $L_4$ and $C_3$. Furthermore, the crossing match number is 0 in $L_3$, $L_4$ and $C_3$ (in Table \ref{table:queriesperformanceWatDiv}). Therefore, the assembly times for $L_3$, $L_4$ and $C_3$ are so small that the query response times in both centralized and distributed assembly are the almost same.

\nop{
\begin{table}
\scriptsize
\centering
  \begin{tabular}{|p{0.1cm}|c|r|r|r|}
  \hline
  & &\tabincell{c}{WatDiv 100M}&	\tabincell{c}{WatDiv 500M}& \tabincell{c}{WatDiv 1B}\\
  \hline
   \multirow{3}{*}{\tabincell{c}{$S_1$}} & PECA&4095	&	7259& 43803 \\
  \cline{2-5}
   & PEDA&4095	&	7259&  43803\\
  \hline
   \multirow{3}{*}{\tabincell{c}{$S_2$}} & PECA&5910	&31316	&74479\\
  \cline{2-5}
   & PEDA&5910	&	31316&74479\\
  \cline{1-5}
   \multirow{3}{*}{\tabincell{c}{$S_3$}} & PECA&869	& 3357 &8087		\\
  \cline{2-5}
   & PEDA&869	&3357	&8087\\
  \cline{1-5}
   \multirow{3}{*}{\tabincell{c}{$S_4$}} & PECA&1506&	8378&16520	\\
  \cline{2-5}
 & PEDA&1506&	8378&16520	\\
  \cline{1-5}
  \multirow{3}{*}{\tabincell{c}{$S_5$}} & PECA&208&	1015&	1861	\\
  \cline{2-5}
   & PEDA&208&1015	&	1861\\
  \cline{1-5}
  \multirow{3}{*}{\tabincell{c}{$S_6$}} & PECA&5153	&12206	&50865	\\
  \cline{2-5}
   & PEDA&5153&	12206	&50865\\
  \cline{1-5}
  \multirow{3}{*}{\tabincell{c}{$S_7$}} & PECA&5047&	23272 &56784	\\
  \cline{2-5}
  & PEDA&5047	&23272	&56784\\
  \cline{1-5}
  \multirow{3}{*}{\tabincell{c}{$L_1$}} & PECA&2301	&9880	& 15341 \\
  \cline{2-5}
   & PEDA&2325	&	10409& 15356\\
  \hline
   \multirow{3}{*}{\tabincell{c}{$L_2$}} & PECA&271	&	709&1510\\
  \cline{2-5}
   & PEDA&339	&754	&1622\\
  \cline{1-5}
   \multirow{3}{*}{\tabincell{c}{$L_3$}} & PECA&1115	& 6164 &	16889	\\
  \cline{2-5}
   & PEDA&1115	&	6164&16889\\
  \cline{1-5}
   \multirow{3}{*}{\tabincell{c}{$L_4$}} & PECA&37&187	&	261\\
  \cline{2-5}
 & PEDA&37&187	&	261\\
  \cline{1-5}
  \multirow{3}{*}{\tabincell{c}{$L_5$}} & PECA&7741&29378	&	48627\\
  \cline{2-5}
   & PEDA&7863&29788 &49539	\\
  \cline{1-5}
  \multirow{3}{*}{\tabincell{c}{$F_1$}} & PECA&5754	&	34757 & 64708 \\
  \cline{2-5}
   & PEDA&5768	&	34777& 64748 \\
  \hline
   \multirow{3}{*}{\tabincell{c}{$F_2$}} & PECA&11809	&	19016&205566 \\
  \cline{2-5}
   & PEDA&11832	&	19351&207725\\
  \cline{1-5}
   \multirow{3}{*}{\tabincell{c}{$F_3$}} & PECA&246277	& 1636342 &	6015341\\
  \cline{2-5}
   & PEDA&163642	&1265177	&4831257\\
  \cline{1-5}
   \multirow{3}{*}{\tabincell{c}{$F_4$}} & PECA&26439&144505	&	265239\\
  \cline{2-5}
 & PEDA&26421&143416	&260410	\\
  \cline{1-5}
  \multirow{3}{*}{\tabincell{c}{$F_5$}} & PECA&11630&	17063&	25238\\
  \cline{2-5}
   & PEDA&11654&	17326&	26208\\
  \cline{1-5}
  \multirow{3}{*}{\tabincell{c}{$C_1$}} & PECA&14980	&88932		& 216059 \\
  \cline{2-5}
   & PEDA&14667	&	86524& 212129 \\
  \hline
   \multirow{3}{*}{\tabincell{c}{$C_2$}} & PECA&147962	&	784926 &1842906\\
  \cline{2-5}
   & PEDA&147406	&	782545&1787692\\
  \cline{1-5}
   \multirow{3}{*}{\tabincell{c}{$C_3$}} & PECA&11631	&76749  &	123349\\
  \cline{2-5}
   & PEDA&11631	&	76749&123349\\
  \cline{1-5}
  \end{tabular}
  \caption{Scalability Test (in ms)}
  \label{table:ScalabilityTest}
\end{table}
}

\textbf{Exp 4: Intermediate Result Size and Query Performance VS. Query Decomposition Approaches}. Table \ref{table:NumberDifferentPartitionStrategies} compares the number of intermediate results in our method with  two typical query decomposition approaches, i.e., GraphPartition and TripleGroup. We use undirected 1-hop guarantee for GraphPartition and 1-hop bidirection semantic hash partition for TripleGroup. The dataset is still WatDiv 1B.

A star query has no intermediate results, so each method can be answered at each fragment locally. Thus, all methods have the same response time, as shown in Table \ref{table:timeDifferentPartitionStrategies} ($S_1$ to $S_6$).

For other query types, both GraphPartition and TripleGroup need to decompose them into several star subqueries, and find these subquery matches (in each fragment) as intermediate results. Neither GraphPartition nor TripleGroup distinguish the star subquery matches that contribute to crossing matches from those that contribute to inner matches -- all star subquery matches are involved in the assembly process. However, in our method, only local partial matches are involved in the assembly process, leading to lower communication cost and the assembly computation cost. Therefore, the intermediate results that need to be assembled with others  is smaller in our approach.

More intermediate results typically lead to more assembly time. Furthermore, both GraphPartition and TripleGroup employ MapReduce jobs for assembly, which takes much more time than our method. Table \ref{table:timeDifferentPartitionStrategies} shows that our query response time is faster than others.

 \begin{table}
\scriptsize
\begin{threeparttable}
  \begin{tabular}{|c|r|r|r|}
  \hline
  & \tabincell{c}{PECA $\&$ PEDA}  &  GraphPartition & TripleGroup \\
 \hline
   \tabincell{l}{$S_1 - S_{7}$}& 0&  0 & 0  \\
 \hline
  \tabincell{l}{$L_1$} &  2 &  249571 &  249598 \\
  \cline{1-4}
  \tabincell{l}{$L_2$} & 794 & 73307  &  79630 \\
  \cline{1-4}
  \tabincell{l}{$L_3 - L_4$} & 0 &  0 & 0  \\
  \cline{1-4}
  \tabincell{l}{$L_5$} & 1274  &  99363 &  99363 \\
  \cline{1-4}
  \tabincell{l}{$F_1$} & 29& 76228  &  15702  \\
  \cline{1-4}
  \tabincell{l}{$F_2$} & 2184 &  501146 &  1119881 \\
  \cline{1-4}
  \tabincell{l}{$F_3$} &  4065632 & 4515731  &  4515752 \\
  \cline{1-4}
  \tabincell{l}{$F_4$} & 6909 &  132193 &  329426 \\
  \cline{1-4}
  \tabincell{l}{$F_5$} & 92 & 2500773  & 9000762  \\
  \cline{1-4}
  \tabincell{l}{$C_1$} & 161803 &  4551562 &  4451693 \\
  \cline{1-4}
  \tabincell{l}{$C_2$} & 937198 &  1457156 & 2368405  \\
  \cline{1-4}
  \tabincell{l}{$C_3$} &  0 & 0  & 0  \\
  \hline
  \end{tabular}
  \end{threeparttable}
  \vspace{-0.05in}
  \caption{ Number of Intermediate Results of Different Approaches on Different Partitioning Strategies}
  \label{table:NumberDifferentPartitionStrategies}
\end{table}

 \begin{table}
\scriptsize
\begin{threeparttable}
  \begin{tabular}{|c|r|r|r|r|}
  \hline
  & \tabincell{c}{PECA } & \tabincell{c}{ PEDA}& \tabincell{c}{GraphPartition}  &  \tabincell{c}{TripleGroup}\\
 \hline
   \tabincell{l}{$S_1$}& 43803 & 43803   & 43803 & 43803   \\
  \cline{1-5}
  \tabincell{l}{$S_2$} & 74479&   74479& 74479 &  74479 \\
  \cline{1-5}
  \tabincell{l}{$S_3$} & 8087&  8087 & 8087 & 8087  \\
  \cline{1-5}
  \tabincell{l}{$S_4$} & 16520& 16520  & 16520 &16520   \\
  \cline{1-5}
   \tabincell{l}{$S_5$} & 1861&  1861 & 1861 & 1861 \\
  \cline{1-5}
   \tabincell{l}{$S_6$} & 50865& 50865  & 50865 & 50865 \\
  \cline{1-5}
   \tabincell{l}{$S_7$} & 56784&  56784 &  56784&  56784   \\
  \cline{1-5}
  \tabincell{l}{$L_1$} & 15341&  15776 &  40840 &  39570 \\
  \cline{1-5}
  \tabincell{l}{$L_2$} & 1510& 1622  & 36150 & 36420  \\
  \cline{1-5}
  \tabincell{l}{$L_3$} &16889 &  16889 &  16889 &  16889  \\
  \cline{1-5}
  \tabincell{l}{$L_4$} &261 &  261 &  261 &  261 \\
  \cline{1-5}
  \tabincell{l}{$L_5$} &48627 & 49539  & 57550 &57480  \\
  \cline{1-5}
  \tabincell{l}{$F_1$} & 64708 & 64748  & 66230 & 66200  \\
  \cline{1-5}
  \tabincell{l}{$F_2$} & 205566&  207725 & 240700 & 248180  \\
  \cline{1-5}
  \tabincell{l}{$F_3$} & 6015341& 4831257  & 6244000 &  6142800 \\
  \cline{1-5}
  \tabincell{l}{$F_4$} & 265239&  260410 &340540  &  340600 \\
  \cline{1-5}
  \tabincell{l}{$F_5$} &25238 & 29208  &52180  &  91110 \\
  \cline{1-5}
  \tabincell{l}{$C_1$} &216059 &  212129 &  216720&223670 \\
  \cline{1-5}
  \tabincell{l}{$C_2$} & 1842906&  1787692 & 1954800 & 2168300   \\
  \cline{1-5}
  \tabincell{l}{$C_3$} & 123349&  123349 &  123349&  123349  \\
  \hline
  \end{tabular}
  \end{threeparttable}
  \vspace{-0.05in}
  \caption{ Query Response Time of Different Approaches (in milliseconds)}
  \label{table:timeDifferentPartitionStrategies}
\end{table}

\nop{
In order to further evaluate the impact of MapReduce jobs in the comparison, we revise the join process in GraphPartition and TripleGroup, by applying MPI instead of MapReduce jobs to join intermediate results. Although the performance of GraphPartition and TripleGroup improve by using MPI, they are still slower than our method. We report the detailed experiment results and analysis in Appendix \ref{sec:exp4cont}.}

Existing partition-based solutions, such as GraphPartition and TripleGroup, use MapReduce jobs to join intermediate results to find SPARQL matches. In order to evaluate the cost of MapReduce-jobs, we perform the following experiments over WatDiv 100M. We revise join processing in both GraphPartition and TripleGroup, by applying joins where intermediate results are sent to a central server using MPI.  We use WatDiv 100M and only consider the benchmark queries that need join processing ($L_1$, $L_2$, $L_5$, $F_1$, $F_2$, $F_3$, $F_3$, $F_4$, $F_5$, $C_1$ and $C_2$) in our experiments. Moreover, all partition-based methods generate intermediate results and merge them at a central sever that shares the same framework with PECA, so we only compare them with PECA. The detailed results are given in Appendix \ref{sec:exp4cont}. Our technique is always faster regardless of the use of MPI or MapReduce-based join. This is because  our method produces  smaller intermediate result sets; MapReduce-based join dominates the query cost. our partial evaluation process is more expensive in evaluating local queries than GraphPartition and TripleGroup in many cases. This is easy to understand -- since the subquery structures in GraphPartition and TripleGroup are fixed, such as stars, it is  cheaper to find these local query results than finding local partial matches. Our system generally outperforms GraphPartition and TripleGroup significantly if they use MapReduce-based join. Even when GraphPartition and TripleGroup use distributed joins, our system is still faster than them in most cases (8 out of 10 queries used in this experiment, see Appendix \ref{sec:exp4cont} for details).

\nop{
As mentioned in Proposition \ref{proposition:optimal}, no local partial match can be missed if we wish to avoid false negatives. If a local partial match is a star, it is also found at GraphPartition and TripleGroup as the intermediate results. If a local partial match is not a star, it is straightforward to know that it can be obtained by joining several star subquery matches.
}


\nop{
their matches may cross multiple fragments. Hence, GraphPartition and TripleGroup should first decompose the queries into many subqueries and find out the intermediate results of each subquery. There usually exist some edges/vertices in some intermediate results that do not contribute to any final results. Therefore, GraphPartition and TripleGroup always have more intermediate results than our methods. Especially for the partitioning in the exponential distribution, since most vertices are in the same partitioning, there are more inner matches and fewer crossing matches. Then, our methods generate a few local partial matches as intermediate results.

Note that, for $L_1$, $L_2$, $L_5$ and $F_3$, the decomposition results of GraphPartition and TripleGroup are the same, so their running time and intermediate results are also the same.

}

\nop{
In addition, GraphPartition and TripleGroup need to join the intermediate results together. Each join between two intermediate results of subqueries starts a MapReduce job. This takes much more time than our methods.
}

\textbf{Exp 5: Performance on RDF Datasets with One Billion Triples}. This experiment is a comparative evaluation of our method against GraphPartition, TripleGroup and EAGRE on three very large RDF datasets with more than one billion triples, WatDiv 1B, LUBM 10000 and BTC. Figure \ref{fig:online_comparison} shows the performance of different approaches.

Note that, almost half of the queries ($S_1$, $S_2$, $S_3$, $S_4$, $S_5$, $S_6$, $S_7$, $L_3$, $L_4$ and $C_3$ in WatDiv, $Q_2$, $Q_4$ and $Q_5$ in LUBM, $Q_1$, $Q_2$ and $Q_3$ in BTC) have no intermediate results generated in any of the approaches. For these queries, the response times of our approaches and partition-based approaches are the same. However, for other queries, the gap between our approach and others is significant. For example, $L_2$ in WatDiv, for $Q_3$, $Q_6$ and $Q_7$ in LUBM and $Q_3$, $Q_4$, $Q_6$ and $Q_5$ in BTC,  our approach outperforms others one or more orders of magnitudes. We already explained the reasons for GraphPartition and TripleGroup in Exp 4; reasons for EAGRE performance follows.

\nop{
We find out that our method outperform other methods greatly in most cases. For example, in BTC, our method is faster than other methods by an order of magnitude for queries $Q_4$, $Q_5$ and $Q_6$.
}

EAGRE stores all triples as flat files in HDFS and answers SPARQL queries by scanning the files. Because HDFS does not provide fine-grained data access, a query can only be evaluated by a full scan of the files followed by a MapReduce job to join the intermediate results. Although EAGRE proposes some techniques to reduce I/O and data processing, it is still very costly. In contrast, we use graph matching to answer queries, which avoids scanning the whole dataset.

\textbf{Exp 6: Impact of Different Partitioning Strategies}. In this experiment, we test the performance under three different partitioning strategies over WatDiv 100M. The impact of different partitioning strategies is shown in Table \ref{table:newtimeDifferentPartitionStrategies}. We implement three partitioning strategies: uniformly distributed hash partitioning, exponentially distributed hash partitioning, and minimum-cut graph partitioning.

The first partitioning strategy uniformly hashes a vertex $v$ in RDF graph $G$ to a fragment (machine). Thus,  fragments on different machines have approximately the same size. The second  strategy uses an exponentially distributed hash function with a rate parameter pf 0.5. Each vertex $v$ has a probability of $0.5^k$ to be assigned to  fragment (machine) $k$. This partitioning strategy results in skewed fragment sizes. Finally, we use min-cut based partitioning strategy (i.e., METIS algorithm) to partition graph $G$.

Minimum-cut partitioning strategy generally leads to fewer crossing edges than the other two. Thus, it beats the other two approaches in most cases, especially in complex queries (such as $F$ and $C$ category queries). For example, in $C_2$, the minimum-cut is faster than the uniform partitioning by more than four times. For star queries (i.e., $S$ category queries), since there exist no crossing matches, the uniform partitioning  and minimum-cut partitioning have the similar performance. Sometimes, the uniform partitioning is better, but the performance gap is very small.  Due to the skew in fragment sizes, exponentially distributed hashing has worse performance, in most cases, than uniformly distributed hashing.

\nop{Although both the uniform and the exponential distribution partitioning employ the hash partitioning strategy, the latter one results in un-balanced partitioning. Obviously, the whole query response time in the exponential distribution partitioning depends on the largest fragment. Thus, in most cases, the exponential distribution partitioning is not as good as the uniform one in the online query performance.
}
Although our partial evaluation-and-assembly framework is agnostic to the particular partitioning strategy, it is clear that it works better when fragment sizes are balanced, and the  crossing edges are minimized. Many heuristic minimum-cut graph partitioning algorithms (a typical one is METIS [31]) satisfy the requirements.

\begin{table}
\scriptsize
\centering
  \begin{tabular}{|p{0.1cm}|c|r|r|r|}
  \hline
  & &\tabincell{c}{Uniform}&	\tabincell{c}{Exponential}& \tabincell{c}{Min-cut}\\
  \hline
   \multirow{3}{*}{\tabincell{c}{$S_1$}} & PECA&4095	&	7472& \textbf{3210} \\
  \cline{2-5}
   & PEDA&4095	&	7472& \textbf{3210}\\
  \hline
   \multirow{3}{*}{\tabincell{c}{$S_2$}} & PECA&5910	&5830	&\textbf{5053}\\
  \cline{2-5}
   & PEDA&5910	&	5830	&\textbf{5053}\\
  \cline{1-5}
   \multirow{3}{*}{\tabincell{c}{$S_3$}} & PECA&\textbf{869}	& 2003 &1098		\\
  \cline{2-5}
   & PEDA&\textbf{869}	&	2003 &1098\\
  \cline{1-5}
   \multirow{3}{*}{\tabincell{c}{$S_4$}} & PECA&\textbf{1506}&	1532&1525	\\
  \cline{2-5}
 & PEDA&\textbf{1506}&	1532&1525\\
  \cline{1-5}
  \multirow{3}{*}{\tabincell{c}{$S_5$}} & PECA&\textbf{208}&	384&	255	\\
  \cline{2-5}
   & PEDA&\textbf{208}&	384&	255\\
  \cline{1-5}
  \multirow{3}{*}{\tabincell{c}{$S_6$}} & PECA&5153	&5642	&\textbf{4145}	\\
  \cline{2-5}
   & PEDA&5153&	5642	&\textbf{4145}\\
  \cline{1-5}
  \multirow{3}{*}{\tabincell{c}{$S_7$}} & PECA&5047&	5720 &\textbf{4085}	\\
  \cline{2-5}
  & PEDA&5047	&	5720 &\textbf{4085}\\
  \cline{1-5}
  \multirow{3}{*}{\tabincell{c}{$L_1$}} & PECA&\textbf{2301}	&4271	& 3162 \\
  \cline{2-5}
   & PEDA&\textbf{2325}	&	4296&  3168\\
  \hline
   \multirow{3}{*}{\tabincell{c}{$L_2$}} & PECA&271	&	502&\textbf{261}\\
  \cline{2-5}
   & PEDA&339	&	505&  \textbf{297}\\
  \cline{1-5}
   \multirow{3}{*}{\tabincell{c}{$L_3$}} & PECA&\textbf{1115}	& 2122 &	1334	\\
  \cline{2-5}
   & PEDA&\textbf{1115}	&	2122 &	1334\\
  \cline{1-5}
   \multirow{3}{*}{\tabincell{c}{$L_4$}} & PECA&37&54	&	\textbf{27}\\
  \cline{2-5}
 & PEDA&37&	54	&	\textbf{27}\\
  \cline{1-5}
  \multirow{3}{*}{\tabincell{c}{$L_5$}} & PECA&7741&6736	&	\textbf{4984}\\
  \cline{2-5}
   & PEDA&7863&	6946&  \textbf{5163}\\
  \cline{1-5}
  \multirow{3}{*}{\tabincell{c}{$F_1$}} & PECA&5754	&	7889 & \textbf{4386} \\
  \cline{2-5}
   & PEDA&5768	&	7943& \textbf{4415} \\
  \hline
   \multirow{3}{*}{\tabincell{c}{$F_2$}} & PECA&11809	&	16461 &\textbf{10209} \\
  \cline{2-5}
   & PEDA&11832	&	16598&  \textbf{10539}\\
  \cline{1-5}
   \multirow{3}{*}{\tabincell{c}{$F_3$}} & PECA&246277	& 155064 &	\textbf{122539}\\
  \cline{2-5}
   & PEDA&163642	&	115214&   \textbf{103618}\\
  \cline{1-5}
   \multirow{3}{*}{\tabincell{c}{$F_4$}} & PECA&26439& 37608	&	\textbf{21979}\\
  \cline{2-5}
 & PEDA&26421&	36817& \textbf{ 22030	}\\
  \cline{1-5}
  \multirow{3}{*}{\tabincell{c}{$F_5$}} & PECA&11630&	16433&	\textbf{8735}\\
  \cline{2-5}
   & PEDA&11654&	16501& \textbf{ 8262}\\
  \cline{1-5}
  \multirow{3}{*}{\tabincell{c}{$C_1$}} & PECA&14980	&30271		& \textbf{14131 }\\
  \cline{2-5}
   & PEDA&14667	&	29861&  \textbf{13807}\\
  \hline
   \multirow{3}{*}{\tabincell{c}{$C_2$}} & PECA&147962	&	105926 &\textbf{36038}\\
  \cline{2-5}
   & PEDA&147406	&	104084&  \textbf{35220}\\
  \cline{1-5}
   \multirow{3}{*}{\tabincell{c}{$C_3$}} & PECA&\textbf{11631}	&16368  &	13959\\
  \cline{2-5}
   & PEDA&\textbf{11631}	&	16368  &	13959\\
  \cline{1-5}
  \end{tabular}
  \caption{ Query Response Time under Different Partitioning Strategies(in milliseconds)}
  \label{table:newtimeDifferentPartitionStrategies}
\end{table}

\begin{figure*}%
  \hspace{-0.1in}
       \subfigure[WatDiv 1B]{%
		\resizebox{0.9\columnwidth}{!}{
				\begin{tikzpicture}[font=\Large]
 		 \begin{semilogyaxis}[
                anchor={(0,100)},
               width = 30cm,
               height = 8cm,
    			major x tick style = transparent,
    			ybar,
    			ymin = 1,
    			ymax = 5000000000,
    ytick = {1,10,100,1000,10000,100000,1000000,10000000},
   			ymajorgrids = true,
   			ylabel = {Query Response Time (in ms)},
    			xlabel = {Queries},
    			symbolic x coords = {$S_1$,$S_2$,$S_3$,$S_4$,$S_5$,$S_6$,$S_7$,$L_1$,$L_2$,$L_3$,$L_4$,$L_5$,$F_1$,$F_2$,$F_3$,$F_4$,$F_5$,$C_1$,$C_2$,$C_3$},
    			scaled y ticks = true,
			bar width=4pt,
             enlarge x limits=0.05,
			legend pos= north west,
 legend cell align=left
   		]
   \nop{
   \addplot [fill=white, postaction={pattern=north east lines}] coordinates {($S_1$,859 ) ($S_2$, 80724) ($S_3$, 65449) ($S_4$, 1220) ($S_5$, 3743) ($S_6$, 379) ($S_7$, 111) ($L_1$, 584) ($L_2$, 9689) ($L_3$, 241) ($L_4$, 10757) ($L_5$, 9361) ($F_1$, 342) ($F_2$, 14153) ($F_3$, 20701) ($F_4$, 8872) ($F_5$, 1686) ($C_1$, 43418) ($C_2$, 21979) ($C_3$, 809446)};
}
   \addplot [fill=black!80] coordinates {($S_1$, 43803) ($S_2$, 74479) ($S_3$, 8087) ($S_4$, 16520) ($S_5$, 1861) ($S_6$, 50865) ($S_7$, 56784) ($L_1$, 40840) ($L_2$, 36150) ($L_3$, 16889) ($L_4$, 261) ($L_5$, 57550) ($F_1$, 66230) ($F_2$, 240700) ($F_3$, 6244000) ($F_4$, 340540) ($F_5$, 52180) ($C_1$, 216720) ($C_2$, 1954800) ($C_3$, 123349)};

   \addplot [fill=black!20] coordinates {($S_1$, 53447.5) ($S_2$, 79531) ($S_3$, 79597.5) ($S_4$, 59530.5) ($S_5$, 60996) ($S_6$, 59474) ($S_7$, 58084.5) ($L_1$, 151907) ($L_2$, 141651) ($L_3$, 59539.5) ($L_4$, 61999) ($L_5$, 144243) ($F_1$, 215754.5) ($F_2$, 226682.5) ($F_3$, 7221303.5) ($F_4$, 408847.5) ($F_5$,204495) ($C_1$,  302112.5) ($C_2$, 2472111.5) ($C_3$, 161477)};

    		\addplot [fill=black!60] coordinates {($S_1$, 43803) ($S_2$, 74479) ($S_3$, 8087) ($S_4$, 16520) ($S_5$, 1861) ($S_6$, 50865) ($S_7$, 56784) ($L_1$, 39570) ($L_2$, 36420) ($L_3$, 16889) ($L_4$, 261) ($L_5$, 57480) ($F_1$, 66200) ($F_2$, 248180) ($F_3$, 6140000) ($F_4$, 340600) ($F_5$, 91110) ($C_1$, 223670) ($C_2$, 2168300) ($C_3$, 123349)};

    \addplot [fill=black!40] coordinates {($S_1$, 43803) ($S_2$, 74479) ($S_3$, 8087) ($S_4$, 16520) ($S_5$, 1861) ($S_6$, 50865) ($S_7$, 56784) ($L_1$, 15341) ($L_2$, 1510) ($L_3$, 16889) ($L_4$, 261) ($L_5$, 48627) ($F_1$, 64708) ($F_2$, 205566) ($F_3$, 6015341) ($F_4$, 265239) ($F_5$, 25238) ($C_1$, 216059) ($C_2$, 1842906) ($C_3$, 123349)};

    \addplot [fill=white] coordinates {($S_1$, 43803) ($S_2$, 74479) ($S_3$, 8087) ($S_4$, 16520) ($S_5$, 1861) ($S_6$, 50865) ($S_7$, 56784) ($L_1$, 15776) ($L_2$, 1622) ($L_3$, 16889) ($L_4$, 261) ($L_5$, 49539) ($F_1$, 64748) ($F_2$, 207725) ($F_3$, 4831257) ($F_4$, 260410) ($F_5$, 29208) ($C_1$, 212129) ($C_2$, 1787692) ($C_3$, 123349)};

   		 \legend{GraphPartition,EAGRE,TripleGroup,PECA,PEDA}
  		\end{semilogyaxis}
\end{tikzpicture}
		}
       \label{fig:watdivcomparison}%
       }%
         \\
   \subfigure[LUBM 10000]{%
		\resizebox{0.42\columnwidth}{!}{
				\begin{tikzpicture}[font=\Large]
 		 \begin{semilogyaxis}[
                anchor={(0,100)},
               width = 13cm,
               height = 8cm,
    			major x tick style = transparent,
    			ybar,
    			ymin = 1,
    			ymax = 50000000000,
   			ymajorgrids = true,
   			ylabel = {Query Response Time (in ms)},
    			xlabel = {Queries},
    			symbolic x coords = {$Q_1$,$Q_2$,$Q_3$,$Q_4$,$Q_5$,$Q_6$,$Q_7$},
    			scaled y ticks = true,
			bar width=4.5pt,
             enlarge x limits=0.1,
			legend pos= north west,
 legend cell align=left
   		]
\nop{
   \addplot [fill=white, postaction={pattern=north east lines}] coordinates {($Q_1$, 1084047) ($Q_2$, 81373) ($Q_3$, 72257) ($Q_4$, 7) ($Q_5$, 6) ($Q_6$, 355) ($Q_7$,116325 )};
}

   \addplot [fill=black] coordinates {($Q_1$, 466250) ($Q_2$, 21810) ($Q_3$, 466250) ($Q_4$, 860) ($Q_5$, 160) ($Q_6$, 43460) ($Q_7$, 7194000)};

    		\addplot [fill=black!60] coordinates {($Q_1$, 466250) ($Q_2$, 171810) ($Q_3$, 494900) ($Q_4$, 29860) ($Q_5$, 28860) ($Q_6$, 42460) ($Q_7$, 6194000)};

    \addplot [fill=black] coordinates {($Q_1$, 2257850) ($Q_2$, 21810) ($Q_3$, 1029300) ($Q_4$, 860) ($Q_5$, 160) ($Q_6$, 359040) ($Q_7$, 5333240)};

    \addplot [fill=black!30] coordinates {($Q_1$,326167 ) ($Q_2$, 23685.9) ($Q_3$, 10239) ($Q_4$, 793.71) ($Q_5$, 125.279) ($Q_6$, 3388) ($Q_7$, 103779)};

    \addplot [fill=white] coordinates {($Q_1$,309361.319 ) ($Q_2$, 23685.9) ($Q_3$, 10368) ($Q_4$, 753.998) ($Q_5$, 133.387) ($Q_6$, 1914) ($Q_7$, 46123)};

   		  \legend{GraphPartition,EAGRE,TripleGroup,PECA,PEDA}
  		\end{semilogyaxis}
\end{tikzpicture}
		}
       \label{fig:lubm10000comparison}%
       }%
\subfigure[BTC]{%
		\resizebox{0.42\columnwidth}{!}{
				\begin{tikzpicture}[font=\Large]
 		 \begin{semilogyaxis}[
               width = 13cm,
               height = 8cm,
    			major x tick style = transparent,
    			ybar,
    			ymin = 10,
    			ymax = 100000,
   			ymajorgrids = true,
   			ylabel = {Query Response Time (in ms)},
    			xlabel = {Queries},
    			symbolic x coords = {$Q_1$,$Q_2$,$Q_3$,$Q_4$,$Q_5$,$Q_6$,$Q_7$},
    			ytick = {10,100,1000,10000},
   yticklabels={$10^1$,$10^2$,$10^3$,$10^4$},
    			scaled y ticks = true,
			bar width=4.5pt,
             enlarge x limits=0.1,
			legend pos= north west,
 legend cell align=left
   		]
    \nop{		
 \addplot [fill=white, postaction={pattern=north east lines}] coordinates {($Q_1$, 52.5) ($Q_2$, 26.5) ($Q_3$, 17.4) ($Q_4$, 52.0) ($Q_5$, 65.0) ($Q_6$, 34.7) ($Q_7$, 93.3 )};
}
    \addplot [fill=black!80] coordinates {($Q_1$, 113) ($Q_2$, 134) ($Q_3$, 136) ($Q_4$, 3058) ($Q_5$, 3131) ($Q_6$, 3178) ($Q_7$, 3541) };

    \addplot [fill=black!20] coordinates {($Q_1$, 2934) ($Q_2$, 2956) ($Q_3$, 2957) ($Q_4$, 3165) ($Q_5$, 3013) ($Q_6$, 3208) ($Q_7$, 3542)    };

    \addplot [fill=black!60] coordinates {($Q_1$, 113) ($Q_2$, 134) ($Q_3$, 136) ($Q_4$, 3058) ($Q_5$, 3131) ($Q_6$, 3178) ($Q_7$, 3541)};

    		\addplot [fill=black!40] coordinates {($Q_1$, 115.011) ($Q_2$, 147.279) ($Q_3$, 140.736) ($Q_4$, 1143.66) ($Q_5$, 1228.86) ($Q_6$, 462.784) ($Q_7$, 84.4228) };

    \addplot [fill=white] coordinates {($Q_1$, 115.002) ($Q_2$, 147.2) ($Q_3$, 140.736) ($Q_4$, 1141.58) ($Q_5$, 1169.01) ($Q_6$, 447.922) ($Q_7$, 1582.75) };

   		  \legend{GraphPartition,EAGRE,TripleGroup,PECA,PEDA}
  		\end{semilogyaxis}
\end{tikzpicture}
		}
       \label{fig:btccomparison}%
       }%
 \caption{\small Online Performance Comparison}%
 \label{fig:online_comparison}
\end{figure*}

\textbf{Exp 7: Comparing with Memory-based Distributed RDF Systems}. We compare our approach (which is disk-based) against TriAD \cite{SIGMOD2014:TriAD} and Trinity.RDF \cite{VLDB13:Trinity} that are memory-based distributed systems. To enable fair comparison, we cache the whole RDF graph together with the corresponding index into memory.  Experiments show that our system is faster than Trinity.RDF and TriAD in these benchmark queries. Results are given in Appendix \ref{sec:memory-exp}.

\nop{
\begin{table}
\scriptsize
\centering
  \begin{tabular}{|c|c|r|r|r|r|}
  \hline
   &\tabincell{c}{Trinity.RDF} & \tabincell{c}{TriAD}&	\tabincell{c}{PECA}& \tabincell{c}{PEDA}\\
  \hline
   $Q_1$ & 281	&	97& 53  & 51\\
  \hline
  $Q_2$ & 132	&	140& 73  & 73\\
  \hline
  $Q_3$ & 110	&	31& 27  & 28\\
  \hline
  $Q_4$ & 5	&	1& 1  & 1\\
  \hline
  $Q_5$ & 4	&	0.2& 0.18  & 0.18\\
  \hline
  $Q_6$ & 9	&	1.8& 1.2  & 1.7\\
  \hline
  $Q_7$ & 630 &	711& 123  & 122\\
  \hline
  \end{tabular}
  \caption{ Comparison with Memory-based Distributed RDF Systems in LUBM 1000}
  \label{table:ComparisonMemorySystems}
\end{table}
}

\textbf{Exp 8: Comparing with Federated SPARQL Systems}.
In this experiment, we compare our methods with some federated SPARQL query systems including (FedX \cite{DBLP:FedX} and SPLENDID \cite{DBLP:SPLENDID}). We evaluate our methods on the standardized benchmark for federated SPARQL query processing, FedBench \cite{DBLP:FedBench}. Results are given in Appendix \ref{sec:feder-exp}.

\nop{
\textbf{Exp 9: Evaluating the Effect of MapReduce in Partition-based Solutions}.
Existing partition-based solutions, such as GraphPartition and TripleGroup, use MapReduce jobs to join intermediate results to find SPARQL matches. In order to evaluate the cost of MapReduce-jobs, we perform the following experiments. We revise join processing in both GraphPartition and TripleGroup, by applying joins where intermediate results are sent to a central server using MPI.  We use WatDiv 100M and only consider the benchmark queries that need join processing ($L_1$, $L_2$, $L_5$, $F_1$, $F_2$, $F_3$, $F_3$, $F_4$, $F_5$, $C_1$ and $C_2$) in our experiments. Moreover, all partition-based methods generate intermediate results and merge them in the central sever which shares the same framework with PECA, so we only compare them with PECA.

Tables \ref{table:LocalJoinCostGraphPartition} and \ref{table:LocalJoinCostTG} show the performance of our approach, GraphPartition and TripleGroup. We find that our join processing is always faster than GraphPartition and TripleGroup, no matter whether they use MPI-based join or MapReduce-based join. This is because  our method produces  smaller intermediate result sets. As mentioned by the reviewer, it is true that MapReduce-based join dominates the whole query cost.

From Tables \ref{table:LocalJoinCostGraphPartition} and \ref{table:LocalJoinCostTG}, we see that our partial evaluation process is more expensive than local queries in GraphPartition and TripleGroup in many cases. This is easy to understand -- since the subquery structures in GraphPartition and TripleGroup are fixed, such as stars, it is  cheaper to find these local query results than finding local partial matches in our system.

Generally speaking, our system outperforms GraphPartition and TripleGroup significantly if they use MapReduce-based join. Even when GraphPartition and TripleGroup use distributed joins, our system is still faster than them in most cases (7 out of 10 queries in Table \ref{table:LocalJoinCostGraphPartition}).

\begin{table*}
\scriptsize
\centering
  \begin{tabular}{|r|r|r|r|r|r|r|r|r|}
  \hline
  & \multicolumn{3}{c|}{PECA}	& \multicolumn{5}{c|}{GraphPartition}\\
  \hline
  & \tabincell{c}{Partial Evaluation}	& \tabincell{c}{Assembly} & \tabincell{c}{Total Time} & \tabincell{c}{Finding Partial Matches}	& \tabincell{c}{MPI-based\\Join} & \tabincell{c}{MPI-based\\Total Time}&\tabincell{c}{MapReduce-based\\Join} & \tabincell{c}{MapReduce-based\\Total Time}\\
  \hline
  $L_1$ & 2350	& 1 & 2351	& 1423 & 183	& 1606 & 19570	& 20993\\
  \hline
     $L_2$& 557	& 1 & 558	& 386 & 34	& 420 & 16420	& 16806  \\
  \hline
    $L_5$ & 524	& 2 & 526	& 479 & 76	& 555 & 27480	& 27959 \\
    \hline
    $F_1$ &  3906	& 1 & 3907	& 4011 & 35	& 4046 & 36200	& 40211 \\
  \hline
     $F_2$& 2659	& 31 & 2690	& 2466 & 1277	& 3743 & 58180	& 60646 \\
  \hline
    $F_3$ & 16077	& 1945 & 18022	& 14136 & 4191	& 18327 & 61400	& 75536 \\
  \hline
     $F_4$& 21446	& 47 & 21493	& 15535 & 165	& 15700 & 34060	& 49595 \\
  \hline
    $F_5$ & 9043	& 43 & 9086	& 9910 & 1900	& 11810 & 51110	& 61020 \\
  \hline
     $C_1$& 12969	& 52 & 13021	& 9799 & 18522	& 28321 & 223670	& 233469 \\
  \hline
    $C_2$ & 37850	& 1454 & 39304	& 44998 & 19494	& 64492 & 2168300	& 2213298 \\
  \hline
  \end{tabular}
  \caption{Query Response Time over Partitioning Strategy of GraphPartition (in milliseconds)}
  \label{table:LocalJoinCostGraphPartition}
\end{table*}

\begin{table*}
\scriptsize
\centering
  \begin{tabular}{|r|r|r|r|r|r|r|r|r|}
  \hline
  & \multicolumn{3}{c|}{PECA}	& \multicolumn{5}{c|}{TripleGroup}\\
  \hline
  & \tabincell{c}{Partial Evaluation}	& \tabincell{c}{Assembly} & \tabincell{c}{Total Time} & \tabincell{c}{Finding Partial Matches}	& \tabincell{c}{MPI-based\\Join} & \tabincell{c}{MPI-based\\Total Time}&\tabincell{c}{MapReduce-based\\Join} & \tabincell{c}{MapReduce-based\\Total Time}\\
  \hline
  $L_1$ & 2250      & 1 & 2251	& 1122 & 452	& 1574 & 20840	& 21962\\
  \hline
     $L_2$& 249     & 1 & 250     & 204 & 50	& 254 & 16150	& 16354\\
  \hline
    $L_5$ & 737    & 2 & 739    & 304 & 70	& 374 & 27550	& 27854\\
    \hline
    $F_1$ & 5753    & 1 & 5753     & 4413 & 38	& 4451 & 36230	& 40643\\
  \hline
     $F_2$& 4771	& 21 & 4792	& 3909 & 911	& 4820 & 40700	& 44609\\
  \hline
    $F_3$ & 10425	& 3174 & 12599	& 10517 & 5346	& 15863 & 62440	& 72957\\
  \hline
     $F_4$& 16373	& 66 & 16439	& 15403 & 212	& 15615 & 54054	& 69457\\
  \hline
    $F_5$ & 11611	& 22 & 11633	& 13039 & 4923	& 17962 & 22180	& 35219\\
  \hline
     $C_1$& 12794	& 2265 & 15059	& 6057 & 12194	& 18251 & 216720	& 222777\\
  \hline
    $C_2$ & 44272	& 8870 & 53142	&48204 & 15062	& 63266 & 1954800	& 2003004\\
  \hline
  \end{tabular}
  \caption{Query Response Time over Partitioning Strategy of TripleGroup (in milliseconds)}
  \label{table:LocalJoinCostTG}
\end{table*}
}

\textbf{Exp 9: Comparing with Centralized RDF Systems}. In this experiment, we compare our method with RDF-3X in LUBM 10000. Table \ref{table:ComparisonCentralizedSystems} shows the results.

Our method is generally faster than RDF-3X when a query graph is complex, such as $Q_1$, $Q_2$, $Q_3$ and $Q_7$. Since these queries do not contain selective triple patterns and the query graph structure is complex,  the search space for these queries is very large. Our method can take advantage of  parallel processing and reduce  query response time significantly relative to a  centralized system. If the queries ($Q_4$, $Q_5$ and $Q_6$) contain selective triple patterns,  the search space is small. The centralized system (RDF-3X) is faster than our method in these queries, since our approach spends more communication cost between different machines. These queries only spend less than 1-3 seconds in both RDF-3X and our distributed system. However, for some challenging queries (such as $Q_1$, $Q_2$, $Q_3$ and $Q_7$), our method outperforms RDF-3X significantly. For example, RDF-3X spends about $1000$ seconds in $Q_1$, while our approach only spends about $300$ seconds. The performance advantage of our distributed system is more clear in these challenging queries.

\begin{table}
\scriptsize
\centering
  \begin{tabular}{|r|r|r|r|}
  \hline
  & \tabincell{c}{RDF-3X}	& \tabincell{c}{PECA} & \tabincell{c}{PEDA} 	\\
  \hline
  $Q_1$ & 1084047      & 326167 & 309361	\\
  \hline
     $Q_2$& 81373     & 23685 & 23685     \\
  \hline
    $Q_3$ & 72257    & 10239 & 10368    \\
    \hline
    $Q_4$ & 7    & 753 & 753     \\
  \hline
     $Q_5$& 6	& 125 & 125	\\
  \hline
    $Q_6$ & 355	& 3388 & 1914	\\
  \hline
     $Q_7$& 146325	& 143779 & 46123	\\
  \hline
  \end{tabular}
  \caption{Comparison with Centralized System (in milliseconds)}
  \label{table:ComparisonCentralizedSystems}
\end{table}

\nop{
\begin{figure}%
   \centering
		\resizebox{0.75\columnwidth}{!}{
			\input{exps/LUBM_centralized_comparison}
		}
 \caption{\small Comparison with Centralized System}%
 \vspace{-0.2in}
 \label{fig:ComparisonCentralizedSystems}
\end{figure}
}


\section{Conclusion}\label{sec:conclusion}

In this paper, we propose a graph-based approach to distributed SPARQL query processing that adopts the partial evaluation and assembly approach. This is a two-step process. In the first step, we evaluate a query $Q$ on each graph fragment in parallel to find \emph{local partial matches}, which, intuitively, is the overlapping part between a crossing match and a fragment. The second step is to assemble these local partial matches to compute crossing matches. Two different assembly strategies are proposed in this work: \emph{centralized assembly}, where all local partial matches are sent to a single site; and \emph{distributed assembly}, where the local partial matches are assembled at a number of sites in parallel. 

The main benefits of our method are twofold: First, our solution is  partition-agnostic as opposed to existing partition-based methods each of which depends on a particular RDF graph partition strategy, which may be infeasible to  enforce in certain circumstances. Our method is, therefore, much more flexible. Second, compared with other partition-based methods, the number of involved vertices and edges in the intermediate results are minimized in our method, which are proven theoretically and demonstrated experimentally.

There are a number of extensions we are currently working on. An important one is handling SPARQL queries over linked open data (LOD). We can treat the interconnected RDF repositories (in LOD) as a virtually integrated distributed database. Some RDF repositories provide SPARQL endpoints and others may not have query capability. Therefore, data at these sites need to be moved for processing that will affect the algorithm and cost functions. Furthermore, multiple SPARQL query optimization in the context of distributed RDF graphs is also an ongoing work. In real applications, queries in the same time are commonly overlapped. Thus, there is much room for sharing computation when executing these queries. This observation motivates us to revisit the classical problem of multi-query optimization in the context of distributed RDF graphs.

\bibliographystyle{abbrv}
\bibliography{gstore}

\clearpage

\appendix

\section*{Online Supplements}
\nop{
\section{Extended Related Work}
\label{sec:extendrelated}

In this appendix, we provide more discussion of federated SPARQL query systems.

A common technique is to precompute metadata for each individual SPARQL endpoints. Based on the metadata, the original SPARQL query is decomposed into several subqueries, where each subquery is sent to its relevant SPARQL endpoints. The results of subqueries are then joined together to answer the original SPARQL query. In DARQ \cite{DBLP:DARQ}, the metadata is called \emph{service description} that describes which triple patterns (i.e., predicate) can be answered. In \cite{WWW2010:QTree,EDBT2012:QTree}, the metadata is called Q-Tree, which is a variant of RTree. Each leaf node in Q-Tree stores a set of source identifers, including one for each source of a triple approximated by the node. SPLENDID \cite{DBLP:SPLENDID} uses Vocabulary of Interlinked Datasets (VOID) as the metadata. HiBISCuS \cite{DBLP:HiBISCuS} relies on capabilities to compute the metadata. For each source, HiBISCuS defines a set of capabilities which map the properties to their subject and object authorities. TopFed \cite{DBLP:TopFed} is a biological federated SPARQL query engine. Its metadata comprises of an N3 specification file and a Tissue Source Site to Tumour (TSS-to-Tumour) hash table, which is devised based on the data distribution.

Besides the above metadata-assisted methods, FedX \cite{DBLP:FedX} does not require preprocessing. FedX sends  ``SPARQL ASK'' to collect the metadata on the fly. Based on the results of ``SPARQL ASK'' queries, FedX decomposes the query into subqueries and assign subqueries with relevant SPARQL endpoints.

For global query optimization, the goal of this step is to find an efficient query execution plan. Most federated query engines employ existing optimizers, such as dynamic programming \cite{CiteSeerX:SystemR} or top-down enumeration strategy with memoization \cite{DBLP:conf/icde/FenderM11}, for optimizing the join order of local queries. Furthermore, DARQ \cite{DBLP:DARQ} and FedX \cite{DBLP:FedX} discuss how to use a semijoin algorithm to compute a join between intermediate results at the control site and SPARQL endpoints.
}

\nop{
\section{Computing Local Partial Matches}\label{sec:computinglocalnew}
\vspace{-0.1in}
To clarify the algorithm of computing local partial matches (Algorithm 1), we introduce the algorithm by means of state transformation approach. Here, we define the \emph{state} as follows.

\begin{definition} Given a SPARQL query graph $Q$ with $m$ vertices $v_1,...,v_n$, a \emph{state} is a (partial) match of query graph $Q$.
\end{definition}

In particular, our \emph{state} \emph{transformation} algorithm is as follows. Assume that $v$ matches vertex $u$ in SPARQL query $Q$. We first initialize a state with $v$. Then, we search the RDF data graph to reach $v$'s neighbor $v^\prime$ corresponding to $u^\prime$ in $Q$, where $u^\prime$ is one of $u$'s neighbors and edge $\overrightarrow{vv^\prime}$ satisfies query edge $\overrightarrow{uu^\prime}$. The search will extend the state step by step. The search branch terminates until that we have found a state corresponding to a match or we cannot continue. In this case, the algorithm is backtrack to some other states and try other search branches.

The difference between our previous work, gStore, and Algorithm \ref{alg:findinglocalmaximal} is that the final state in gStore is a complete SPARQL match; but in Algorithm \ref{alg:findinglocalmaximal}, if a state satisfies the conditions in Definition \ref{def:localmaximal}, it should be a ``local partial match''. In other words, the termination condition is different between gStore and Algorithm \ref{alg:findinglocalmaximal}.

\begin{figure}[h]
\begin{center}
    \includegraphics[scale=0.23]{pics/state.pdf}
       \vspace{-0.1in}
   \caption{Finding Local Partial Matches}
   \label{fig:dfslpm}
   \vspace{-0.2in}
\end{center}
\end{figure}

\begin{example}
Figure \ref{fig:dfslpm} shows how to compute $Q$'s local partial matches in fragment $F_1$. Suppose that we initialize a  function $f$ with $(v_3,005)$ in the first step. In the second step, we expand to $v_1$ and consider $v_1$'s candidates. There are two candidates of $v_1$, i.e., $002$ and $028$. Hence, we introduce two vertex pairs $(v_1,002)$ and $(v_1,028)$ to expand $f$, respectively. Similarly, we introduce $(v_5,027)$ into the function $\{(v_3,005),(v_1,002)\}$ in the third step. Then, $\{(v_3,005),(v_1,002),(v_5,027)\}$ satisfies all conditions of Definition \ref{def:localmaximal}, thus, it is a local partial match. We return $\{(v_3,005),(v_1,002),(v_5,027)\}$. Then, in another search branch, we check the  function $\{(v_3,005),(v_1,028)\}$. We find this  function cannot be expanded, i.e., we cannot introduce a new matching pair; otherwise, it will violate some conditions (except for Condition 8) in Definition \ref{def:localmaximal}. Therefore, this search branch is terminated.
\end{example}

}

\section{Queries in Experiments}
\label{sec:allqueries}
Table \ref{table:WatDivQueries}, \ref{table:LUBMQueries}, \ref{table:FedBenchCDQueries}, \ref{table:FedBenchLSQueries} and \ref{table:BTCQueries} show all queries used in the paper.

\begin{table}
\scriptsize
\centering
\caption{WatDiv Queries}
\begin{tabular}{|c|m{.8\textwidth}|}
\hline
  $L_1$ & $\#$mapping v1 wsdbm:Website uniform\\
  & SELECT ?v0 ?v2 ?v3 WHERE $\{$ ?v0 wsdbm:subscribes $\%$v1$\%$ . ?v2 sorg:caption ?v3 . ?v0 wsdbm:likes ?v2 . $\}$\\
  \hline
  $L_2$ & $\#$mapping v0 wsdbm:City uniform\\
  & SELECT ?v1 ?v2 WHERE $\{$ $\%$v0$\%$ gn:parentCountry ?v1 . ?v2 wsdbm:likes wsdbm:Product0 . ?v2 sorg:nationality ?v1 . $\}$\\
  \hline
  $L_3$ & $\#$mapping v2 wsdbm:Website uniform\\
  & SELECT ?v0 ?v1 WHERE $\{$ ?v0 wsdbm:likes ?v1 . ?v0 wsdbm:subscribes $\%$v2$\%$ . $\}$\\
  \hline
  $L_4$ & $\#$mapping v1 wsdbm:Topic uniform\\
  & SELECT ?v0 ?v2 WHERE $\{$ ?v0 og:tag $\%$v1$\%$ . ?v0 sorg:caption ?v2 . $\}$\\
  \hline
  $L_5$ & $\#$mapping v2 wsdbm:City uniform\\
  & SELECT ?v0 ?v1 ?v3 WHERE $\{$ ?v0 sorg:jobTitle ?v1 . $\%$v2$\%$ gn:parentCountry ?v3 . ?v0 sorg:nationality ?v3 . $\}$\\
  \hline
  $S_1$ & $\#$mapping v2 wsdbm:Retailer uniform\\
  & SELECT ?v0 ?v1 ?v3 ?v4 ?v5 ?v6 ?v7 ?v8 ?v9 WHERE $\{$ ?v0 gr:includes ?v1 . $\%$v2$\%$ gr:offers ?v0 . ?v0 gr:price ?v3 . ?v0 gr:serial- Number ?v4 . ?v0 gr:validFrom ?v5 . ?v0 gr:validThrough ?v6 . v0 sorg:eligibleQuantity ?v7 . ?v0 sorg:eligibleRegion ?v8 . ?v0 sorg:priceValidUntil ?v9 . $\}$\\
  \hline
  $S_2$ & $\#$mapping v2 wsdbm:Country uniform\\
  & SELECT ?v0 ?v1 ?v3 WHERE $\{$ ?v0 dc:Location ?v1 . ?v0 sorg:nationality $\%$v2$\%$ . ?v0 wsdbm:gender ?v3 . ?v0 rdf:type wsdbm:Role2 . $\}$\\
  \hline
  $S_3$ & $\#$mapping v1 wsdbm:ProductCategory uniform\\
  & SELECT ?v0 ?v2 ?v3 ?v4 WHERE $\{$ ?v0 rdf:type $\%$v1$\%$ . ?v0 sorg:caption ?v2 . ?v0 wsdbm:hasGenre ?v3 . ?v0 sorg:publisher ?v4 . $\}$\\
  \hline
  $S_4$ & $\#$mapping v1 wsdbm:AgeGroup uniform\\
  & SELECT ?v0 ?v2 ?v3 WHERE $\{$ ?v0 foaf:age $\%$v1$\%$ . ?v0 foaf:familyName ?v2 . ?v3 mo:artist ?v0 . ?v0 sorg:nationality wsdbm:Country1 . $\}$\\
  \hline
  $S_5$ & $\#$mapping v1 wsdbm:ProductCategory uniform\\
  & SELECT ?v0 ?v2 ?v3 WHERE $\{$ ?v0 rdf:type $\%$v1$\%$ . ?v0 sorg:description ?v2 . ?v0 sorg:keywords ?v3 . ?v0 sorg:language wsdbm:Language0 . $\}$\\
  \hline
  $S_6$ & $\#$mapping v3 wsdbm:SubGenre uniform\\
  & SELECT ?v0 ?v1 ?v2 WHERE $\{$ ?v0 mo:conductor ?v1 . ?v0 rdf:type ?v2 . ?v0 wsdbm:hasGenre $\%$v3$\%$ . $\}$\\
  \hline
  $S_7$ & $\#$mapping v3 wsdbm:User uniform\\
  & SELECT ?v0 ?v1 ?v2 WHERE $\{$ ?v0 rdf:type ?v1 . ?v0 sorg:text ?v2 . $\%$v3$\%$ wsdbm:likes ?v0 . $\}$\\
  \hline
  $F_1$ & $\#$mapping v1 wsdbm:Topic uniform\\
  & SELECT ?v0 ?v2 ?v3 ?v4 ?v5 WHERE $\{$ ?v0 og:tag $\%v1\%$ . ?v0
rdf:type ?v2 . ?v3 sorg:trailer ?v4 . ?v3 sorg:keywords ?v5 . ?v3 wsdbm:hasGenre ?v0 . ?v3 rdf:type wsdbm:ProductCategory2 . $\}$\\
  \hline
  $F_2$ & $\#$mapping v8 wsdbm:SubGenre uniform\\
  & SELECT ?v0 ?v1 ?v2 ?v4 ?v5 ?v6 ?v7 WHERE $\{$ ?v0 foaf:homepage ?v1 . ?v0 og:title ?v2 . ?v0 rdf:type ?v3 . ?v0 sorg:caption ?v4 . ?v0 sorg:description ?v5 . ?v1 sorg:url ?v6 . ?v1 wsdbm:hits ?v7 . ?v0 wsdbm:hasGenre $\%$v8$\%$ . $\}$\\
  \hline
  $F_3$ & $\#$mapping v3 wsdbm:SubGenre uniform\\
  & SELECT ?v0 ?v1 ?v2 ?v4 ?v5 ?v6 WHERE $\{$ ?v0 sorg:contentRating ?v1 .
?v0 sorg:contentSize ?v2 .
?v0 wsdbm:hasGenre $\%$v3$\%$ .
?v4 wsdbm:makesPurchase ?v5 .
?v5 wsdbm:purchaseDate ?v6 .
?v5 wsdbm:purchaseFor ?v0 .
$\}$\\
  \hline
  $F_4$ & $\#$mapping v3 wsdbm:Topic uniform\\
& SELECT ?v0 ?v1 ?v2 ?v4 ?v5 ?v6 ?v7 ?v8 WHERE $\{$ ?v0 foaf:home- page ?v1 .
?v2 gr:includes ?v0 .
?v0 og:tag $\%$v3$\%$ .
?v0 sorg:description ?v4 .
?v0 sorg:contentSize ?v8 .
?v1 sorg:url ?v5 .
?v1 wsdbm:hits ?v6 .
?v1 sorg:language wsdbm:Language0 .
?v7 wsdbm:likes ?v0 .
$\}$\\
  \hline
  $F_5$ & $\#$mapping v2 wsdbm:Retailer uniform\\
& SELECT ?v0 ?v1 ?v3 ?v4 ?v5 ?v6 WHERE $\{$ ?v0 gr:includes ?v1 .
$\%$v2$\%$ gr:offers ?v0 .
?v0 gr:price ?v3 .
?v0 gr:validThrough ?v4 .
?v1 og:title ?v5 .
?v1 rdf:type ?v6 .
$\}$\\
  \hline
  $C_1$ & SELECT ?v0 ?v4 ?v6 ?v7 WHERE $\{$ ?v0 sorg:caption ?v1 .
?v0 sorg:text ?v2 .
?v0 sorg:contentRating ?v3 .
?v0 rev:hasReview ?v4 .
?v4 rev:title ?v5 .
?v4 rev:reviewer ?v6 .
?v7 sorg:actor ?v6 .
?v7 sorg:language ?v8 .
$\}$\\
  \hline
  $C_2$ & SELECT ?v0 ?v3 ?v4 ?v8 WHERE $\{$ ?v0 sorg:legalName ?v1 .
?v0 gr:offers ?v2 .
?v2 sorg:eligibleRegion wsbm:Country5 .
?v2 gr:includes ?v3 .
?v4 sorg:jobTitle ?v5 .
?v4 foaf:homepage ?v6 .
?v4 wsdbm:makesPurchase ?v7 .
?v7 wsdbm:purchaseFor ?v3 .
?v3 rev:has Review ?v8 .
?v8 rev:totalVotes ?v9 .
$\}$\\
  \hline
  $C_3$ & SELECT ?v0 WHERE $\{$ ?v0 wsdbm:likes ?v1 .
?v0 wsdbm:friendOf ?v2 .
?v0 dc:Location ?v3 .
?v0 foaf:age ?v4 .
?v0 wsdbm:gender ?v5 .
?v0 foaf:givenName ?v6 .
$\}$\\
  \hline
    \end{tabular}
  \vspace{-0.1in}
\label{table:WatDivQueries}
\end{table}

\begin{table}
\scriptsize
\centering
\caption{LUBM Queries}
\begin{tabular}{|c|m{.8\textwidth}|}
\hline
{$Q_1$} & SELECT ?x,?y,?z WHERE $\{$ ?x rdf:type ub:GraduateStudent.?y rdf:type ub:University.?z rdf:type ub:Department.?x ub:memberOf ?z.?z ub:subOrganizationOf ?y.?x ub:undergraduateDegreeFrom ?y$\}$\\
  \hline
{$Q_2$} & SELECT ?x WHERE $\{$ ?x rdf:type ub:Course.?x ub:name ?y. $\}$\\
  \hline
{$Q_3$} & SELECT ?x,?y,?z WHERE $\{$ ?x rdf:type ub:UndergraduateStudent. ?y rdf:type ub:University. ?z rdf:type ub:Department. ?x ub:memberOf ?z. ?z ub:subOrganizationOf ?y. ?x ub:undergraduateDegreeFrom ?y$\}$\\
  \hline
{$Q_4$} & SELECT ?x WHERE $\{$ ?x ub:worksFor http://www.Department0. University0.edu. ?x rdf:type ub:FullProfessor. ?x ub:name ?y1. ?x ub: emailAddress ?y2. ?x ub:telephone ?y3. $\}$\\
  \hline
{$Q_5$} & SELECT ?x WHERE $\{$ ?x	ub:subOrganizationOf	http://www. Department0.University0.edu.
?x	rdf:type	ub:ResearchGroup. $\}$\\
  \hline
{$Q_6$} & SELECT ?x,?y WHERE $\{$  ?y	rdf:type	ub:Department.
   ?y	ub:subOrganizationOf	http://www.University0.edu.
?x	ub:worksFor	?y.
?x	rdf:type	ub:FullProfessor. $\}$\\
  \hline
{$Q_7$} & SELECT ?x,?y,?z WHERE $\{$  ?x rdf:type ub:UndergraduateStudent. ?y rdf:type ub:FullProfessor. ?z rdf:type ub:Course. ?x ub:advisor ?y. ?x ub:takesCourse ?z. ?y ub:teacherOf ?z. $\}$\\
\hline
    \end{tabular}
\label{table:LUBMQueries}
\end{table}

\begin{table}
\scriptsize
\centering
\caption{FedBench Queries in Cross Domain}
\begin{tabular}{|c|m{.8\textwidth}|}
\hline
{$CD_1$} & SELECT ?predicate ?object WHERE $\{$
   $\{$ <http://dbpedia.org/resource/Barack$\_$Obama> ?predicate ?object $\}$
   UNION
   $\{$ ?subject <http://www.w3.org/2002/07/owl$\#$sameAs> <http://dbpedia.org/resource/Barack$\_$Obama> .
     ?subject ?predicate ?object $\}$
$\}$\\
\hline
{$CD_2$} & SELECT ?party ?page  WHERE $\{$
   <http://dbpedia.org/resource/Barack$\_$Obama> <http://dbpedia.org/ontology/party> ?party .
   ?x <http:// data.nytimes.com/elements/topicPage> ?page .
   ?x <http://www.w3.org/2002/07/owl$\#$ sameAs> <http://dbpedia.org/resource/Barack$\_$Obama> .
$\}$\\
\hline
{$CD_3$} & SELECT ?president ?party ?page WHERE $\{$
   ?president <http://www.w3.org/1999/02/22-rdf-syntax-ns$\#$type> <http://dbpedia.org/ontology/President> .
   ?president <http://dbpedia.org/ontology/nationality> <http://dbpedia.org/resource/United$\_$States> .
   ?president <http://dbpedia.org/ontology/party> ?party .
   ?x <http://data.nytimes.com/elements/topicPage> ?page .
   ?x <http://www. w3.org/2002/07/owl$\#$sameAs> ?president .
$\}$\\
\hline
{$CD_4$} & SELECT ?actor ?news WHERE $\{$
   ?film <http://purl.org/dc/terms/title> ``Tarzan'' .
   ?film <http://data.linkedmdb.org/resource/movie/actor> ?actor .
   ?actor <http://www.w3.org/2002/07/owl$\#$sameAs> ?x.
   ?y <http://www.w3.org/2002/07/owl$\#$sameAs> ?x .
   ?y <http://data.nytimes.com/elements/topicPage> ?news
$\}$\\
\hline
{$CD_5$} & SELECT ?film ?director ?genre WHERE $\{$
   ?film <http://dbpedia.org/ontology/director>  ?director .
   ?director <http://dbpedia.org/ontology/nationality> <http://dbpedia.org/resource/Italy> .
   ?x <http://www.w3.org/2002/07/owl$\#$sameAs> ?film .
   ?x <http://data.linkedmdb.org/resource/movie/genre> ?genre .
$\}$\\
\hline
{$CD_6$} & SELECT ?name ?location ?news WHERE $\{$
   ?artist <http://xmlns.com/foaf/0.1/name> ?name .
   ?artist <http://xmlns.com/foaf/0.1/based$\_$near> ?location .
   ?location <http://www.geonames.org/ontology$\#$parentFeature> ?germany .
   ?germany <http://www.geonames.org/ontology$\#$name> ``Federal Republic of Germany''
$\}$\\
\hline
{$CD_7$} & SELECT ?location ?news WHERE $\{$
   ?location <http://www. geonames.org/ontology$\#$parentFeature> ?parent .
   ?parent <http://www. geonames.org/ontology$\#$name> ``California'' .
   ?y <http://www.w3.org/ 2002/07/owl$\#$sameAs> ?location .
   ?y <http://data.nytimes.com/elements/ topicPage> ?news
$\}$\\
\hline
{$ECD_1$} & SELECT ?name WHERE $\{$
<http://data.semanticweb.org/conference/ www/2008>	<http://xmlns.com/foaf/0.1/based$\_$near>	?location .
?location	<http://www.geonames.org/ontology$\#$name>	?name .$\}$\\
\hline
{$ECD_2$} & SELECT ?actor WHERE $\{$
?actor	<http://www.w3.org/2002/07/owl$\#$sameAs>	?x .
?actor	<http://www.w3.org/1999/02/22-rdf-syntax-ns$\#$type>	<http://data.linkedmdb.org/resource/movie/actor> . $\}$\\
\hline
{$ECD_{3}$} & SELECT ?m ?c WHERE $\{$ ?a <http://dbpedia.org/ontology/ residence> ?c .
?c <http://www.w3.org/1999/02/22-rdf-syntax-ns$\#$type> <http://dbpedia.org/ontology/Place> .
?a1	<http://www.w3.org/2002/07/owl$\#$sameAs>	?a .
?m <http://data.linkedmdb.org/resource/movie/actor> ?a1 .$\}$\\
\hline
{$ECD_{4}$} & SELECT ?x ?y WHERE $\{$
?gplace	<http://www.geonames.org/ontology$\#$alternateName>	"Philadelphia"@en .
?gplace	<http://www.w3.org/2002/07/owl$\#$sameAs> ?place .
?x <http://dbpedia.org/ontology/birthPlace> ?place .
?x <http://dbpedia.org/ontology/spouse> ?y .
?z <http://dbpedia.org/ontology/starring> ?x .
?z <http://dbpedia.org/ontology/starring> ?y .
?y1	<http://www.w3.org/2002/07/owl$\#$sameAs>	?y .
?y1 <http://www.w3.org/1999/02/22-rdf-syntax-ns$\#$type> <http://data.linkedmdb.org/resource/movie/actor> .$\}$\\
\hline
{$ECD_{5}$} & Select ?n1 ?n2 ?gn where $\{$
?a1	<http://xmlns.com/foaf/0.1/name> ?n1 .
?p2 <http://www.w3.org/2002/07/owl$\#$sameAs> ?a2 .
?p2 <http://data.linkedmdb.org/resource/movie/actor$\_$name> ?n2 .
?a1 <http://dbpedia.org/ontology/award> ?award .
?a2 <http://dbpedia.org/ontology/award> ?award .
?a1 <http://dbpedia.org/ontology/birthPlace> ?city .
?a2	<http://dbpedia.org/ontology/birthPlace> ?city .
?gc	<http://www.w3.org/2002/07/owl$\#$sameAs> ?city .
?gc	<http://www.geonames.org/ontology$\#$name> ?gn .$\}$\\
\hline
\end{tabular}
\label{table:FedBenchCDQueries}
\end{table}

\begin{table}
\scriptsize
\centering
\caption{FedBench Queries in Life Science Domain}
\begin{tabular}{|c|m{.84\textwidth}|}
\hline
{$LS_1$} & SELECT $\$$drug $\$$melt WHERE $\{$
    $\{$ $\$$drug <http://www4.wiwiss.fu-berlin.de/drugbank/resource/drugbank/meltingPoint> $\$$melt. $\}$
    UNION
    $\{$ $\$$drug <http://dbpedia.org/ontology/Drug/meltingPoint> $\$$melt . $\}$
$\}$\\
\hline
{$LS_2$} & SELECT ?predicate ?object WHERE $\{$
    $\{$ <http://www4.wiwiss.fu-berlin.de/drugbank/resource/drugs/DB00201> ?predicate ?object . $\}$
    UNION
    $\{$ <http://www4.wiwiss.fu-berlin.de/drugbank/resource/ drugs/DB00201> <http://www.w3.org/2002/07/owl$\#$sameAs> ?caff .
      ?caff ?predicate ?object . $\}$
$\}$\\
\hline
{$LS_3$} & SELECT ?Drug ?IntDrug ?IntEffect WHERE $\{$ ?Drug <http://www.w3.org/1999/02/22-rdf-syntax-ns$\#$type> <http://dbpedia.org/ontology/Drug> .
    ?y <http://www.w3.org/2002 /07/owl$\#$sameAs> ?Drug .
    ?Int <http://www4.wiwiss.fu-berlin.de/drugbank/resource/drugbank/ interactionDrug1> ?y .
    ?Int <http://www4.wiwiss.fu-berlin.de/drugbank/ resource/drugbank/interactionDrug2> ?IntDrug .
    ?Int <http://www4.wi wiss.fu-berlin.de/drugbank/resource/drugbank/text> ?IntEffect .
$\}$\\
\hline
{$LS_4$} & SELECT ?drugDesc ?cpd ?equation WHERE $\{$ ?drug <http://www4.wiwiss.fu-berlin.de/drugbank/resource/drugbank/drugCategory> <http://www4. wiwiss.fu-berlin.de/drugbank/resource/drugcategory/ cathartics> .
   ?drug <http://www4.wiwiss.fu-berlin.de/drugbank/resource/drugbank/keggCompoundId> ?cpd .
   ?drug <http://www4.wiwiss.fu-berlin.de/drugbank/ resource/drugbank/description> ?drugDesc .
   ?enzyme <http://bio2rdf.org/ns/kegg$\#$xSubstrate> ?cpd .
   ?enzyme <http://www.w3.org/ 1999/02/22-rdf-syntax-ns$\#$type> <http://bio2rdf.org/ns/kegg$\#$Enzyme> .
   ?reaction <http://bio2rdf.org/ns/kegg$\#$xEnzyme> ?enzyme .
   ?reaction <http://bio2rdf.org/ns/kegg$\#$equation> ?equation .
$\}$\\
\hline
{$LS_5$} & SELECT ?drug ?keggUrl ?chebiImage WHERE $\{$
  ?drug <http://www.w3.org/1999/02/22-rdf-syntax-ns$\#$type> <http://www4.wiwiss.fu-berlin.de/drugbank/resource/drugbank/ drugs> .
  ?drug <http://www4.wiwiss.fu-berlin.de/drugbank/ resource/drugbank/keggCompoundId> ?keggDrug .
  ?keggDrug <http://bio2rdf.org/ns/bio2rdf$\#$url> ?keggUrl .
  ?drug <http://www4.wiwiss.fu-berlin.de/drugbank/resource/drugbank/ genericName> ?drugBankName .
  ?chebiDrug <http://purl.org/dc/ elements/1.1/title> ?drugBankName .
  ?chebiDrug <http://bio2rdf.org/ns/ bio2rdf$\#$image> ?chebiImage .
$\}$\\
\hline
{$LS_6$} & SELECT ?drug ?title WHERE $\{$
	 ?drug <http://www4.wiwiss.fu-berlin.de/drugbank/resource/drugbank/drugCategory> <http://www4.wi wiss.fu-berlin.de/drugbank/resource/drugcategory/ micronutrient> .
	 ?drug <http://www4.wiwiss.fu-berlin.de/drug- bank/resource/drugbank/casRegistryNumber> ?id .
	 ?keggDrug <http://www.w3.org/1999/02/22-rdf-syntax-ns$\#$type> <http://bio2rdf.org/ns/kegg$\#$Drug> .
	 ?keggDrug <http://bio2rdf.org/ns/bio2rdf$\#$xRef> ?id .
	 ?keggDrug <http://purl.org/dc /elements/1.1/ title> ?title .
$\}$\\
\hline
{$LS_7$} & SELECT $\$$drug $\$$transform $\$$mass WHERE $\{$
 	$\{$ $\$$drug <http://www4.wiwiss.fu-berlin.de/drugbank/resource/drugbank/ affectedOrganism>  'Humans and other mammals'.
 	  $\$$drug <http://www4.wiwiss.fu-berlin.de/drug bank/resource/drugbank/casRegistryNumber> $\$$cas .
 	  $\$$kegg Drug <http://bio2rdf.org/ns/bio2rdf$\#$xRef> $\$$cas .
 	  $\$$keggDrug <http://bio 2rdf.org/ns/bio2rdf $\#$mass> $\$$mass
 	      FILTER ( $\$$mass > '5' )
 	$\}$
 	  OPTIONAL $\{$ $\$$drug <http://www4.wiwiss.fu-berlin.de/drug bank/resource/drugbank/biotransformation> $\$$trans form . $\}$
$\}$\\
\hline
{$ELS_1$} & SELECT ?num WHERE $\{$
<http://www4.wiwiss.fu-berlin.de/drugbank/resource/drugs/DB00918>	<http://www.w3.org/2002/07/owl$\#$sameAs>	?x .
?x	<http://dbpedia.org/ontology/casNumber>	?num .$\}$\\
\hline
{$ELS_2$} & SELECT ?y WHERE $\{$
?x	<http://bio2rdf.org/ns/bio2rdf$\#$xRef>	?y .
?x	<http://www.w3.org/1999/02/22-rdf-syntax-ns$\#$type> <http://bio2rdf.org/ns/kegg$\#$Compound> .$\}$\\
\hline
{$ELS_{3}$} & SELECT ?n ?t WHERE $\{$
?Drug <http://www.w3.org/1999/02/22-rdf-syntax-ns$\#$type> <http://dbpedia.org/ontology/Drug> .
?y <http://www.w3.org/2002/ 07/owl$\#$sameAs> ?Drug .
?y	<http://www4.wiwiss.fu-berlin.de/ drugbank/resource/drugbank/brandName> ?n .
?y	<http://www4.wiwiss.fu-berlin.de/drugbank/ resource/drugbank/drugType> ?t .$\}$\\
\hline
{$ELS_{4}$} & SELECT ?parent WHERE $\{$
?y 	<http://bio2rdf.org/ns/bio2rdf$\#$formula>	"C8H10NO6P" .
?z	http://bio2rdf.org/ns/chebi$\#$has$\_$functional$\_$parent>	?parent .
?x	<http://www4.wiwiss.fu-berlin.de/drugbank/resource/drug- bank/keggCompoundId>	?y .
?x	<http://www4.wiwiss.fu-berlin.de/drugbank/resource/drugbank/chebiId>	?z .
?y	<http://bio2rdf.org/ns/bio2rdf$\#$xRef>	?z .$\}$\\
\hline
{$ELS_{5}$} & SELECT ?y1 ?label WHERE $\{$
?x1	<http://bio2rdf.org/ns/chebi$\#$has$\_$role>	?x2 .
?x1	<http://bio2rdf.org/ns/chebi$\#$has$\_$role>	?x3 .
?x2	<http://bio2rdf.org/ns/chebi$\#$is$\_$a>	?x4 .
?x3	<http://bio2rdf.org/ns/chebi$\#$is$\_$a>	?x4 .
?y1	<http://www4.wiwiss.fu-berlin.de/drugbank/resource/drugbank/chebiId>	?x1 .
?y1	<http://www.w3.org/1999/02/22-rdf-syntax-ns$\#$type>	<http://www4.wiwiss.fu-berlin.de/drugbank/resource/drugbank/drugs> .
?x4	<http://www.w3.org/2000/01/rdf-schema$\#$label>	?label . $\}$\\
\hline
\end{tabular}
\label{table:FedBenchLSQueries}
\end{table}

\begin{table}
\scriptsize
\centering
\begin{threeparttable}
\begin{tabular}{|c|m{.8\textwidth}|}
\hline
$Q_1$ & SELECT ?x WHERE $\{$ ?x	rdf:type ub:GraduateStudent. ?x ub:takesCourse http://www.Department0.University0.edu/GraduateCourse0$\}$\\
\hline
{$Q_2$} & SELECT ?x,?y where $\{$ ?x rdf:type ub:GraduateStudent. ?y rdf:type ub:University. ?z rdf:type ub:Department. ?x ub:memberOf ?z. ?z ub:subOrganizationOf ?y. ?x ub:undergraduateDegreeFrom ?y$\}$\\
\hline
{$Q_3$} & SELECT ?x WHERE $\{$ ?x rdf:type ub:Publication. ?x ub:publicationAuthor  http://www.Department0.University0.edu/AssistantProfessor0$\}$\\
\hline
{$Q_4$} & SELECT ?x,?y1,?y3 WHERE $\{$ ?x ub:worksFor http://www.Department0.University0.edu. ?x ub:name ?y1. ?x ub:emailAddress ?y2. ?x ub:telephone ?y3.$\}$\\
\hline
\multirow{2}{*}{$Q_5$}\tnote{$*$} & SELECT ?x WHERE $\{$ ?x ub:memberOf http://www.Department0.University0.edu.$\}$\\
\cline{2-2}
&(removing the triple pattern ``?x rdf:type ub:Person'' with type reasoning)\\
\hline
\multirow{2}{*}{$Q_6$}\tnote{$*$} & SELECT ?x WHERE $\{$ ?x rdf:type ub:UndergraduateStudent.$\}$\\
\cline{2-2}
&(modifying the triple pattern ``?x rdf:type ub:Student'' with type reasoning to ``?x rdf:type ub:UndergraduateStudent'')\\
\hline
\multirow{2}{*}{$Q_7$}\tnote{$*$} & SELECT ?x,?y WHERE $\{$ ?x rdf:type ub:UndergraduateStudent. ?x ub:takesCourse ?y. ?y rdf:type
ub:Course. http://www.Department0.University0.edu//AssociateProfessor0 ub:teacherOf ?y.$\}$\\
\cline{2-2}
&(modifying the triple pattern ``?x rdf:type ub:Student'' with type reasoning to ``?x rdf:type ub:UndergraduateStudent'')\\
\hline
\multirow{2}{*}{$Q_8$}\tnote{$*$} & SELECT ?x,?y WHERE $\{$ ?x rdf:type ub:UndergraduateStudent. ?x ub:memberOf ?y. ?x ub:emailAddress ?z. ?y rdf:type ub:Department. ?y ub:subOrganizationOf http://www.University0.edu$\}$\\
\cline{2-2}
&(modifying the triple pattern ``?x rdf:type ub:Student'' with type reasoning to ``?x rdf:type ub:UndergraduateStudent'')\\
\hline
\multirow{2}{*}{$Q_9$}\tnote{$*$} & SELECT ?x,?y WHERE $\{$ ?x rdf:type ub:UndergraduateStudent. ?y rdf:type ub:FullProfessor. ?z rdf:type ub:Course. ?x ub:advisor ?y. ?x ub:takesCourse ?z. ?y ub:teacherOf ?z. $\}$\\
\cline{2-2}
&(modifying the triple pattern ``?x rdf:type ub:Student'' with type reasoning to ``?x rdf:type ub:UndergraduateStudent'' and the triple pattern ``?x rdf:type ub:Faculty'' with type reasoning to ``?x rdf:type ub:FullProfessor'')\\
\hline
\multirow{2}{*}{$Q_{10}$}\tnote{$*$} & SELECT ?x WHERE $\{$ ?x rdf:type ub:GraduateStudent. ?x ub:takesCourse http://www.Department0.University0.edu/GraduateCourse0.$\}$\\
\cline{2-2}
&(modifying the triple pattern ``?x rdf:type ub:Student'' with type reasoning to ``?x rdf:type ub:UndergraduateStudent'')\\
\hline
{$Q_{11}$} & SELECT ?x,?y WHERE $\{$ ?x rdf:type ub:ResearchGroup. ?x ub:subOrganizationOf ?y. ?y ub:subOrganizationOf http://www.University0.edu.$\}$\\
\hline
\multirow{2}{*}{$Q_{12}$}\tnote{$*$} & SELECT ?x,?y WHERE $\{$ ?x ub:headOf ?y. ?y rdf:type ub:Department. ?y ub:subOrganizationOf http://www.University0.edu.$\}$\\
\cline{2-2}
&(removing the triple pattern ``?x rdf:type ub:Chair'' with type reasoning)\\
\hline
\multirow{2}{*}{$Q_{13}$}\tnote{$*$} & SELECT ?x WHERE $\{$ ?x ub:undergraduateDegreeFrom http://www.University0.edu.$\}$\\
\cline{2-2}
&(removing the triple pattern ``?x rdf:type ub:Person'' with type reasoning and modifying the triple pattern ``http://www.University0.edu ub:hasAlumnus ?x'' with property reasoning to ``?x ub:undergraduateDegreeFrom http://www.University0.edu'')\\
\hline
{$Q_{14}$} & SELECT ?x WHERE $\{$ ?x rdf:type ub:UndergraduateStudent.$\}$\\
\hline
\end{tabular}
\begin{tablenotes}
  \scriptsize
  \item[$*$] means that the query removes or modifys some triple patterns with reasoning.
  \end{tablenotes}
  \end{threeparttable}
\caption{Original SPARQL Queries in LUBM}
\label{table:LUBM14Queries}
\end{table}

\begin{table}
\scriptsize
\centering
\caption{BTC Queries}
\begin{tabular}{|c|m{.8\textwidth}|}
\hline
{$Q_1$} & SELECT ?lat,?long WHERE $\{$  ?a [] ``Eiffel Tower"@en. ?a $<$http://www .w3.org/2003/01/geo/wgs84$\_$pos$\#$lat$>$ ?lat. ?a $<$http://www.w3.org/2003 /01/geo/wgs84$\_$pos$\#$long$>$ ?long. ?a $<$http://dbpedia.org/ontology/location$>$  $<$http://dbpedia.org/resource/France$>$.$\}$\\
\hline
{$Q_2$} & SELECT ?b,?p,?bn WHERE $\{$  ?a [] ``Tim Berners-Lee"@en. ?a $<$http:// dbpedia.org/property/dateOfBirth$>$ ?b. ?a $<$http://dbpedia.org/ property/placeOfBirth$>$ ?p. ?a $<$http://dbpedia.org/property/name$>$ ?bn.$\}$\\
\hline
{$Q_3$} & SELECT ?t,?lat,?long WHERE $\{$  ?a $<$http://dbpedia.org/property/wikiLinks$>$ $<$http://dbpedia.org/resource/List$\_$of$\_$World$\_$Heritage$\_$Sites $\_$in$\_$Europe$>$.
?a $<$http://dbpedia.org/property/title$>$ ?t.
?a $<$http://www .w3.org/2003/01/geo/wgs84$\_$pos$\#$lat$>$ ?lat.
?a $<$http://www.w3.org/2003/ 01/geo/wgs84$\_$pos$\#$long$>$ ?long.
?a $<$http://dbpedia.org/property/wikiLinks$>$  $<$http://dbpedia.org/resource/Middle$\_$Ages$>$.$\}$\\
\hline
{$Q_4$} & SELECT ?l,?long,?lat WHERE $\{$ ?p $<$http://dbpedia.org/property/ name$>$	``Krebs, Emil"@en.
?p $<$http://dbpedia.org/property/deathPlace$>$ ?l.
?c [] ?l.
?c $<$http://www.geonames.org/ontology$\#$featureClass$>$  $<$http:// www.geonames.org/ontology$\#$P$>$.
?c $<$http://www.geonames.org/onto- logy$\#$inCountry$>$  $<$http://www.geonames.org/countries/$\#$DE$>$.
?c $<$http:// www.w3.org/2003/01/geo/wgs84$\_$pos$\#$lat$>$ ?lat.
?c $<$http://www.w3.org/ 2003/01/geo/wgs84$\_$pos$\#$long$>$ ?long$.\}$\\
\hline
{$Q_5$} & SELECT distinct ?l,?long,?lat WHERE $\{$  ?a [] ``Barack Obama''@en.
?a $<$http://dbpedia.org/property/placeOfBirth$>$ ?l.
?l $<$http:// www.w3.org/2003/01/geo/wgs84$\_$pos$\#$lat$>$ ?lat.
?l $<$http://www.w3.org/ 2003/01/geo/wgs84$\_$pos$\#$long$>$ ?long.$\}$\\
\hline
{$Q_6$} & SELECT distinct ?d WHERE $\{$  ?a $<$http://dbpedia.org/property/ senators$>$ ?c.
?a $<$http://dbpedia.org/property/name$>$ ?d.
?c $<$http://dbpedia.org/property/profession$>$ $<$http://dbpedia.org/resource/Veterinarian$>$.
?a $<$http://www.w3.org/2002/07/owl$\#$sameAs$>$ ?b.
?b $<$http://www.geonames.org/ontology$\#$inCountry$>$  $<$http://www.geonames.org/countries/$\#$US$>$.$\}$\\
\hline
{$Q_7$} & SELECT distinct ?a,?b,?lat,?long WHERE $\{$  ?a $<$http://dbpedia.org/property/spouse$>$ ?b.
?a $<$http://dbpedia.org/property/wikiLinks$>$  $<$http://dbpedia.org/property/actor$>$.
?b $<$http://dbpedia.org/property/wikiLinks$>$  $<$http://dbpedia.org/property/actor$>$.
?a $<$http://dbpedia.org/property/placeofbirth$>$ ?c.
?b $<$http://dbpedia.org/property/placeofbirth$>$ ?c.
?c $<$http://www.w3.org/2002/07/owl$\#$sameAs$>$ ?c2.
?c2 $<$http://www.w3.org/2003/01/geo/wgs84$\_$pos$\#$lat$>$ ?lat.
?c2 $<$ http://www.w3.org /2003/01/geo/wgs84 $\_$pos$\#$long$>$ ?long.$\}$\\
\hline
\end{tabular}
\label{table:BTCQueries}
\end{table}

\begin{table*}
\vspace{-0.1in}
\scriptsize
\centering
\begin{threeparttable}
  \begin{tabular}{|c|c|p{0.4cm}|p{0.08cm}|r|r|r|r|r|r|r|r|r|r|r|}
  \hline
   &  &Query &  & \multicolumn{3}{c|}{Partial Evaluation} & \multicolumn{3}{c|}{Assembly} &  \multicolumn{3}{c|}{Total} & \tabincell{c}{$\#$ of }  &  \tabincell{c}{$\#$ of}\\
  \cline{5-13}
  &  & & &  & & & \multicolumn{2}{c|}{Time(in ms)}& & \multicolumn{2}{c|}{Time(in ms)}& & LPMFs &CMFs\\
  \cline{8-9}\cline{11-12}
  & & & &  Time(in ms) &  $\#$ of LPMs\tnote{$2$}& $\#$ of IMs\tnote{$3$} & \tabincell{c}{Centralized}	& \tabincell{c}{Distributed} & $\#$ of CMs\tnote{$4$}& PECA\tnote{$5$}	& PEDA\tnote{$6$}& $\#$ of Matches\tnote{$7$}& & \\
 \hline
  \multirow{4}{*}{Edge}& \multirow{4}{*}{\includegraphics[scale=0.1]{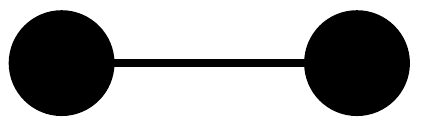}}  &  Q5& $\surd$\tnote{$1$} & 191	& 0 & 678 & 0	& 0	& 0& 191	& 191  & 678 & 0 & 0 \\
  \cline{3-15}
  & & Q6 &  & 13648	& 0 & 7924765 & 0	& 0	& 0& 13648	& 13648 & 7924765&  0 & 0 \\
  \cline{3-15}
  & & Q13 & $\surd$ & 231	& 0 & 3295 & 0	& 0	& 0& 231	& 231 & 3295&  0 & 0\\
  \cline{3-15}
  & & Q14 &  & 13646	& 0 & 7924765 & 0	& 0	& 0& 13646	& 13646 & 7924765&  0 & 0 \\
  \cline{1-15}
  \multirow{5}{*}{Star}& \multirow{5}{*}{\includegraphics[scale=0.1]{query_style_star.pdf}}&  Q1& $\surd$ & 191	& 0 & 1081187 & 0	& 0	& 0& 191	& 191 & 1081187&  0 & 0 \\
  \cline{3-15}
  & & Q3 & $\surd$ & 120	& 0 & 4 & 0	& 0	& 0 & 120	& 120 & 4&  0 & 0\\
  \cline{3-15}
  & & Q4 & $\surd$ & 85	& 0 & 10 & 0	& 0	& 0 & 85	& 85 & 10&  0 & 0 \\
  \cline{3-15}
  & & Q10 & $\surd$ & 190	& 0 & 4 & 0	& 0	& 0 & 190	& 190 & 4&  0 & 0\\
  \cline{3-15}
  & & Q12 & $\surd$ & 127	& 0 & 540 & 0	& 0	& 0 & 127	& 127 & 540&  0 & 0 \\
  \cline{1-15}
  Linear& \multirow{1}{*}{\includegraphics[scale=0.1]{query_style_path.pdf}} & Q11 & $\surd$ & 42	& 0 & 4 & 0	& 0	& 0 & 42	& 42 & 4&  0 & 0 \\
  \cline{1-15}
  \multirow{2}{*}{Snowflake}& \multirow{2}{*}{\includegraphics[scale=0.1]{query_style_snowflake.pdf}} &  Q7& $\surd$ & 1660	& 60172 & 58 & 1338	& 1415	& 1 & 2998	& 3075  & 59& 10 & 10 \\
  \cline{3-15}
  & & Q8 & $\surd$ & 1132	& 20050 & 5901 & 654	& 728	& 15 & 1786	& 1860  & 5916&  10 & 10 \\
  \cline{1-15}
  \multirow{2}{*}{Complex}& \multirow{2}{*}{\includegraphics[scale=0.1]{query_style_complex.pdf}}  &Q2 &  & 52712	& 18035 & 2506 & 639	& 439	& 22 & 53351	& 53151 & 2528& 10 & 10 \\
  \cline{3-15}
  & & Q9 &  & 3328	&12049 & 44086 & 763	& 362	& 104 & 4091	& 3690 & 44190& 10 & 10 \\
  \hline
  \end{tabular}
  \end{threeparttable}
  \caption{Evaluation of Each Stage for 14 Benchmark Queries over LUBM 1000}
  \label{table:LUBM100014QueriesEvaluation}
\end{table*}


\vspace{-0.1in}
\section{Results of 14 Benchmark Queries over LUBM 1000}\vspace{-0.1in}
\label{sec:allqueries}
Table \ref{table:LUBM100014QueriesEvaluation} shows the experimental results of each stage for original 14 LUBM benchmark queries listed in Table \ref{table:LUBM14Queries}. Note that, since the current version of gStore does not currently support type reasoning, we revised the original 14 to remove type reasoning. The resulting queries return larger result sets since there is no filtering as a result of type reasoning.

Generally speaking, many of these queries are simple pattern queries or they are quite similar to each other, and the 7 queries chosen by \cite{DBLP:conf/www/AtreCZH10,Zou:2013fk} are representative queries out of this list. Hence, the results of the 14 benchmark queries are also similar to the results of our 7 representative queries in Table \ref{table:queriesperformanceLUBM}.


\section{Exp 4 -- Expanded Results on MapReduce Effect}
\label{sec:exp4cont}
Existing partition-based solutions, such as GraphPartition and TripleGroup, use MapReduce jobs to join intermediate results to find SPARQL matches. In order to evaluate the cost of MapReduce-jobs, we perform the following experiments over WatDiv 100M. We revise join processing in both GraphPartition and TripleGroup, by applying joins where intermediate results are sent to a central server using MPI.  We use WatDiv 100M and only consider the benchmark queries that need join processing ($L_1$, $L_2$, $L_5$, $F_1$, $F_2$, $F_3$, $F_3$, $F_4$, $F_5$, $C_1$ and $C_2$) in our experiments. Moreover, all partition-based methods generate intermediate results and merge them at a central sever that shares the same framework with PECA, so we only compare them with PECA.

Tables \ref{table:LocalJoinCostGraphPartition} and \ref{table:LocalJoinCostTG} show the performance of the three approaches. Our technique is always faster regardless of the use of MPI or MapReduce-based join. This is because  our method produces  smaller intermediate result sets; MapReduce-based join dominates the query cost.

Tables \ref{table:LocalJoinCostGraphPartition} and \ref{table:LocalJoinCostTG} demonstrate that our partial evaluation process is more expensive in evaluating local queries than GraphPartition and TripleGroup in many cases. This is easy to understand -- since the subquery structures in GraphPartition and TripleGroup are fixed, such as stars, it is  cheaper to find these local query results than finding local partial matches.

Our system generally outperforms GraphPartition and TripleGroup significantly if they use MapReduce-based join. Even when GraphPartition and TripleGroup use distributed joins, our system is still faster than them in most cases (8 out of 10 queries in this experiment).

\begin{table*}
\scriptsize
\centering
  \begin{tabular}{|r|r|r|r|r|r|r|r|r|}
  \hline
  & \multicolumn{3}{c|}{PECA}	& \multicolumn{3}{c|}{GraphPartition} & \multicolumn{2}{c|}{MPI-revised GraphPartition} \\
  \hline
  & \tabincell{c}{Partial Evaluation}	& \tabincell{c}{Assembly} & \tabincell{c}{Total Time} & \tabincell{c}{Finding Partial Matches} &\tabincell{c}{MapReduce-based\\Join} & \tabincell{c}{MapReduce-based\\Total Time}	& \tabincell{c}{MPI-based\\Join} & \tabincell{c}{MPI-based\\Total Time}\\
  \hline
  $L_1$ & 2350	& 1 & 2351	& 1423  & 19570	& 20993 & 183	& \textbf{1606}\\
  \hline
     $L_2$& 557	& 1 & \textbf{558}	& 386 & 16420	& 16806   & 204	& 590\\
  \hline
    $L_5$ & 524	& 2 & \textbf{526}	& 479 & 27480	& 27959  & 76	& 555\\
    \hline
    $F_1$ &  3906	& 1 & \textbf{3907}	& 4011 & 36200	& 40211  & 35	& 4046\\
  \hline
     $F_2$& 2659	& 31 & \textbf{2690}	& 2466 & 58180	& 60646  & 1277	& 3743\\
  \hline
    $F_3$ & 16077	& 1945 & \textbf{18022}	& 14136 & 61400	& 75536  & 4191	& 18327\\
  \hline
     $F_4$& 21446	& 47 & 21493	& 15535 & 34060	& 49595  & 165	& \textbf{15700}\\
  \hline
    $F_5$ & 9043	& 43 & \textbf{9086}	& 9910 & 51110	& 61020  & 1900	& 11810\\
  \hline
     $C_1$& 12969	& 52 & \textbf{13021}	& 9799 & 223670	& 233469  & 18522	& 28321\\
  \hline
    $C_2$ & 37850	& 1454 & \textbf{39304}	& 44998 & 2168300	& 2213298  & 19494	& 64492\\
  \hline
  \end{tabular}
  \caption{Query Response Time over Partitioning Strategy of GraphPartition (in milliseconds)}
  \label{table:LocalJoinCostGraphPartition}
\end{table*}

\begin{table*}
\scriptsize
\centering
  \begin{tabular}{|r|r|r|r|r|r|r|r|r|}
  \hline
  & \multicolumn{3}{c|}{PECA}	& \multicolumn{3}{c|}{TripleGroup}& \multicolumn{2}{c|}{MPI-revised TripleGroup} \\
  \hline
  & \tabincell{c}{Partial Evaluation}	& \tabincell{c}{Assembly} & \tabincell{c}{Total Time} & \tabincell{c}{Finding Partial Matches} & \tabincell{c}{MapReduce-based\\Join} & \tabincell{c}{MapReduce-based\\Total Time}	& \tabincell{c}{MPI-based\\Join} & \tabincell{c}{MPI-based\\Total Time}\\
  \hline
  $L_1$ & 2250      & 1 & 2251	& 1122  & 20840	& 21962 & 452	& \textbf{1574}\\
  \hline
     $L_2$& 249     & 1 & \textbf{250}     & 204  & 16150	& 16354 & 50	& 254\\
  \hline
    $L_5$ & 737    & 2 & 739    & 304 & 27550	& 27854 & 70	& \textbf{374}\\
    \hline
    $F_1$ & 5753    & 1 & \textbf{5753}     & 4413 & 36230	& 40643 & 1538	& 5951\\
  \hline
     $F_2$& 4771	& 21 & \textbf{4792}	& 3909 & 40700	& 44609 & 911	& 4820\\
  \hline
    $F_3$ & 10425	& 3174 & \textbf{12599}	& 10517 & 62440	& 72957 & 5346	& 15863\\
  \hline
     $F_4$& 16373	& 66 & \textbf{16439}	& 15403 & 54054	& 69457  & 1212	& 16615\\
  \hline
    $F_5$ & 11611	& 22 & \textbf{11633}	& 13039 & 22180	& 35219 & 4923	& 17962\\
  \hline
     $C_1$& 12794	& 2265 & \textbf{15059}	& 6057 & 216720	& 222777 & 12194	& 18251\\
  \hline
    $C_2$ & 44272	& 8870 & \textbf{53142}	&48204  & 1954800	& 2003004& 15062	& 63266\\
  \hline
  \end{tabular}
  \caption{Query Response Time over Partitioning Strategy of TripleGroup (in milliseconds)}
  \label{table:LocalJoinCostTG}
\end{table*}

\section{Comparisons with Memory Systems}
\label{sec:memory-exp}
The comparison results between our system with two typical distributed memory systems are given in Table \ref{table:ComparisonMemorySystems}.
\begin{table}
\scriptsize
\centering
  \begin{tabular}{|c|r|r|r|r|r|}
  \hline
   &\tabincell{c}{Trinity.RDF} & \tabincell{c}{TriAD}&	\tabincell{c}{PECA}& \tabincell{c}{PEDA}\\
  \hline
   $Q_1$ & 281	&	97& 53  & 51\\
  \hline
  $Q_2$ & 132	&	140& 73  & 73\\
  \hline
  $Q_3$ & 110	&	31& 27  & 28\\
  \hline
  $Q_4$ & 5	&	1& 1  & 1\\
  \hline
  $Q_5$ & 4	&	0.2& 0.18  & 0.18\\
  \hline
  $Q_6$ & 9	&	1.8& 1.2  & 1.7\\
  \hline
  $Q_7$ & 630 &	711& 123  & 122\\
  \hline
  \end{tabular}
  \caption{ Comparison with Memory-based Distributed RDF Systems in LUBM 1000}
  \label{table:ComparisonMemorySystems}
\end{table}

\section{Comparisons with Federated Systems}
\label{sec:feder-exp}

Comparisons with federated systems is done using the FedBench benchmark \cite{DBLP:FedBench}. The partitioning of this benchmark is pre-defined and fixed. FedBench includes 6 real cross domain RDF datasets and 4 real life science domain RDF datasets, and each dataset maps to one fragment. In this benchmark, 7 federated queries are defined for cross domain RDF datasets, and 7 federated queries are defined for life science RDF datasets. The characteristics of FedEx dataset are given in Table \ref{table:fedex_data}.

\begin{table}
\scriptsize
\centering
\vspace{-0.1in}
\caption{Datasets}
\begin{tabular}{|c|c|r|r|r|}
\hline
\multicolumn{2}{|c|}{Dataset}& \tabincell{c}{Number of\\Triples}& \tabincell{c}{RDF N3 File\\Size(KB)}  & \tabincell{c}{Number of\\Entities}\\
  \hline
  \hline
   \multirow{6}{*}{\tabincell{c}{FedBench\\(Cross Domain)}} & DBPedia subset &42,855,253 & 6,267,080 & 8,027,158\\
  \cline{2-5}
  & NY Times & 337,563 & 103,788 & 21,667\\
  \cline{2-5}
  & LinkedMDB& 6,147,997 & 1,745,790 & 665,441\\
  \cline{2-5}
  & Jamendo& 1,049,647 & 147,280 & 290,292\\
  \cline{2-5}
  & GeoNames&  107,950,085 & 12,112,090 & 7,479,715\\
  \cline{2-5}
  & SW Dog Food& 103,595 & 16,858 &10,459 \\
  \hline
  \multirow{4}{*}{\tabincell{c}{FedBench\\(Life Science)}} & DBPedia subset &42,855,253 & 6,267,080 & 8,027,158\\
  \cline{2-5}
  & KEGG  &  1,090,830 & 120,115 & 34,261\\
  \cline{2-5}
  & Drugbank & 766,920 & 146,906 & 19,694\\
  \cline{2-5}
  & ChEBI &  7,325,744 & 847,936 & 50,478\\
  \hline
  \end{tabular}
  \vspace{-0.1in}
\label{table:fedex_data}
\end{table}

FedBench also provides 7 benchmark queries for Cross Domain and 7 queries for Life Science. However, these queries are highly homogeneous. They are all snowflake queries (several stars linked by a path) and they all contain selective triple patterns \footnote{As defined before, a triple pattern is ``selective'' if it has no more than 100 matches in RDF graph $G$.}. To test the performance against different query structures, we introduce  five other queries for each domain. Two of them ($ECD_1$ and $ECD_2$ for Cross Domain and $ELS_1$ and $ELS_2$ for Life Science) are stars; one ($ECD_{3}$ for Cross Domain and $ELS_{3}$ for Life Science) is snowflake, and two ($ECD_{4}$ and $ECD_{5}$ for Cross Domain and $ELS_{4}$ and $ELS_{5}$ for Life Science) are complex queries. Furthermore, some queries ($ECD_1$, $ECD_{4}$, $ELS_1$ and $ELS_{4}$) contain selective triple patterns, while the others ($ECD_2$, $ECD_{3}$, $ECD_{5}$, $ELS_2$, $ELS_{3}$ and $ELS_{5}$) do not. We report the experimental results in the Figure \ref{fig:fedbenchcomparison}.

We compare our approach against two systems: FedX \cite{DBLP:FedX} and SPLENDID \cite{DBLP:SPLENDID}. Our approach has a superior performance in most cases, especially when the queries do not contain selective triple patterns. FedX and SPLENDID decompose a SPARQL query into a set of subqueries, and join the intermediate results of all subqueries together to find the final results. When the intermediate results of two subqueries join together, FedX employs the bound join and SPLENDID uses hash join. This means that these systems first use the intermediate results to rewrite the subquery with bound join variables. Then,  the rewritten queries are evaluated at the corresponding sites. Therefore, when the SPARQL queries do not contain any selective triple pattern, the size of intermediate results is so large that evaluation of bound joins takes significant time. In this case, our system outperforms them by orders of magnitude. For example, in $ECD_2$, $ECD_{3}$, $ECD_{5}$, $ELS_2$, $ELS_{3}$ and $ELS_{5}$, our approach is faster than FedX and SPLENDID by an order of magnitude.

\begin{figure*}%
   \subfigure[Cross Domain]{%
		\resizebox{0.45\columnwidth}{!}{
			\begin{tikzpicture}[font=\LARGE]
    \begin{semilogyaxis}[
                anchor={(0,100)},
               width = 20cm,
               height = 10cm,
    			major x tick style = transparent,
    			ybar,
    			ymax = 1000000,
    ymin=1,
    enlarge x limits=0.075,
   			ymajorgrids = true,
   			ylabel = {Query Response Time (in ms)},
    			xlabel = {Queries},
    			symbolic x coords = {$CD_1^*$,$CD_2^*$,$CD_3^*$,$CD_4^*$,$CD_5^*$,$CD_6^*$,$CD_7^*$,$ECD_1^*$,$ECD_2$,$ECD_{3}$,$ECD_{4}^*$,$ECD_{5}$},
    			scaled y ticks = true,
			bar width=5pt,
			legend pos= north west,
 legend cell align=left
   		]

    \addplot [fill=black!100] coordinates { ($CD_1^*$, 53)       ($CD_2^*$, 43)       ($CD_3^*$, 268)  ($CD_4^*$,271)     ($CD_5^*$,134)    ($CD_6^*$,663 )         ($CD_7^*$,385)    ($ECD_1^*$, 51)       ($ECD_2$, 7935)  ($ECD_{3}$,2452)     ($ECD_{4}^*$,3842619)   ($ECD_{5}$,229369)};

    \addplot [fill=black!25]  coordinates { ($CD_1^*$, 1437)      ($CD_2^*$, 1234)    ($CD_3^*$, 1911)  ($CD_4^*$,2633)   ($CD_5^*$,1743)   ($CD_6^*$,13439)    ($CD_7^*$,20475)  ($ECD_1^*$,1276)    ($ECD_2$,18542)    ($ECD_{3}$,129405)    ($ECD_{4}^*$,3899)          ($ECD_{5}$,6113373)};

    \addplot [fill=black!50] coordinates {  ($CD_1^*$, 301)         ($CD_2^*$, 12)      ($CD_3^*$,  71)  ($CD_4^*$,169)    ($CD_5^*$, 27)   ($CD_6^*$,632)
    ($CD_7^*$,64)   ($ECD_1^*$,2)    ($ECD_2$,93)    ($ECD_{3}$,295 )    ($ECD_{4}^*$,627 )              ($ECD_{5}$,13996)};

    \addplot [fill=white] coordinates {     ($CD_1^*$,303)    ($CD_2^*$, 15)            ($CD_3^*$,  123) ($CD_4^*$, 185)     ($CD_5^*$, 28)    ($CD_6^*$,  675)
    ($CD_7^*$,69)   ($ECD_1^*$,2)    ($ECD_2$,93)    ($ECD_{3}$,181)    ($ECD_{4}^*$,761)              ($ECD_{5}$,11377)};

        \legend{FedX,SPLENDID, PECA,PEDA}

    \end{semilogyaxis}
\end{tikzpicture}
		}
       \label{fig:cdcomparison}%
       }%
\subfigure[Life Science Domain]{%
		\resizebox{0.45\columnwidth}{!}{
			\begin{tikzpicture}[font=\LARGE]
    \begin{semilogyaxis}[
                anchor={(0,100)},
               width = 20cm,
               height = 10cm,
    			major x tick style = transparent,
    			ybar,
    			ymax = 5000000,
    ymin=1,
    enlarge x limits=0.075,
   			ymajorgrids = true,
   			ylabel = {Query Response Time (in ms)},
    			xlabel = {Queries},
    			symbolic x coords = {$LS_1^*$,$LS_2^*$,$LS_3^*$,$LS_4^*$,$LS_5^*$,$LS_6^*$,$LS_7^*$,$ELS_1^*$,$ELS_2$,$ELS_{3}$,$ELS_4^*$,$ELS_{5}$},
    			scaled y ticks = true,
			bar width=4.5pt,
			legend pos= north west,
 legend cell align=left
    ]

    \addplot [fill=black!100] coordinates { ($LS_1^*$, 145)       ($LS_2^*$, 98)      ($LS_3^*$, 1515)       ($LS_4^*$,41)      ($LS_5^*$,544)        ($LS_6^*$,26560 )  ($LS_7^*$,875)   ($ELS_1^*$, 34)       ($ELS_2$, 2230)      ($ELS_{3}$, 1796)       ($ELS_4^*$,218)      ($ELS_{5}$,603)};

    \addplot [fill=black!25]  coordinates { ($LS_1^*$, 1283)       ($LS_2^*$, 1487)      ($LS_3^*$, 37311)    ($LS_4^*$,1231)   ($LS_5^*$,15546)        ($LS_6^*$,2875)  ($LS_7^*$,16686)   ($ELS_1^*$, 1104)       ($ELS_2$, 6098)      ($ELS_{3}$, 36812)       ($ELS_4^*$,2126)      ($ELS_{5}$,9295)};

    \addplot [fill=black!50] coordinates {  ($LS_1^*$, 3)         ($LS_2^*$, 5)         ($LS_3^*$, 685)      ($LS_4^*$,47)       ($LS_5^*$, 174)      ($LS_6^*$,34)   ($LS_7^*$,165 )   ($ELS_1^*$, 2)       ($ELS_2$, 574)      ($ELS_{3}$, 311 )       ($ELS_4^*$,21)      ($ELS_{5}$,135)};

    \addplot [fill=white] coordinates {  ($LS_1^*$,3 )           ($LS_2^*$, 5 )         ($LS_3^*$, 580 )      ($LS_4^*$, 89)      ($LS_5^*$, 127 )       ($LS_6^*$,  33 )   ($LS_7^*$,156 )   ($ELS_1^*$, 2)       ($ELS_2$, 574)      ($ELS_{3}$, 109)       ($ELS_4^*$,24)      ($ELS_{5}$,100)};

        \legend{FedX,SPLENDID, PECA,PEDA}
    \end{semilogyaxis}

\end{tikzpicture}
		}
       \label{fig:lfdcomparison}%
       }%
 \caption{\small Online Performance Comparison with Federated RDF systems over FedBench($*$ means that the query involves some selective triple patterns.)}%
 \label{fig:fedbenchcomparison}
\end{figure*}

\end{document}